\documentclass[runningheads]{llncs}
\pdfoutput=1
\usepackage{amsmath,amssymb}

\usepackage{ifthen}
\newif\ifarxiv
\arxivtrue

\hyphenation{
tab-leau
tab-leaux
}

\newcommand{\qedhere}{\qed}

\newcommand{\au}{\mathsf{au}}       
\newcommand{\bound}{\mathit{bound}} 
\newcommand{\bouyerstrong}{\rightsquigarrow_s}
\newcommand{\bouyerweak}{\rightsquigarrow_w}
\newcommand{\floor}[1]{\lfloor #1 \rfloor}

\newcommand{\initval}{\mathbf{0}}    
\newcommand{\clockequiv}{\cong}
\newcommand{\clocks}{\mathbb{C}}
\newcommand{\clocksF}{\clocks_F}
\newcommand{\clocksA}{\clocks_A}
\newcommand{\genTA}{(L, L_0, \CX, I, E, \Lab)}  
\newcommand{\genTTS}{(Q, \ttrans{}, \Lab, Q_0)}    
\newcommand{\initsTA}{Q_{0,\TA}}    
\newcommand{\initsTAM}{Q_{0,\TA_{\mathcal{M}}}}
\newcommand{\LabTA}{\Lab_{\TA}}     
\newcommand{\LabTAM}{\Lab_{\TA_\mathcal{M}}} 
\newcommand{\Lc}{L_c}
\newcommand{\Lmunu}{L_{\mu,\nu}}
\newcommand{\Lnu}{L_\nu}
\newcommand{\Lrelmunu}{L^{\textit{rel}}_{\mu,\nu}}
\newcommand{\monus}{\dotdiv}
\newcommand{\musem}[3]{\semop{#1}_{#2,#3}}      
\newcommand{\musemTA}[2]{\musem{#1}{\TA}{#2}}   
\newcommand{\musemTAtheta}[1]{\musemTA{#1}{\theta}}  
\newcommand{\musemTAnotheta}[1]{\semop{#1}_{\TA}}
\newcommand{\musemrel}[4]{\semop{#1}^{#2}_{#3,#4}}   
\newcommand{\musemrelphi}[3]{\musemrel{#1}{\phi}{#2}{#3}} 
\newcommand{\musemrelphiTA}[2]{\musemrelphi{#1}{\TA}{#2}} 
\newcommand{\musemrelphiTAtheta}[1]{\musemrelphiTA{#1}{\theta}} 

\newcommand{\musemsym}[3]{\musemrel{#1}{s}{#2}{#3}}      
\newcommand{\musemsymTA}[2]{\musemsym{#1}{\TA}{#2}}   
\newcommand{\musemsymTAtheta}[1]{\musemsymTA{#1}{\theta}}  

\newcommand{\Phirelmunu}{\Phi^{\textit{rel}}_{\mu,\nu}}
\newcommand{\prefix}[2]{#1_{#2}}            
\newcommand{\QTA}{Q_{\TA}}                  
\newcommand{\QTAM}{Q_{\TA_\mathcal{M}}}     
\newcommand{\suffix}[2]{#1_{#2}}            
\newcommand{\TCTL}{\mathit{TCTL}}
\newcommand{\tsucc}{\mathit{succ}} 
\newcommand{\Tmu}{T_\mu}
\newcommand{\TStop}{\mathsf{TS}}
\newcommand{\ttransTA}[1]{\ttrans{#1}_{\TA}}    
\newcommand{\ttransTAM}[1]{\ttrans{#1}_{\TA_{\mathcal{M}}}} 
\newcommand{\ttsTA}{(\QTA,\ttransTA{},\LabTA, \initsTA)}  
\newcommand{\ttsTAM}{(\QTAM,\ttransTAM{},\LabTAM, \initsTAM)} 


\providecommand{\dotdiv}{
  \mathbin{
    \vphantom{+}
    \text{
      \mathsurround=0pt 
      \ooalign{
        \noalign{\kern-.35ex}
        \hidewidth$\smash{\cdot}$\hidewidth\cr 
        \noalign{\kern.35ex}
        $-$\cr 
      }%
    }%
  }%
}
\usepackage{caption}    
\usepackage[inline,shortlabels]{enumitem}
\usepackage[labelformat=simple]{subcaption} 
\usepackage{stmaryrd} 
\usepackage{tikz}
\usetikzlibrary{automata} 



\newcommand{\barsep}{\ensuremath{\ | \ }}

\newcommand{\lgtrue}{\ensuremath{\mathsf{tt}}}
\newcommand{\lgfalse}{\ensuremath{\mathsf{ff}}}


















\graphicspath{{Figures/}}





\usepackage{booktabs} 

\usepackage{centernot}
\usepackage[inline]{enumitem}
\usepackage[a]{esvect} 
\usepackage{mathtools}
\usepackage{graphicx}
\makeatletter
\providecommand{\bigsqcap}{%
  \mathop{%
    \mathpalette\@updown\bigsqcup
  }%
}
\newcommand*{\@updown}[2]{%
  \rotatebox[origin=c]{180}{$\m@th#1#2$}%
}
\makeatother
\usepackage{xspace}


\usepackage[disable,textwidth=3.2cm]{todonotes}
\newcommand{\jk}[2][]{\todo[inline,#1]{JK: #2}}
\newcommand{\rc}[2][]{\todo[inline,color=green!40,#1]{RC: #2}}


\newcommand{\dia}[1]{\langle #1 \rangle}		

\newcommand{\node}[1]{\mathbf{#1}} 


\newcommand{\emptyL}{\varepsilon}               

\newcommand{\indices}{\mathbb{I}}               
\newcommand{\ints}{\mathbb{Z}}                  
\newcommand{\lab}{\mathit{lab}}                 

\newcommand{\nats}{\mathbb{N}}

\newcommand{\preimg}[2]{#1^{-1}(#2)}            
\newcommand{\semop}[1]{||\, #1 \,||}            



\newcommand{\src}{\mathit{src}}     

\newcommand{\T}{\mathcal{T}}
\newcommand{\TDIV}{\mathsf{TDIV}} 
\newcommand{\tgt}{\mathit{tgt}}        

\newcommand{\true}{\texttt{t\!t}}

\newcommand{\Var}{\textnormal{Var}\xspace}







\usepackage{centernot}

\newcommand{\clockvals}{\mathcal{V}} 
\newcommand{\clockval}{v} 


\newcommand{\tcA}{\operatorname{\mathsf{A}}}
\newcommand{\tcE}{\operatorname{\mathsf{E}}}
\newcommand{\tcU}{\mathbin{\mathsf{U}}}
\newcommand{\tcR}{\mathbin{\mathsf{R}}}
\newcommand{\tcG}{\operatorname{\mathsf{G}}}
\newcommand{\tcF}{\operatorname{\mathsf{F}}}
\newcommand{\tcfreeze}[2]{#1 . #2}
\newcommand{\tcsem}[2]{\semop{#1}_{#2}}
\newcommand{\tcsemTA}[1]{\tcsem{#1}{\TA}}
\newcommand{\embed}{\mathsf{mu}} 

\newcommand{\AP}{\mathcal{A}}
\newcommand{\CX}{\mathit{CX}}
\newcommand{\delays}{\nnegreals}

\newcommand{\executions}[2]{\Pi_{#1}{(#2)}} 
\newcommand{\Lab}{\mathcal{L}}
\newcommand{\nnegreals}{\reals_{\geq 0}}
\newcommand{\reals}{\mathbb{R}}

\newcommand{\runs}[2]{R_{#1}(#2)} 
\newcommand{\TA}{\mathit{TA}}
\newcommand{\TAM}{\TA_{\mathcal{M}}}

\newcommand{\TS}[1]{\mathcal{T}_{#1}}

\newcommand{\ttrans}[1]{\xrightarrow{#1}}
\newcommand{\untils}[3]{U_{#1}(#2,#3)}   

\begin{document}

\title{Expressiveness Results for Timed Modal Mu-Calculi\thanks{Research of the first author supported by US Office of Naval Research Grant N000141712622.}}


\author{Rance Cleaveland\inst{1}\orcidID{0000-0002-4952-5380} \and
Jeroen J. A. Keiren\inst{2}\orcidID{0000-0002-5772-9527} \and
Peter Fontana\inst{1}}

\authorrunning{R. Cleaveland, J.\,J.\,A. Keiren and Peter Fontana}

\institute{Department of Computer Science, University of Maryland, College Park, MD, USA \\
\email{\{rance,pfontana\}@cs.umd.edu}
\and
Department of Mathematics and Computer Science, Eindhoven University of Technology, Eindhoven, The Netherlands\\
\email{j.j.a.keiren@tue.nl}
}

\maketitle

\begin{abstract}
This paper establishes relative expressiveness results for several modal mu-calculi interpreted over timed automata. These mu-calculi combine modalities for expressing passage of (real) time with a general framework for defining formulas recursively; several variants have been proposed in the literature. We show that one logic, which we call $L^{rel}_{\nu,\mu}$, is strictly more expressive than the other mu-calculi considered.  It is also more expressive than the temporal logic TCTL, while the other mu-calculi are incomparable with TCTL in the setting of general timed automata.
\keywords{Temporal logics, timed automata, real-time systems}
\end{abstract}

\section{Introduction}

Researchers have extensively studied the modal mu-calculus~\cite{kozen-results-on-1983} for the verification of transition systems because of the logic's expressiveness and support for model checking of a variety of temporal logics. The embeddings of Computation Tree Logic (CTL)~\cite{clarke-automatic-verification-1986}, Linear Temporal Logic (LTL)~\cite{pnueli-1977}, and CTL*~\cite{emerson-halpern-1986} into the modal mu-calculus~\cite{bhat-efficient-model-1996} illustrate its expressive power.  In addition, the modal mu-calculus can encode semantic equivalences such as bisimulation; so-called \emph{characteristic formulas} can be computed for states in finite-state labeled transition systems that are satisfied only by states that are semantically equivalent to the given state~\cite{graf1986modal,steffen1994characteristic}. Kleene's Fixpoint Theorem ~\cite{emerson-efficient-model-1986} also provides a way to efficiently compute fixpoints when the underlying semantic models are finite. Using this theorem, researchers have developed model-checking algorithms for various fragments of the modal mu-calculus~\cite{andersen-model-checking-1994,cleaveland-a-linear-time-1993,mader-verification-of-1997,mateescu-efficient-on-the-fly-2003}.

The situation of similarly foundational logics for timed systems is less resolved.  There are timed extensions of temporal logics, with Timed Computation Tree Logic (TCTL)~\cite{ACD1993}, a timed extension of CTL, being especially prominent for specifying systems modeled as timed automata. The state of the art for analogous modal mu-calculi is less well developed, with a variety of different extensions to the modal mu-calculus being proposed for different purposes~\cite{ACD1993,bouyer-timed-modal-2011,laroussinie-from-timed-1995,sokolsky-local-model-1995}, including model checking.  Despite results shows that model checking for the logic in~\cite{sokolsky-local-model-1995} is EXPTIME-complete~\cite{aceto-is-your-2002},
some of these logics have shown the potential to be model checked in practice~\cite{fontana-the-power-2014}, and there are tools that can efficiently model-check fragments of them, including UPPAAL~\cite{behrmann-a-tutorial-2004}, RED~\cite{wang-efficient-verification-2004}, CMC~\cite{laroussinie-cmc:-a-1998}, and the tools in~\cite{fontana-data-structure-2011,seshia-unbounded-fully-2003,zhang-fast-generic-2005}.



Despite these algorithms and tools for timed modal mu-calculus model checking, few expressivity results have been established for the different variants of the underlying logics as well as other timed temporal logics. The purpose of this paper is fill this gap by characterizing the relative expressiveness of these logics \emph{vis \`a vis} one another as well as TCTL.  Our key contributions include the following.
\begin{itemize}
\item
The definition of a reference timed modal mu-calculus, which we call $\Lrelmunu$.
\item
A complete characterization of the relative expressiveness of the timed modal mu-calculi in~\cite{ACD1993,bouyer-timed-modal-2011,laroussinie-from-timed-1995,sokolsky-local-model-1995} and $\Lrelmunu$ with respect to general timed automata, with the latter shown to be strictly more expressive than any of the former for the model of timed automata.
\item
A full characterization of the expressiveness of $\Lrelmunu$, $\Tmu$~\cite{henzinger-symbolic-model-1994} and $\Lmunu$~\cite{sokolsky-local-model-1995} with respect to TCTL, again in the setting of general timed automata, with $\Lrelmunu$ being strictly more expressive than TCTL and $\Tmu$ and $\Lmunu$ being incomparable with TCTL.
\end{itemize}

The rest of the paper is organized as follows.  The next section defines the timed automaton model and its semantics via timed transition systems.  Section~\ref{sec:timed-modal-mu-calculi} then presents the variants of the mu-calculi considered in this paper and subsequent section fully characterizes the relative expressiveness of them.  Section~\ref{sec:timed-ctl} then defines the version of TCTL considered in this paper and the section thereafter considers its relative expressiveness with respect to $\Lrelmunu$, $\Tmu$ and $\Lmunu$.  The final section gives our conclusions and directions for future research.

\paragraph{Related Work.}

Versions of several of the results reported in this paper may be found in~\cite{fontana-2014}.
With respect to mu-calculus expressiveness,~\cite{bouyer-timed-modal-2011} defines a logic, $\Lc$, that is similar to the logic $\Lnu$~\cite{laroussinie-from-timed-1995}, and proves that it is strictly more expressive than $\Lnu$.  Both logics include the greatest fixpoint operator and labeled modalities, but they differ in the modalities used to reason about the passage of time.  Neither includes least fixpoints or general negation, which could be used to encode least fixpoints.

The timed modal mu-calculus $\Tmu$ was introduced in~\cite{henzinger-symbolic-model-1994} and its expressiveness \emph{vis \`a vis} TCTL was studied there and in~\cite{penczek-advances-in-2006}.  Depending on the classes of timed automata considered, $\Tmu$ was shown to be either strictly more expressive than, or incomparable to, TCTL.

Finally, the expressiveness of different time constructs have been considered in the setting of timed extensions to linear-time temporal logics.  In particular,~\cite{bouyer-on-the-2010} proved that TPTL is strictly more expressive than MTL; both are timed extensions of LTL, but with the former using so-called \emph{freeze quantification} and the latter using intervals to place time bounds on temporal operators.

\section{Timed Automata and Labeled Transition Systems}
\label{sec:timed-automata-and-labeled-transition-systems}

\noindent
This section reviews timed automata and transition systems.  Below, we use $\nats = \{0, 1, \ldots\}$ for the set of natural numbers and $\nnegreals = \{\delta \in \reals \mid \delta \geq 0\}$ for the set of non-negative reals.

\subsection{Timed Automata}
\emph{Timed automata} are used to model systems whose behavior depends on the passage of continuous time.  To define them, we first introduce the notion of \emph{timed sort}.

\begin{definition}[Time-safe sort, timed sort]\label{def:sort}
\mbox{}
\begin{enumerate}
\item
    A \emph{sort} is a set $\Sigma$, and sort $\Sigma$ is \emph{time-safe} iff $\Sigma \cap \delays = \emptyset$.
\item
    The \emph{timed sort}, $\Delta(\Sigma)$, associated with time-safe sort $\Sigma$ is defined as $\Delta(\Sigma) = \Sigma \cup \delays$.
\end{enumerate}
\end{definition}

\noindent
A sort $\Sigma$ is used to record the set of actions a system may perform.  Sort $\Sigma$ is time-safe iff it is syntactically distinct from $\delays$.  The timed sort $\Delta(\Sigma)$ then enriches the time-safe sort $\Sigma$ with \emph{time elapses} $\delta \in \delays$ that a system may undergo during its execution.

Timed automata also extend traditional finite-state machines with a notion of \emph{clock}.

\begin{definition}[Clock structure, clock constraints]\label{def:cxcons}
\mbox{}
\begin{enumerate}
    \item
    Triple $(\clocks, \clocksA, \clocksF)$ is a \emph{clock structure} iff $\clocksA$ and $\clocksF$ are countably infinite and disjoint and $\clocks = \clocksA \cup \clocksF$. Elements of $\clocks$ are called \emph{clocks}.  If clock $x \in \clocksA$ then $x$ is an \emph{automaton clock}, while if $x \in \clocksF$ it is a \emph{freeze clock}.
    \item
    Let $(\clocks, \clocksA, \clocksF)$ be a clock structure.  Then the \emph{clock constraints} over this structure are defined by the following grammar, where $x, x' \in \clocks$, $c \in \nats$, and ${\bowtie} \in \{<, \leq, >, \geq\}$.
\begin{equation*}
\phi ::= x \bowtie c
\;\mid\; x - x' \,\mathrel{\bowtie}\, c
\;\mid\; \phi \land \phi
\end{equation*}
\end{enumerate}%
We write $\Phi(\clocks)$ for the set of clock constraints over $(\clocks,\clocksA,\clocksF)$ and $\Phi(\mathcal{C})$ for the subset of $\Phi(\clocks)$ whose constraints only mention clocks in $\mathcal{C} \subseteq \clocks$. We use the following abbreviations in what follows, where $x \in \clocks$: \lgtrue\ (``true'') for $x \geq 0$, \lgfalse\ (``false'') for $x < 0$, and $x = c$ / $x - x' = c$ for constraints $(x \leq c) \land (x \geq c)$ / $(x - x' \leq c) \land (x - x' \geq c)$. We call clock constraints of the form $x \bowtie c$ or $x - x' \bowtie c$ \emph{atomic}.  We use $\Phi_a(\clocks) \subsetneq \Phi(\clocks)$ to denote the set of all atomic clock constraints over $(\clocks,\clocksA,\clocksF)$, and $\Phi_a(\mathcal{C}) \subseteq \Phi_a(\clocks)$, where $\mathcal{C} \subseteq \clocks$, for the atomic clock constraints only involving clocks in $\mathcal{C}$.  If $\phi \in \Phi(\clocks)$ then we write $\bound(\phi) \in \nats$ for the \emph{bound}, or largest constant, and $cs(\phi) \subsetneq \clocks$ for the set of clocks, appearing in $\phi$ (see Figure~\ref{fig:bound-cs-defs} for the formal definitions of these).
\end{definition}

\begin{figure}
\centering
\begin{align*}
\bound(\phi)
&=
\begin{cases}
c   & \text{if $\phi = x \bowtie c$ or $\phi = x - x' \bowtie c$} \\
\max \{\bound(\phi_1), \bound(\phi_2)\}  & \text{if $\phi = \phi_1 \land \phi_2$.}
\end{cases}
\\
cs(\phi)
&=
\begin{cases}
\{ x \}     & \text{if $\phi = x \bowtie c$}\\
\{ x,x' \}  & \text{if $\phi = x - x' \bowtie c$} \\
cs(\phi_1) \cup cs(\phi_2)  & \text{if $\phi = \phi_1 \land \phi_2$}
\end{cases}
\end{align*}
\caption{Formal definitions of bound and clock set for $\phi$.}
    \label{fig:bound-cs-defs}
\end{figure}

A clock structure specifies a countably infinite set of clocks that are used to record the passage of time.  The set of clocks is in turn partitioned into automaton clocks, which may be used in timed automata, and freeze clocks, which are reserved for use in logical formulas.  Clock constraints represent properties about clock values that are used in the setting of timed automata.

In what follows, we fix the clock structure $(\clocks,\clocksA,\clocksF)$.
The final ingredient in the definition of timed automata are \emph{clock-safe atomic propositions}, which are required to be syntactically distinct from the set of clock constraints.

\begin{definition}[Clock safety]
Set $\AP$ of atomic propositions is \emph{clock-safe} iff $\AP \cap \Phi(\clocks) = \emptyset$.
If $\AP$ is a clock-safe set of atomic propositions and $\mathcal{C} \subseteq \clocks$ then we use $A_{\mathcal{C}} = \AP \cup \Phi_a(\mathcal{C})$ for the set $\AP$ enriched with the set of atomic clock constraints over clock set $\mathcal{C}$.
\end{definition}

Timed automata are now defined as follows.

\begin{definition}[Timed automaton]
\label{def:timedaut}
Let $\Sigma$ be a time-safe sort and $\AP$ be a clock-safe set of atomic propositions.  A \emph{timed automaton} over $\Sigma$ and $\AP$
is a tuple $(L, L_0, \CX, I, E, \Lab)$ where:
\begin{itemize}
\item $\emptyset \subsetneq L$ is the non-empty finite set of \emph{locations};
\item $\emptyset \subsetneq L_0 \subseteq L$ is the non-empty set of \emph{initial locations};
\item $\emptyset \subsetneq \CX \subseteq \clocksA$ is a nonempty finite set of \emph{clocks};
\item $I \in L \to \Phi(\CX)$ maps $l \in L$ to its \emph{location invariant} $I(l) \in \Phi(\CX)$;
\item $E \subseteq L \times \Sigma \times \Phi(\CX) \times 2^{\CX} \times L$ is the set of \emph{edges}; and
\item $\Lab \in L \to 2^{\AP}$ is the \emph{labeling function}.
\end{itemize}
In edge $e = (l, a, \phi, \mathcal{C}, l')$ $l$ and $l'$ are referred to as the \emph{source} and \emph{target} locations, respectively, while $a$ is the \emph{action}, $\phi \in \Phi(\CX)$ is the \emph{guard}, and $\mathcal{C} \subseteq \CX$ is the \emph{reset set} (clocks reset to $0$ when edge $e$ is executed).  If $l \in L$ then $\Lab(l)$ indicates which atomic propositions are true of $l$.  The \emph{bound}, $\bound(\TA)$, of $\TA$ is the largest constant appearing in the definition of $\TA$.  Formally:
$
\bound(\TA) =
\max \left(
        \{\bound(\phi) \mid (\exists l \in L:: \phi = I(l)) \lor (\exists e \in E :: e = (\ldots,\phi,\ldots))\}
     \right).
$
\end{definition}

\subsection{Semantics of Timed Automata}

Semantically, timed automata are interpreted as \emph{labeled transition systems}.

\begin{definition}[Labeled transition system (LTS)]
Let $\Sigma$ be a sort and $\AP$ be a set of atomic propositions.  Then a \emph{labeled transition system} over $\Sigma$ and $\AP$ is a tuple $(Q, \ttrans{}, \Lab, Q_0)$, where $Q$ is the set of \emph{states}, ${\ttrans{}} \subseteq Q \times \Sigma \times Q$ is the \emph{transition relation}, $\Lab \in Q \to 2^{\AP}$ is the \emph{labeling}, and $Q_0 \subseteq Q$ is the set of \emph{initial states}.
\end{definition}

\noindent
An LTS encodes the operational behavior of a system, with $Q$ representing the possible states the system can be in and the transition relation denoting which actions are possible in a state and what the possible target states are after the action is performed.  The labeling assigns to each state the atomic propositions that are true in that state; $Q_0$ gives the possible start states of the system.  In what follows, if $a \in \Sigma$ then we write $q \ttrans{a} q'$ when $(q, a, q') \in {\ttrans{}}$, $q \ttrans{a}$ if $q \ttrans{a} q'$ for some $q'$, and $q \centernot{\ttrans{a}}$ if $q \ttrans{a}$ does not hold.
We occasionally abuse notation and write $a(q)$ for $\{q' \in Q \mid q \ttrans{a} q'\}$, the set of states reachable from $q$ via action $a$.
If $n \in \nats$ we use $q \,(\ttrans{a})^n\, q'$ when there exists $q = q_0, q_1, \ldots, q_n = q'$ such that $q_0 \ttrans{a} q_1 \cdots \ttrans{a} q_n$.  Note that $q \,(\ttrans{a})^0\, q$ for all $q$ and $a$.
When $K \subseteq \Sigma$ we also write $q \ttrans{K} q'$, etc. when $q \ttrans{a} q'$, etc., for some $a \in K$.

To give an LTS semantics for timed automata we first define what the states in such an LTS must be.  Each such state will include a \emph{clock valuation}, which is defined as follows.

\begin{definition}[Clock valuation]
A \emph{valuation} over clock structure $(\clocks, \clocksA, \clocksF)$ is a function $\clockval \in \clocks \to \nnegreals$.  We use $\clockvals_{\clocks}$ to denote the set of all valuations over $(\clocks,\clocksA,\clocksF)$, and refer to such valuations as \emph{clock valuations} over $\clocks$.
\end{definition}
Intuitively, $\clockval(x)$ records the current the time of clock $x$ in clock valuation $\clockval$. If the clock structure $(\clocks,\clocksA,\clocksF)$ is clear from context we write $\clockvals$ instead of $\clockvals_\clocks$,
The following notions are standard.
\begin{enumerate}
\item
    Let $\clockval \in \clockvals_\clocks$, $x \in \clocks$, and $\delta \in \nnegreals$.
    \begin{itemize}
    \item
        $\clockval[x := \delta] \in \clockvals$ is the clock valuation that is the same as $\clockval$, except that clock $x$ is mapped to $\delta$:  $(\clockval[x := \delta])(x') = \delta$ if $x' = x$, and $(\clockval[x := \delta])(x') = \clockval(x')$ if $x' \neq x$.
    \item
        $\clockval+\delta \in \clockvals$ is the clock valuation $\delta$ time units in the future from $\clockval$:  $(\clockval + \delta)(x) = \clockval(x) + \delta$ for all $x \in \clocks$.
    \end{itemize}
\item
    $\initval_\clocks \in \clockvals_\clocks$ assigns $0$ to every clock in $\clocks$:  $\initval_\clocks(x) = 0$ for all $x \in \clocks$.
\item
    If $\mathcal{C} \subseteq \clocks$ and $\clockval \in \clockvals_\clocks$ then $\clockval[\mathcal{C} := 0]$ is $\clockval$ with every clock in $\mathcal{C}$ reset to $0$:
    $(\clockval[\mathcal{C} :=0])(x) = 0$ if $x \in \mathcal{C}$, and $(\clockval[\mathcal{C} :=0])(x) = \clockval(x)$ if $x \not\in \mathcal{C}$.
\item
    If $\phi \in \Phi(\clocks)$ is a clock constraint and $\clockval \in \clockvals_\clocks$ then $\clockval \models \phi$ holds iff the values assigned to the clocks by $\clockval$ satisfies $\phi$ in the usual sense.  Formally, let ${\bowtie} \in \{<, \leq, >, \geq\}$.  Then $\models$ is given inductively as follows.
\[
\begin{array}{rcll}
v   & \models
    & x \bowtie c
    & \text{iff $v(x) \bowtie c$}
\\
v   & \models
    & x - x' \bowtie c
    & \text{iff $v(x) - v(x') \bowtie c$}
\\
v   & \models
    & \phi_1 \land \phi_2
    & \text{iff $v \models \phi_1$ and $v \models \phi_2$}
\end{array}
\]
\end{enumerate}

Given clock structure $(\clocks,\clocksA,\clocksF)$,
the LTS semantics of timed automaton $\TA$ is the following.

\begin{definition}[Timed automaton semantics]\label{d:ta-sem}
Let $\TA = \genTA$ be a timed automaton over time-safe $\Sigma$ and clock-safe $\AP$.
Then LTS $\TS{\TA} = \ttsTA$
over $\Delta(\Sigma)$ and $\AP_{\clocks} = \AP \cup \Phi_a(\clocks)$ is defined as follows.
\begin{itemize}
\item
    $Q_{\TA} = \{ (l, \clockval) \in L \times \clockvals \mid \clockval \models I(l)\}$ is the set of states.
\item
    ${\ttrans{}_{\TA}} \subseteq Q_{\TA} \times \Delta(\Sigma) \times Q_{\TA}$ is given as follows.
    \begin{itemize}
    \item
        \emph{Time elapse:}
        for $\delta \in \delays$, $(l, \clockval) \ttrans{\delta} (l, \clockval + \delta)$ iff
	for all $\delta'$ with $0 \leq \delta' \leq \delta$,
            $\clockval + \delta' \models I(l)$.
    \item
        \emph{Action:}
        for $a \in \Sigma$, $(l, \clockval) \ttrans{a} (l',\clockval [\mathcal{C} := 0])$ iff
        there is $(l,a,\phi,\mathcal{C},l') \in E$ such that
        $\clockval \models \phi$ and $\clockval [\mathcal{C} := 0] \models I(l')$.
    \end{itemize}
\item
    $\Lab_{\TA} \in Q_{\TA} \to 2^{\AP_{\clocks}}$ is
    $
    \Lab_{\TA}(l, \clockval) = \Lab(l) \cup \{\phi \in \Phi_a(\clocks) \mid \clockval \models \phi\}.
    $
\item $Q_{(0,TA)} = \{(l_0,\initval) \mid l_0 \in L_0 \land \initval \models I(l_0)\}$ is the set of initial states.
\end{itemize}
\end{definition}

Note that in $\TS{\TA}$ the set of atomic propositions includes atomic clock constraints as well as elements in $\AP$.
We adapt notions on clock valuations to states $q = (l, \clockval)$ in $\QTA$ in the obvious fashion:  $q(x) = \clockval(x)$, $q + \delta = (l, \clockval + \delta)$, $q[x := \delta] = (l, \clockval[x := \delta])$, etc.
Transition relation $\ttrans{}_{\TA}$ may also be seen to have the following properties.
\begin{description}
\item[Time-reflexivity.]
    For all $q \in Q_{\TA}$, $q \ttrans{0} q$.
\item[Time-determinacy.]
    For all $q, q', q'' \in Q_{\TA}$ and $\delta \in \delays$,
    if $q \ttrans{\delta} q'$
    and $q \ttrans{\delta} q''$ then $q' = q''$.
\item[Time-additivity.]
    For all $q, q', q'' \in Q_{\TA}$ and $\delta, \delta' \in \delays$, if $q \ttrans{\delta} q' \ttrans{\delta'} q''$ then $q \ttrans{\delta + \delta'} q''$.
\item[Time-continuity.]
    For all $q,q' \in Q_{\TA}$ and $\delta, \delta' \in \delays$ such that $q \ttrans{\delta+\delta'} q'$, there exists $q'' \in Q_{\TA}$ such that $q \ttrans{\delta} q'' \ttrans{\delta'} q'$.
\end{description}

\noindent
We call LTSs with these properties \emph{timed transition systems}.

\begin{definition}[Timed transition system (TTS)]
Let $\Sigma$ be a time-safe sort and $\AP$ be a set of atomic propositions.  Then $(Q,\ttrans{},\Lab,Q_0)$ is a \emph{timed transition system} over $\Sigma$ and $\AP$
iff $(Q, \ttrans{}, \Lab, Q_0)$ is an LTS over $\Delta(\Sigma)$ and $\AP$ and ${\ttrans{}}$ is time-reflexive, -determinate, -additive, and -continuous.
\end{definition}

Note that in TTS $(Q,\ttrans{},\Lab,Q_0)$ $\delta(q) \subseteq Q$
is either empty (because $q \centernot{\ttrans{\delta}}$) or a singleton $\{ q' \}$ (because $q \ttrans{\delta} q'$ and $\ttrans{\delta}$ is time-determinate) for any $\delta \in \delays$.\footnote{Recall that since $\genTTS$ is an LTS over $\Delta(\Sigma)$ and $\delta \in \Delta(\Sigma)$, $\delta(q) = \{ q' \in Q \mid q \ttrans{\delta} q' \}$.}
Also, time-additivity and -continuity guarantee that for any $\delta \in \delays$, if $\delta(q) = \emptyset$ then $\delta'(q) = \emptyset$ for all $\delta' > \delta$, and if $\delta(q) \neq \emptyset$ then $\delta'(q) \neq \emptyset$ for all $\delta' < \delta$.
Finally, in what follows we often want to assert that $q \ttrans{\delta} q'$ and $q'$ is in a set $S$.  While this can be written as $\delta(q) \cap S \neq \emptyset$, for reasons of conciseness we often abuse notation and write $\delta(q) \in S$ instead.
Finally, we often need to take a $\delta'$-duration piece of a time-elapse transition $q \ttrans{\delta} q'$, where $\delta' \leq \delta$.  We define $cut(q \ttrans{\delta} q', \delta') = q \ttrans{\delta'} \delta'(q)$ for this purpose.  We also use $rem(q \ttrans{\delta} q', \delta') = \delta'(q) \ttransTA{\delta-\delta'} q'$ for the remainder of the transition $q \ttransTA{\delta} q'$ after $cut(q \ttransTA{\delta} q', \delta')$.

\subsection{Region Automata}\label{sec:region-automata}

Analysis routines for timed automata, such as model checkers, often work with \emph{region automata}~\cite{ACD1993,alur-a-theory-1994}, which are discrete abstractions of timed automata that nevertheless preserve key behavioral properties.  We use region automata later in the paper and so define them here, albeit in somewhat more abstract way than in~\cite{ACD1993,alur-a-theory-1994}.
We begin by introducing the notions of \emph{logical equivalence} and \emph{bounded logical equivalence} on clock valuations.  In the remainder of this section, fix clock structure $(\clocks, \clocksA, \clocksF)$.

\begin{definition}[Logical equivalences for clock valuations]
Let $\mathcal{C} \subseteq \clocks$ be a set of clocks.
\begin{enumerate}
\item
Clock valuations $v, v' \in \clockvals_\clocks$ are \emph{logically equivalent} with respect to $\mathcal{C}$, notation $v \clockequiv_{\mathcal{C}} v'$, iff for all atomic clock constraints $\phi \in \Phi_a(\mathcal{C})$, $v \models \phi$ exactly when $v' \models \phi$.
\item
Fix $d \in \nats$, and define $\Phi_{a}^d(\mathcal{C}) = \{\phi \in \Phi_a(\mathcal{C}) \mid \bound(\phi) \leq d\}$.  Then $v,v' \in \clockvals_\clocks$ are \emph{bounded logically equivalent with respect to $\mathcal{C}$ and $d$}, notation $v \clockequiv_{\mathcal{C},d} v'$, iff for all $\phi \in \Phi_{a}^d(\mathcal{C})$, $v \models \phi$ exactly when $v' \models \phi$.
\end{enumerate}
\end{definition}

Thus $v \clockequiv_{\mathcal{C}} v'$ iff $v$ and $v'$ satisfy exactly the same atomic clock constraints in $\Phi_a(\mathcal{C})$ (and hence the same non-atomic clock constraints as well).
Also, $v \clockequiv_{\mathcal{C},d} v'$ iff $v$ and $v'$ satisfy the same atomic clock constraints in $\Phi_a(\mathcal{C})$ whose constant is no larger than $d$.
It is straightforward to see that $v \clockequiv_{\mathcal{C}} v'$ iff for all $d \in \nats, v \clockequiv_{\mathcal{C},d} v'$.
The next lemma establishes that $\clockequiv_{\mathcal{C},d}$ is preserved by clock resetting and time elapses.

\begin{lemma}[Preservation of $\clockequiv_{\mathcal{C},d}$]
\label{lem:preservation-of-eq}
Suppose that $v, v' \in \clockvals_\clocks$, $\mathcal{C} \subseteq \clocks$, and $d \in \nats$
are such that $v \clockequiv_{\mathcal{C},d} v'$.  Then the following hold.
\begin{enumerate}
\item\label{sublem:preservation-of-eq:reset}
For all $\mathcal{C}' \subseteq \clocks, v[\mathcal{C}' := 0] \clockequiv_{\mathcal{C},d} v'[\mathcal{C}' := 0]$.
\item\label{sublem:preservation-of-eq:time-transition}
For all $\delta \in \delays$ there exists $\delta' \in \delays$ such that $v + \delta \clockequiv_{\mathcal{C},d} v' + \delta'$.
\end{enumerate}
\end{lemma}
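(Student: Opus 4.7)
The plan is to handle the two items separately, both proofs exploiting that $v \clockequiv_{\mathcal{C},d} v'$ means agreement on the \emph{finite} set $\Phi_a^d(\mathcal{C})$, each element of which mentions at most two clocks from $\mathcal{C}$.

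For item~\ref{sublem:preservation-of-eq:reset}, I would fix an atomic constraint $\phi \in \Phi_a^d(\mathcal{C})$ and proceed by case analysis on its form and on which clocks it mentions lie in the reset set $\mathcal{C}'$. When $\phi = x \bowtie c$: if $x \in \mathcal{C}'$, the reset sends $x$ to $0$ in both valuations, so satisfaction of $\phi$ is identical; if $x \notin \mathcal{C}'$, the reset leaves $v(x)$ and $v'(x)$ unchanged, and satisfaction of $\phi$ agrees by the hypothesis $v \clockequiv_{\mathcal{C},d} v'$. The diagonal case $\phi = x - x' \bowtie c$ is analogous; the only slightly delicate subcase is when exactly one of $x, x'$ is reset, in which case the constraint collapses to one of the form $\pm v(y) \bowtie c$ for the unreset clock $y$. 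Since clocks are non-negative and $c \leq d$, this is either vacuously determined or equivalent to a conjunction of atomic constraints in $\Phi_a^d(\mathcal{C})$, so agreement again follows from $v \clockequiv_{\mathcal{C},d} v'$.

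For item~\ref{sublem:preservation-of-eq:time-transition}, I would appeal to the classical region-automaton construction of~\cite{alur-a-theory-1994}. Restricted to clocks in $\mathcal{C}$ and constants bounded by $d$, the relation $\clockequiv_{\mathcal{C},d}$ has finite index and coincides with the standard region equivalence, whose classes enjoy the well-known \emph{time-successor} property: from a region $R$, the ordered sequence of regions reached by letting time pass is determined by $R$ alone. Given $v \clockequiv_{\mathcal{C},d} v'$ and $\delta \in \delays$, I would let $R$ denote the region of $v + \delta$, observe that $R$ is a time-successor of the common region of $v$ and $v'$, and conclude there is $\delta' \in \delays$ with $v' + \delta' \in R$, which is precisely $v + \delta \clockequiv_{\mathcal{C},d} v' + \delta'$.

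The principal obstacle lies in item~\ref{sublem:preservation-of-eq:time-transition}: because $\clockequiv_{\mathcal{C},d}$ identifies valuations with different fractional parts (e.g., $v(x) = 0.3$ and $v'(x) = 0.7$), the matching elapse $\delta'$ is generally \emph{not} equal to $\delta$. A self-contained proof must therefore either import the Alur--Dill time-successor lemma or construct $\delta'$ explicitly, for instance by induction on the number of region boundaries crossed between $v$ and $v + \delta$, matching each crossing from $v'$ in turn. The key technical fact underlying either approach is that regions are convex and that time-successor-adjacent regions share a common boundary, so that the set $\{\delta' \in \delays \mid v' + \delta' \in R\}$ is non-empty.
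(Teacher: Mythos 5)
The paper's own proof of this lemma is the single line ``Follows from the definitions,'' so there is no argument of record to compare yours against; I am judging the proposal on its merits. Your treatment of item~(1) is complete and correct: the pointwise case analysis on which mentioned clocks lie in $\mathcal{C}'$, together with the observation that a half-reset diagonal constraint collapses either to a trivially determined condition or to an atomic constraint $x \bowtie c$ with $c \leq d$ that already belongs to $\Phi_a^d(\mathcal{C})$, is exactly what is needed. For item~(2), reducing to the Alur--Dill time-successor property is the standard route, and you correctly isolate that property (or an explicit induction on boundary crossings) as the real content of the claim. One small imprecision: $\clockequiv_{\mathcal{C},d}$ does not literally coincide with the classical region equivalence, since diagonal constraints $x - x' \bowtie c$ still separate valuations all of whose clocks exceed $d$; but because $x - x'$ is invariant under a uniform time elapse, the diagonal constraints are preserved for free and only the constraints of the form $x \bowtie c$ require the successor argument, so this does not damage the proof.

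There is, however, one genuine issue: your appeal to ``finite index'' silently assumes $\mathcal{C}$ is finite, which the lemma as stated does not guarantee ($\mathcal{C} \subseteq \clocks$ is arbitrary and $\clocks$ is countably infinite). In fact item~(2) is \emph{false} for infinite $\mathcal{C}$. Take $d = 1$, $\mathcal{C} = \{y, x_1, x_2, \ldots\}$, $v(y) = v'(y) = 0.5$, $v(x_n) = 0.2 - 0.1/n$ and $v'(x_n) = 0.5 - 0.1/n$. One checks $v \clockequiv_{\mathcal{C},1} v'$: every clock lies in $(0,1)$ in both valuations, the differences $x_n - x_m$ are identical in the two valuations, and $y - x_n$ lies strictly in $(0,1)$ in both. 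For $\delta = 0.7$ the valuation $v + \delta$ satisfies $y > 1$ and $x_n < 1$ for every $n$; but $v' + \delta' \models y > 1$ forces $\delta' > 0.5$, while $v' + \delta' \models x_n < 1$ for all $n$ forces $\delta' \leq 0.5$, so no matching $\delta'$ exists. This is precisely the point where your closing remark --- that $\{\delta' \in \delays \mid v' + \delta' \in R\}$ is non-empty --- breaks down: with infinitely many constraints the target ``region'' $R$ can be an unattained limit. Since the paper only ever invokes the lemma with finite clock sets (the region-automaton construction uses $\mathcal{C}_\TA = \mathcal{C} \cup \CX$ with $\mathcal{C}$ finite), this is arguably a defect of the lemma statement rather than of your argument, but your proof of item~(2) needs the hypothesis that $\mathcal{C}$ is finite to be stated explicitly.
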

\begin{proof}
Follows from the definitions.\qedhere
\end{proof}

As $\clockequiv_{\mathcal{C},d}$ is an equivalence relation, it partitions $\clockvals_\clocks$ into a set of equivalence classes.
If $v \in \clockvals_\clocks$
then we write $[v]_{\mathcal{C},d} = \{v' \in \clockvals_\clocks \mid v \clockequiv_{\mathcal{C},d} v' \}$
for the equivalence class of $v$,
and $[\clockvals_\clocks]_{\mathcal{C},d} = \{[v]_{\mathcal{C},d} \mid v \in \clockvals_\clocks\}$
for the set of equivalence classes of $\clockequiv_{\mathcal{C},d}$. If $\mathcal{C}$ is finite then so is $\Phi_a^d(\mathcal{C})$ for any $d$, as is $[\clockvals_\clocks]_{\mathcal{C},d}$, since every equivalence class of $\clockequiv_{\mathcal{C},d}$ is associated with a subset of the finite set $\Phi_a^d(\mathcal{C})$ (specifically, the subset of $\Phi_a^d(\mathcal{C})$ such that every state in the equivalence classes satisfies every formula in the subset and violates all formulas not in the subset).
For any $[v]_{\mathcal{C},d}$ and $\phi \in \Phi_a^d(\mathcal{C})$,
we know that either every $v' \in [v]_{\mathcal{C},d}$ satisfies $\phi$, or no $v' \in [v]_{\mathcal{C},d}$ does.
We will write $[v]_{\mathcal{C},d} \models \phi$ for the former and $[v]_{\mathcal{C},d} \not\models \phi$ for the latter.

The relation $\clockequiv_{\mathcal{C},d}$ is \emph{anti-monotonic} in $\mathcal{C}$ and $d$, as the next lemma demonstrates.

\begin{lemma}[Anti-monotonicity of $\clockequiv_{\mathcal{C},d}$]
Suppose $\mathcal{C} \subseteq \mathcal{C}' \subseteq \clocks$, and let $d, d' \in \nats$ be such that $d \leq d'$.  Then $v \clockequiv_{\mathcal{C}',d'} v'$ implies $v \clockequiv_{\mathcal{C},d} v'$.
\end{lemma}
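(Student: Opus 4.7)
The proof is essentially a direct unfolding of definitions, relying on the fact that enlarging $\mathcal{C}$ and $d$ only enlarges the set of atomic clock constraints under consideration. The plan is to argue the set inclusion $\Phi_a^d(\mathcal{C}) \subseteq \Phi_a^{d'}(\mathcal{C}')$ and then observe that logical equivalence on the larger set entails logical equivalence on the smaller one.

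First I would show that $\Phi_a^d(\mathcal{C}) \subseteq \Phi_a^{d'}(\mathcal{C}')$. Pick an arbitrary $\phi \in \Phi_a^d(\mathcal{C})$. By the definition of atomic clock constraint, $\phi$ is either of the form $x \bowtie c$ with $x \in \mathcal{C}$, or of the form $x - x' \bowtie c$ with $x, x' \in \mathcal{C}$, and in either case $\bound(\phi) = c \leq d$. Since $\mathcal{C} \subseteq \mathcal{C}'$, all clocks mentioned in $\phi$ also lie in $\mathcal{C}'$, so $\phi \in \Phi_a(\mathcal{C}')$; and since $c \leq d \leq d'$, we have $\bound(\phi) \leq d'$, so $\phi \in \Phi_a^{d'}(\mathcal{C}')$.

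Next, I would use this inclusion together with the hypothesis $v \clockequiv_{\mathcal{C}',d'} v'$, which by definition says that for every $\psi \in \Phi_a^{d'}(\mathcal{C}')$, $v \models \psi$ iff $v' \models \psi$. In particular this holds for every $\phi \in \Phi_a^d(\mathcal{C})$, which by definition means $v \clockequiv_{\mathcal{C},d} v'$, completing the argument.

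There is no real obstacle here; the only small subtlety is being careful that atomic constraints involving clocks and differences of clocks are both handled, and that the definition of $\bound$ on an atomic constraint coincides with its constant $c$ (both of which are immediate from Figure~\ref{fig:bound-cs-defs}). The whole proof can be stated in essentially one paragraph, so I would present it as a short remark noting that it is immediate from $\Phi_a^d(\mathcal{C}) \subseteq \Phi_a^{d'}(\mathcal{C}')$ and the definition of $\clockequiv$.
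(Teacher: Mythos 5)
Your proof is correct and follows exactly the paper's own argument, which likewise reduces the claim to the inclusion $\Phi_a^d(\mathcal{C}) \subseteq \Phi_a^{d'}(\mathcal{C}')$ together with the definition of $\clockequiv_{\mathcal{C},d}$. You have simply spelled out the inclusion in more detail than the paper, which states it without elaboration.
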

\begin{proof}
Follows from the definition of $\clockequiv_{\mathcal{C},d}$ and the fact that $\Phi_a^d(\mathcal{C}) \subseteq \Phi_a^{d'}(\mathcal{C}')$.\qedhere
\end{proof}

This lemma in effect says that as more clocks are added and as the constant grows, $\clockequiv_{\mathcal{C},d}$ shrinks, due to the fact that the number of atomic propositions to be satisfied / violated by equivalent clock valuations increases.  More formally, if $\mathcal{C} \subseteq \mathcal{C}'$ and $d \leq d'$, then $\clockequiv_{\mathcal{C}',d'}$ refines $\clockequiv_{\mathcal{C},d}$.  If also follows in this case that for every $v \in \clockvals_\clocks$, $[v]_{\mathcal{C}',d'} \subseteq [v]_{\mathcal{C},d}$.

Equivalence classes of $\clockequiv_{\mathcal{C},d}$, which we henceforth refer to as \emph{regions}, play a major role in the construction of region automata.
We review the following notions from~\cite{ACD1993,alur-a-theory-1994}.

\begin{definition}[Unbounded / successor region]
Let $\mathcal{C} \subseteq \clocks$ and $d \in \nats$.
\begin{enumerate}
\item
Region $r \in [\clockvals_\clocks]_{\mathcal{C},d}$ is \emph{unbounded} iff for every $v \in r$ and $\delta \in \delays$, $v + \delta \in r$.
\item
Let $r \in [\clockvals_\clocks]_{\mathcal{C},d}$.  Then the \emph{successor region}, $\tsucc(r) \in [\clockvals_\clocks]_{\mathcal{C},d}$, of $r$ is defined as follows.
\[
\tsucc(r) =
\begin{cases}
r
& \text{if $r$ is unbounded}
\\
r'
& \text{if $r \neq r'$ and}
\\
&\text{$\forall v \in r \colon \exists \delta_v \colon v + \delta_v \in r' \land
\forall \delta' \colon (\delta' < \delta \implies v+\delta' \in r \cup r')$}
\end{cases}
\]
\end{enumerate}
\end{definition}
Intuitively, a region is unbounded if every time successor of every clock valuation in the region remains in the region.  One region is a successor of another if either both are the same unbounded region, or if the successor is different (and hence disjoint) but consists of states that are reachable via a time delays from the first region that do not traverse any other regions.
It can be seen that $\tsucc$  is indeed a function over $[\clockvals_\clocks]_{\mathcal{C},d}$.  We define $\tsucc^0(r) = r$, and $\tsucc^{n+1}(r) = \tsucc(\tsucc^n(r))$ for $n \in \nats$.

The construction of the region automaton for timed automaton $\TA$ over $\Sigma$ and $\AP$ is given below.  Intuitively, given a finite set of auxiliary clocks $\mathcal{C}$ and a bound $d \in \nats$, region automaton $R_{\mathcal{C},d}(\TA)$ is a finite-state labeled transition system over sort $\Sigma_\varepsilon = \Sigma \cup \{\varepsilon\}$, where $\varepsilon \not\in \Sigma$ is a distinguished symbol, and an atomic proposition set consisting of elements of $\AP$ and atomic clock constraints that involve only clocks in $\TA$ or $\mathcal{C}$ and whose constants cannot exceed $d$ nor the largest constant in $\TA$ .
The states are pairs consisting of a location from $\TA$ and a region of clock valuations, where the equivalence relation is a bounded logical equivalence constructed based on $\TA$, $\mathcal{C}$ and $d$.
The parameters $\mathcal{C}$ and $d$ can be seen as coming from the analysis being performed on $\TA$; for example, they may correspond to clocks and a time bound used in a temporal formula that is being model-checked against $\TA$.

\begin{definition}[Region automaton]\label{def:region-automaton}
Let $\TA = \genTA)$ be a timed automaton over time-safe $\Sigma$ and clock-safe $\AP$, with $\mathcal{C} \subsetneq \clocks$ a finite set of clocks and $d \in \nats$.  Define:
\begin{itemize}
\item $\TS{\TA} = \ttsTA$;
\item $\mathcal{C}_\TA = \mathcal{C} \cup CX$;
\item $d_\TA = \max\{ d, \bound(\TA) \}$ (recall $\bound(\TA)$ is the largest constant in $\TA$);
\item $[v]_{\TA,\mathcal{C},d} = [v]_{\mathcal{C}_{\TA},d_{\TA}}$ for $v \in \clockvals_{\clocks}$;
\item $[\clockvals_{\clocks}]_{\TA, \mathcal{C}, d} = [\clockvals_{\clocks}]_{\mathcal{C}_{\TA},d_{\TA}}$;
\item $\AP_{(\TA,\mathcal{C},d)} = \AP \cup \Phi_a^{d_{\TA}}(\mathcal{C}_{\TA})$; and
\item $\varepsilon$ to be a distinguished symbol not in $\Sigma$.
\end{itemize}
Then the \emph{region automaton} of $\TA$
with respect to $\mathcal{C}$ and $d$
is LTS
$R_{\mathcal{C},d} (\TA) = (Q_R, {\ttrans{}_R}, \Lab_R, Q_{0,R})$
over $\Sigma_\varepsilon = \Sigma \cup \{\varepsilon\}$ and $\AP_{(\TA,\mathcal{C},d)}$ where:
\begin{enumerate}
\item
$Q_R = \{(l, [v]_{\TA,\mathcal{C}, d}) \mid (l,v) \in \QTA\}$;
\item $(l, r) \ttrans{a}_R (l',r')$ iff one of the following hold:
    \begin{enumerate}
    \item $a \in \Sigma$, and for all $v \in r$ there is $v' \in r'$ such that $(l,v) \ttrans{a}_{\TA} (l',v')$, or
    \item $a = \varepsilon, l = l', r' = \tsucc(r)$ and for every $v \in r$ there exists $\delta_v \in \delays$ such that $v + \delta_v \in r'$ and $(l,v) \ttrans{\delta_v} (l,v+\delta_v)$;
    \end{enumerate}
\item $\Lab_R(l,r) = \Lab(l) \cup \{\phi \in \Phi_a^{d'}(\mathcal{C}') \mid r \models \phi\}$; and
\item $Q_{0,R} = \{(l_0, [v]_{\mathcal{C}', d'}) \mid (l_0,v) \in Q_{0,\TA}\}$.
\end{enumerate}
\end{definition}

Region automata differ significantly from timed automata; they are finite-state, and in lieu of transitions labeled by concrete time elapses they contain transitions labeled by $\varepsilon$.  Nevertheless, there are strong connections between between $\TA$ and $R_{\mathcal{C},d}(\TA)$, as the next lemma indicates.

\begin{lemma}[Properties of region automata]\label{lem:properties-of-region-automata}
Let $\TA = \genTA$ be a timed automaton over time-safe $\Sigma$ and clock-safe $\AP$, with semantics $\TS{\TA} = \ttsTA$, and let $R_{\mathcal{C},d}(\TA) = (Q_R, \ttrans{}_R, \Lab_R, Q_{0,R})$ be the region automaton of $\TA$ with respect to $\mathcal{C}$ and $d$.  Then the following hold.
\begin{enumerate}
\item
    Let $A \in \AP_{(\TA,\mathcal{C},d)}$ and $(l,r) \in Q_R$.  Then $A \in \Lab_R(l,r)$ iff $A \in \Lab_{\TA}(l,v)$ for every $v \in r$.
\item\label{sublem:properties-of-region-automata:diamond}
    Let $a \in \Sigma$.  Then $(l,r) \ttrans{a}_R (l',r')$ iff for all $v \in r$ there exists $v' \in r'$ such that $(l,v) \ttrans{a}_{\TA} (l',v')$.
\item\label{sublem:properties-of-region-automata:n-fold}
    Suppose that $(l,r) \,(\ttrans{\varepsilon}_R)^n\, (l',r')$.  Then $l = l'$ and $r' = \tsucc^n(r)$.
\item\label{sublem:properties-of-region-automata:varepsilon-trans}
    Let that $(l,v), (l',v') \in \QTA$.  Then there exists $\delta \in \delays$ such that $(l,v) \ttrans{\delta}_{\TA} (l',v')$ iff there exists $n \in \nats$ such that $(l, r) \,(\ttrans{\varepsilon}_R)^n\, (l',r')$, where $v \in r$ and $v' \in r'$.
\end{enumerate}
\end{lemma}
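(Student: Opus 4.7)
The plan is to handle the four parts in order; parts 1--3 unpack almost directly from the definitions, while part 4 requires the substantive argument.

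For part 1, unfold the definition of $\Lab_R(l,r) = \Lab(l) \cup \{\phi \in \Phi_a^{d_\TA}(\mathcal{C}_\TA) \mid r \models \phi\}$. For $A \in \AP$, membership in $\Lab_R(l,r)$ reduces to membership in $\Lab(l)$, which by Definition~\ref{d:ta-sem} coincides with membership in $\Lab_\TA(l,v)$ for every $v$. For $A = \phi \in \Phi_a^{d_\TA}(\mathcal{C}_\TA)$, the notation $r \models \phi$ is well defined precisely because all members of $r$ agree on $\phi$ (bounded logical equivalence with bound $d_\TA$ determines satisfaction of every atomic constraint of bound at most $d_\TA$), which yields the claim.

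For part 2, observe that the only clause producing transitions labeled by elements of $\Sigma$ in the definition of $\ttrans{}_R$ is clause 2(a), which is literally the stated condition. For part 3, induct on $n$; the base case $n = 0$ is immediate, and in the inductive step the first step $(l,r) \ttrans{\varepsilon}_R (l_1, r_1)$ forces $l_1 = l$ and $r_1 = \tsucc(r)$ by clause 2(b), after which applying the inductive hypothesis to the remaining $(n-1)$-fold path gives $l' = l$ and $r' = \tsucc^{n-1}(r_1) = \tsucc^n(r)$.

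For part 4, the forward direction analyses the trajectory $t \mapsto [v+t]_{\TA,\mathcal{C},d}$ on $[0,\delta]$. Since $\mathcal{C}_\TA$ is finite and $d_\TA$ is fixed, there are only finitely many regions total, so the trajectory visits a finite chain $r_0 = [v], r_1, \ldots, r_n = [v+\delta]$ of distinct regions, each consecutive pair related by $\tsucc$ (by definition of $\tsucc$, since the path between consecutive switch times stays in the current region until entering the next). To upgrade each $r_{i+1} = \tsucc(r_i)$ into a bona fide $\varepsilon$-transition $(l, r_i) \ttrans{\varepsilon}_R (l, r_{i+1})$, invoke Lemma~\ref{lem:preservation-of-eq}(\ref{sublem:preservation-of-eq:time-transition}) to produce, for each $v'' \in r_i$, a suitable $\delta_{v''}$ with $v'' + \delta_{v''} \in r_{i+1}$; the invariant $I(l)$ is preserved throughout because $\bound(I(l)) \leq \bound(\TA) \leq d_\TA$, so the regions $r_i$ and $r_{i+1}$ satisfy $I(l)$ uniformly, and the $\tsucc$-path stays in $r_i \cup r_{i+1}$.

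The backward direction goes by induction on $n$, with the statement read as ``there is a $\delta$ such that $(l,v) \ttrans{\delta}_\TA (l, v+\delta)$ with $v+\delta \in r'$'' (i.e., the reachable point lies in the same region as $v'$). The base case $n = 0$ uses the time-reflexive transition with $\delta = 0$. In the inductive step, apply clause 2(b) to the first $\varepsilon$-step to obtain $\delta_v$ with $(l,v) \ttrans{\delta_v}_\TA (l, v+\delta_v)$ and $v+\delta_v \in \tsucc(r)$; then invoke the inductive hypothesis on the $(n-1)$-fold suffix starting at $v + \delta_v$, and concatenate via the time-additivity property of $\TS{\TA}$. The main obstacle is the mild asymmetry between the region-automaton's universal condition (``for every $v$ in the source region, some $\delta_v$ witnesses the step'') and the timed automaton's point-to-point semantics; the preservation lemma is exactly what bridges this gap, allowing us to transfer the existence of a uniform successor region into a concrete transition from the specific valuation at hand.
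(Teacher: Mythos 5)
Your proposal is correct and follows essentially the same route as the paper, which for this lemma offers only a citation to the literature plus the remark that part 4 goes ``by induction on the number of regions that the $\ttrans{\delta}_{\TA}$ crosses''---exactly the chain-of-$\tsucc$-steps argument you spell out, with Lemma~\ref{lem:preservation-of-eq}(\ref{sublem:preservation-of-eq:time-transition}) and the region-invariance of $I(l)$ supplying the details the paper leaves implicit. You are also right to flag that the backward direction of part~4, read literally as reaching the specific valuation $v'$, is too strong (an arbitrary $v' \in r'$ need not equal $v+\delta$ for any single $\delta$, e.g.\ because clocks outside $\mathcal{C}_{\TA}$ are unconstrained by $r'$); your reinterpretation---that some $v+\delta \in r'$ is reached---is the intended and correct reading, and is all that the paper's later uses of the lemma require.
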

\begin{proof}
Standard in the literature; see e.g.~\cite{tripakis-analysis-of-2001}.  Result (\ref{sublem:properties-of-region-automata:varepsilon-trans}) follows by induction on the number of regions that the $\ttrans{\delta}_{\TA}$ crosses en route from $(l,v)$ to $(l',v')$; note that $l = l'$.\qedhere
\end{proof}

Later in this paper we will establish correspondences between the properties satisfied by timed automata and their associated region automata.  In so doing we will need to map between subsets of $\QTA$ and subsets of $Q_R$.  We will use the following two functions, $con \in 2^{Q_R} \to 2^{\QTA}$ (for ``concretize'') and $abs \in 2^{\QTA} \to 2^{Q_R}$ (for ``abstract'') for this purpose.
\begin{align*}
con (S_R)
&= \{(l,v) \in \QTA \mid \exists\, (l,r) \in S_R \colon v \in r\}
\\
abs (S_{\TA})
&= \{(l,r) \in Q_{(\TA,\mathcal{C},d)} \mid \exists\, (l,v) \in S_{\TA} \colon v \in r \}
\end{align*}
Function $con$ concretizes states in the region automaton by converting them into states in the semantics of $\TA$, while $abs$ abstracts semantic states into region-automaton states.  Note that for any $S_R \subseteq Q_R$, $abs(con(S_R)) = S_R$, while for any $S_{\TA} \subseteq \QTA$, $con(abs(S_{\TA})) \subseteq S_{\TA}$.  When $S_{\TA} = con(abs(S_{\TA}))$ we will refer to $S_{\TA}$ as \emph{saturated with respect to $R_{\mathcal{C},d}(\TA)$}.  We will use the following properties of saturation later in the paper.

\begin{lemma}[Properties of saturation]\label{lem:properties-of-saturation}
Let $\TA$ be a timed automaton over $\Sigma$ and $\AP$, with $\TS{\TA} = (\QTA,\ldots)$.  Also let $\mathcal{C} \subseteq \clocks$ and $d \in \nats$, and let $R_{\mathcal{C},d}(\TA) = (Q_R, \ldots).$
\begin{enumerate}
\item\label{sublem:properties-of-saturation:con-sat-sym}
    Suppose $S_R \subseteq Q_R$.  Then $con(S_R)$ is saturated with respect to $R_{\mathcal{C},d}(\TA)$.
\item\label{sublem:properties-of-saturtion:sat-prop}
    Suppose $S_{\TA} \subseteq \QTA$ is saturated with respect to $R_{\mathcal{C},d}(\TA)$.  Then $(l,v) \in S_{\TA}$ iff $(l,[v]_{\mathcal{C},d}) \in abs(S_{\TA})$.
\item\label{sublem:properties-of-saturation:closure-props}
    Suppose that $S_1, S_2 \subseteq \QTA$ are saturated with respect to $R_{\mathcal{C},d}(\TA)$.  Then so are $S_1 \cup S_2$ and $S_1 \setminus S_2$.
\end{enumerate}
\end{lemma}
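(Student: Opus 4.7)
The plan is to reduce each of the three claims to straightforward manipulations involving the already-noted identities $abs(con(S_R)) = S_R$ and $con(abs(S_{\TA})) \subseteq S_{\TA}$, together with the observations that both $abs$ and $con$ are monotone and distribute over union.

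\smallskip
\noindent\textbf{Part~\ref{sublem:properties-of-saturation:con-sat-sym}.} I would simply unfold the definition of saturation: $con(S_R)$ is saturated iff $con(abs(con(S_R))) = con(S_R)$. The inner $abs(con(S_R))$ equals $S_R$ by the identity already noted in the text just before the lemma, after which the outer $con$ yields $con(S_R)$ on the nose. This is essentially a one-line argument.

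\smallskip
\noindent\textbf{Part~\ref{sublem:properties-of-saturtion:sat-prop}.} The forward implication is immediate from the definition of $abs$: any $(l,v) \in S_{\TA}$ contributes $(l,[v]_{\mathcal{C},d})$ to $abs(S_{\TA})$. For the reverse, I would suppose $(l,[v]_{\mathcal{C},d}) \in abs(S_{\TA})$; then by definition of $con$ we have $(l,v) \in con(abs(S_{\TA}))$, which equals $S_{\TA}$ by the saturation hypothesis.

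\smallskip
\noindent\textbf{Part~\ref{sublem:properties-of-saturation:closure-props}.} For the union case, I would first record the two routine distributivity facts $abs(S_1 \cup S_2) = abs(S_1) \cup abs(S_2)$ and $con(T_1 \cup T_2) = con(T_1) \cup con(T_2)$ (both direct from their set-builder definitions). Then
\[
con(abs(S_1 \cup S_2)) = con(abs(S_1)) \cup con(abs(S_2)) = S_1 \cup S_2,
\]
using the saturation of $S_1$ and $S_2$. The difference case is the main obstacle, because $abs$ does not in general distribute over set difference. However, one can still show the inclusion $abs(S_1) \setminus abs(S_2) \subseteq abs(S_1 \setminus S_2)$ using Part~\ref{sublem:properties-of-saturtion:sat-prop}: if $(l,r) \in abs(S_1)$ but $(l,r) \notin abs(S_2)$, then picking any $(l,v) \in S_1$ with $v \in r$, Part~\ref{sublem:properties-of-saturtion:sat-prop} applied to $S_2$ yields $(l,v) \notin S_2$, whence $(l,v) \in S_1 \setminus S_2$ and so $(l,r) \in abs(S_1 \setminus S_2)$. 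Combined with the always-valid $con(abs(X)) \subseteq X$, I would argue
\[
S_1 \setminus S_2 \subseteq con(abs(S_1) \setminus abs(S_2)) \subseteq con(abs(S_1 \setminus S_2)) \subseteq S_1 \setminus S_2,
\]
where the first inclusion again uses Part~\ref{sublem:properties-of-saturtion:sat-prop} together with the saturation of $S_1$ and $S_2$ to locate $(l,[v]_{\mathcal{C},d})$ inside $abs(S_1) \setminus abs(S_2)$, and the second uses monotonicity of $con$. The key subtlety, which is the real content of the whole lemma, is that saturation forces each region to lie entirely inside or entirely outside each $S_i$, so that set-theoretic operations on saturated sets commute with the abstraction $abs$ up to the inclusions above.
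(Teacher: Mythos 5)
The paper itself dismisses this lemma with ``Immediate from the definitions,'' so your proposal is the only detailed argument in play; Parts~1 and~2 and the union half of Part~3 are correct and use exactly the intended unfoldings of $con$ and $abs$. There is, however, a genuine flaw in the last step of the set-difference case. You justify the final inclusion $con(abs(S_1 \setminus S_2)) \subseteq S_1 \setminus S_2$ by appeal to ``the always-valid $con(abs(X)) \subseteq X$.'' That inclusion is not always valid: the direction that holds for arbitrary $X \subseteq \QTA$ is $X \subseteq con(abs(X))$, since $con \circ abs$ closes $X$ under the region equivalence, and the reverse inclusion $con(abs(X)) \subseteq X$ is by definition precisely the assertion that $X$ is saturated. (The paper's own remark just before the lemma states the inclusion in the same backwards direction, which must be a typo --- if it were literally true, every subset of $\QTA$ would be saturated and the lemma would be vacuous.) As written, the third link of your chain therefore assumes the very saturation of $S_1 \setminus S_2$ that it is supposed to establish.

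The fix is small and uses only ingredients you already have. Keep your first inclusion $S_1 \setminus S_2 \subseteq con(abs(S_1) \setminus abs(S_2))$, and prove its converse directly instead of routing through $abs(S_1 \setminus S_2)$: if $(l,v) \in con(abs(S_1) \setminus abs(S_2))$, then the unique region containing $v$ satisfies $(l,[v]_{\mathcal{C},d}) \in abs(S_1) \setminus abs(S_2)$, so Part~2 applied to the saturated set $S_1$ gives $(l,v) \in S_1$, while $(l,[v]_{\mathcal{C},d}) \notin abs(S_2)$ forces $(l,v) \notin S_2$ (otherwise $(l,[v]_{\mathcal{C},d})$ would lie in $abs(S_2)$). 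Hence $S_1 \setminus S_2 = con(abs(S_1) \setminus abs(S_2))$, and Part~1 then yields that this set is saturated; your auxiliary inclusion $abs(S_1) \setminus abs(S_2) \subseteq abs(S_1 \setminus S_2)$ is no longer needed.
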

\begin{proof}
Immediate from the definitions.\qedhere
\end{proof}

Classical region-graph constructions, such as the one in~\cite{ACD1993}, give more operational accounts of $[v]_{\mathcal{C},d}$ in terms of the relationships between the floor $\floor{v(x)}$ and fractional parts $v(x) - \floor{v(x)}$ of the clocks $x$.  Our construction yields equivalent automata:  what is key in either approach is that region automata have finitely many states and preserve the key behavioral properties given in Lemma~\ref{lem:properties-of-region-automata}.

\section{Timed Modal Mu-Calculi}\label{sec:timed-modal-mu-calculi}

This paper is devoted to a study of the relative expressive power of different timed modal mu-calculi in the context of timed automata.  This section introduces a reference timed modal mu-calculus, $\Lrelmunu$, together with a collection of mu-calculi introduced in the literature, then compares the expressive power of these logics.

\subsection{A Reference Timed Modal Mu-Calculus:  $\Lrelmunu$}

We first define the syntax and semantics of our reference timed mu-calculus, $\Lrelmunu$.  The definition is parameterized with respect to:
the clock structure $(\clocks,\clocksA,\clocksF)$;
$\Var$, a countably infinite set of \emph{propositional variables};
$\Sigma$, a nonempty time-safe sort $\Sigma$; and
$\AP$, a set of clock-safe atomic propositions.
Sets $\clocks$, $\Var$, $\Sigma$ and $\AP$ are assumed to be pairwise-disjoint.

\begin{definition}[$\Lrelmunu$ syntax]\label{def:Lrelmunu}
The formulas of the timed mu-calculus $\Lrelmunu$ are defined via the following grammar, where $A \in \AP_{\clocks} = \AP \cup \Phi_a(\clocks)$, $Y \in \Var$, $K \subseteq \Sigma$ and $z \in \clocksF$.
\[
\phi
::=     A
\mid    Y
\mid    \lnot \phi
\mid    \phi \lor \phi
\mid    \dia{K} \phi
\mid    \exists_\phi \phi
\mid    z.\phi
\mid    \mu Y.\phi
\]
Formulas must also satisfy the following restriction: in any formula $\mu Y.\phi$ every free occurrence of $Y$ in $\phi$ must be in the scope of an even number of negations in $\phi$.
We use $\Phirelmunu$ to denote the set of all $\Lrelmunu$ formulas.
If $\AP' \subseteq \AP_{\clocks}$ then we write $\Phirelmunu(\AP')$ for the subset of formulas in $\Phirelmunu$ whose atomic / clock constraints are restricted to those in $\AP'$.
\end{definition}

\noindent
Constructs $A, Y, \lnot$ and $\lor$ are standard; note that for any timed automaton $\TA$ over $\Sigma$ and $\AP$, $\AP_{\TA} = \AP \cup \Phi_a(\clocks)$ by definition.
The operator $\dia{K}$ is a \emph{labeled modality}, with $K$
being a subset of $\Sigma$.  Construct $\exists$ is a \emph{relativized} time modality:  $\exists_{\phi_1} \phi_2$ corresponds to an until operator interpreted only over time transitions.  Operator $z.$ denotes \emph{freeze quantification}.  In formula $z.\phi$ the freeze clock, $z$, is set to 0 and the body $\phi$ interpreted in this updated clock state.  Finally, $\mu Y.\phi$ represents a recursively defined formula that may be seen as the strongest solution to equation $Y = \phi$.

Semantically, $\Lrelmunu$ formulas are interpreted with respect to states in the timed transition system $\TS{\TA}$ associated with timed automaton $\TA$.  The semantic function, $\musemTAtheta{\phi}$, maps formula $\phi$ to the set of states in $\TS{\TA}$ that satisfy $\phi$, with environment $\theta$ giving the semantics for the propositional variables.  In what follows, if $\theta$ is such an environment, $Y \in \Var$, and $S$ is a set of states, then $\theta[Y := S]$ is the environment $\theta$ with the value associated with $Y$ updated to $S$ in the usual sense.  We now give the semantics of $\Lrelmunu$.

\begin{definition}[$\Lrelmunu$ semantics] \label{def:Lrelmunu-semantics}
Let $\TA$ be a timed automaton over $\Sigma$ and $\AP$, with $\TS{\TA} = \ttsTA$ the timed transition system associated with $\TA$ and $\theta \in \Var \to 2^{\QTA}$.  Also let $\phi \in \Phirelmunu$.  Then $\musemTAtheta{\phi} \subseteq \QTA$ is defined inductively as follows.
\begin{align*}
\musemTAtheta{A}
    &= \Lab_{\TA}(A)
    && A \in \AP_{\clocks}
\\
\musemTAtheta{Y}
    &= \theta(Y)
    && Y \in \Var
\\
\musemTAtheta{\lnot \phi}
    &= \QTA \setminus \musemTAtheta{\phi}
\\
\musemTAtheta{\phi_1 \lor \phi_2}
    &= \musemTAtheta{\phi_1} \cup \musemTAtheta{\phi_2}
\\
\musemTAtheta{\dia{K} \phi}
    &= \{q \in \QTA \mid
    \exists q' \in \QTA
    \colon q \ttrans{K}_{\TA} q'
    \land q' \in \musemTAtheta{\phi}\}
    && K \subseteq \Sigma
\\
\musemTAtheta{\exists_{\phi_1} \phi_2}
    &= \{q \in \QTA \mid
        \exists \delta \in \delays \colon
        \delta(q) \in \musemTAtheta{\phi_2} \;\land
\\
    &   \hspace{23pt} \forall \delta' \in \delays
                \colon \delta' < \delta
                \implies \delta'(q) \in (\musemTAtheta{\phi_1 \lor \phi_2}) \}
\\
\musemTAtheta{z.\phi}
    &= \{q \in \QTA \mid q[z:=0] \in \musemTAtheta{\phi}\}
    && z \in \clocksF
\\
\musemTAtheta{\mu Y.\phi}
    &= \bigcap\, \{S \subseteq \QTA \mid \musemTA{\phi}{\theta[Y := S]} \subseteq S\}
\end{align*}
If $q \in \musemTAtheta{\phi}$ then we say that $q$ \emph{satisfies} $\phi$ for $\TA$ and $\theta$ and write $q \models_{\TA,\theta} \phi$.
\end{definition}

We now comment on this definition.
First, $\exists$ captures a notion of ``until", in the following sense.  Because of properties of $\delta(q)$, it follows that if $\delta(q) \cap \musemTAtheta{\phi_2} \neq \emptyset$ then it must hold that $\delta(q) = \{ q' \}$ for some $q' \in \QTA$, and that this $q' \in \musemTAtheta{\phi_2}$.  Thus, if $q$ satisfies $\exists_{\phi_1} \phi_2$ then there must be $\delta \in \delays$ and $q'$ such that $q \ttrans{\delta} q'$ and $q'$ satisfies $\phi_2$, and such that for every $\delta' < \delta$ and $q''$ such that $q \ttrans{\delta'} q''$, $q''$ satisfies either $\phi_1$ or $\phi_2$.
Second, $z.\phi$ is satisfied by $q$ iff the clock $z$, which is a freeze clock and thus cannot appear in $\TA$, when reset to $0$ in $q$, makes $\phi$ true.  Finally, formula $\mu Y.\phi$ is given meaning based on the Tarski-Knaster characterization of least fixpoints over complete lattices~\cite{tarski-a-lattice-theoretical-1955}.  In this case, the complete lattice in question is the subset lattice $2^{\QTA}$ ordered by $\subseteq$ and with union and intersection being the least-upper-bound / greatest-lower-bound operations.  Now let $Y \in \Var$ and $\phi \in \Phirelmunu$.  In the lattice, the semantic function $\musemTAtheta{Y.\phi} \in 2^{\QTA} \to 2^{\QTA}$ defined as
$
\musemTAtheta{Y.\phi}(S) = \musemTA{\phi}{\theta[Y:=S]}
$
is monotonic for every $\phi$, and therefore possesses a unique minimum fixpoint.  It turns out that $\musemTAtheta{\mu Y.\phi}$ coincides with this fixpoint.

Regarding the dual operators of $\Lrelmunu$, as usual, $\land$ is the dual of $\lor$:  $\phi_1 \land \phi_2 = \lnot (\lnot \phi_1 \lor \lnot \phi_2)$.  The dual of $\dia{K}$ is $[K]$:  $[K] \phi = \lnot (\dia{K} \lnot \phi)$.  The dual of $\exists$ is written as $\forall$: $\forall_{\phi_1} \phi_2 = \lnot(\exists_{\lnot\phi_1} (\lnot \phi_2))$, and represents a version of the \emph{release} operator from temporal logic.  The freeze operator $z.$ is self-dual, since $\lnot(z.\phi)$ is equivalent to $z.(\lnot \phi)$.  We also use $\nu Y.$\/ for the dual of $\mu Y.$:  $\nu Y.\phi = \lnot(\mu Y.(\lnot \phi[Y := \lnot Y]))$, where $\phi[Y := \lnot Y]$ represents the formula obtained from $\phi$ by replacing all free occurrences of $Y$ by $\lnot Y$ in the usual fashion.    Semantically, $\nu Y.\phi$ coincides with the greatest fixpoint of function $\musemTAtheta{Y.\phi}$ introduced above and can be characterized as $\bigcup \{S \subseteq \QTA \mid S \subseteq \musemTA{\phi}{\theta[Y := S]}\}$.  We will freely use these dual operators in $\Lrelmunu$ formulas.  It can also be shown that, if these dual operators are included in formulas, any formula can be rewritten into \emph{positive normal form}, in which $\lnot$ is only applied to elements of $\AP$ and free occurrences of propositional variables.

We close this section on $\Lrelmunu$ by establishing that the semantics of its formulas are preserved, in a very precise fashion, by the region-automaton construction given in Section~\ref{sec:region-automata}.  We first extend the bound and clock-set functions defined for clock constraints (Definition~\ref{def:cxcons}) to $\Lrelmunu$ formulas as follows.
\begin{definition}[Bounds, clocks of $\Lrelmunu$ formulas]
Let $\phi \in \Phirelmunu$ be a formula in $\Lrelmunu$.  Then the \emph{bound}, $\bound(\phi)$, and clocks, $cs(\phi) \subsetneq \clocks$, of $\phi$ are defined as follows.
\begin{align*}
    &(\bound(\phi), cs(\phi))\\
    &=
    \begin{cases}
        (0, \emptyset)              & \text{if $\phi \in \AP \cup \Var$} \\
        (\bound(\phi), cs(\phi))    & \text{if $\phi \in \Phi_a(\clocks)$} \\
        (\bound(\phi'), cs(\phi'))  & \text{if $\phi = \lnot \phi'$}\\
        (\max \{\bound(\phi_1), \bound(\phi_2)\}, cs(\phi_1) \cup cs(\phi_2))
                                    & \text{if $\phi = \phi_1 \lor \phi_2$} \\
        (\bound(\phi'), cs(\phi'))  & \text{if $\phi = \dia{K}\phi'$} \\
        (\max \{\bound(\phi_1), \bound(\phi_2)\}, cs(\phi_1) \cup cs(\phi_2))
                                    & \text{if $\phi = \exists_{\phi_1} \phi_2$} \\
        (\bound(\phi'), \{ z \} \cup cs(\phi'))
                                    & \text{if $\phi = z.\phi'$} \\
        (\bound(\phi'), cs(\phi'))   & \text{if $\phi = \mu Y.\phi$}
    \end{cases}
\end{align*}
\end{definition}

Intuitively, $\bound(\phi)$ returns the largest constant used in a clock constraint in $\phi$, while $cs(\phi)$ gives the (finite) set of clocks referenced in $\phi$.
We can now define $R_\phi(\TA)$, the region automaton for $\TA$ \emph{relativized} to $\phi$, and its atomic proposition set $\AP_{\TA,\phi}$, as follows.
\begin{definition}[Relativized region automaton and atomic proposition set]
Let $\TA = \genTA$ be a timed automaton over time-safe $\Sigma$ and clock-safe $\AP$, let $\phi$ be a formula in $\Lrelmunu$, and let $\mathcal{C} = cs(\phi)$ and $d = \bound(\phi)$. Then, following Definition~\ref{def:region-automaton}:
\begin{enumerate}
\item
    $[v]_{\TA,\phi} = [v]_{\TA,\mathcal{C},d}$ is the \emph{equivalence class}, or \emph{region, of $v \in \clockvals_{\clocks}$ induced by $\TA$ and $\phi$};
\item
    $[\clockvals_{\clocks}]_{\TA,\phi} = [\clockvals_{\clocks}]_{\TA,\mathcal{C},d}$ are the \emph{regions induced by $\TA$ and $\phi$};
\item
    $\AP_{\TA,\phi} = \AP_{\TA,\mathcal{C},d}$ is the \emph{atomic proposition set of $\TA$ relativized to $\phi$}; and
\item
    $R_\phi(\TA) = R_{\mathcal{C},d}(\TA)$ is the \emph{region automaton of $\TA$ relativized to $\phi$}.
\end{enumerate}
\end{definition}
Note that every atomic clock constraint appearing in $\TA$ or $\phi$ is in $\AP_{\TA,\phi}$.  Also, if $\phi$ and $\phi'$ are such that $\bound(\phi) = \bound(\phi')$ and $cs(\phi) = cs(\phi')$, then $\AP_{\TA,\phi} = \AP_{\TA,\phi'}$ and $R_{\TA,\phi} = R_{\TA,\phi'}$ for any $\TA$.

We now give a \emph{symbolic semantics} for $\Lrelmunu$ formulas in the spirit of Larousinie et al.~\cite{laroussinie-from-timed-1995} and Bouyer et al.~\cite{bouyer-timed-modal-2011} that interprets $\Lrelmunu$ formulas with respect to the region automata constructed from timed automata.
\begin{definition}[Symbolic semantics of $\Lrelmunu$]
Fix time-safe $\Sigma$ and clock-safe $\AP$, let $\phi \in \Phirelmunu$, and let
\begin{align*}
\TA         &= \genTA \\
\TS{\TA}    &= \ttsTA \\
R_\phi(\TA) &= (Q_R, \ttrans{}_R, \Lab_R,Q_{0,R})
\end{align*}
be a timed automaton over $\Sigma$ and $\AP$, the timed-automaton semantics of $\TA$, and the region automaton of $\TA$ relativized to $\phi$, respectively.
Then the \emph{symbolic semantics} of $\phi$, $\musemsymTAtheta{\phi} \subseteq Q_{R}$, is defined to be $\musemrelphiTAtheta{\phi}$, where  $\theta \in \Var \to 2^{\QTA}$ is such that, for all $Y \in \Var$, $\theta(Y) \subseteq \QTA$ is \emph{saturated} with respect to $R_\phi(\TA)$, and $\musemrelphiTAtheta{\gamma} \subseteq Q_R$ for $\gamma \in \Phirelmunu(\AP_{\TA,\phi})$ is given inductively as follows.
\begin{align*}
    \musemrelphiTAtheta{A}
        &= \Lab_R(A) \quad \text{for $A \in \AP_{\TA,\phi}$}
        \\
    \musemrelphiTAtheta{Y}
        &= abs(\theta(Y))
        \\
    \musemrelphiTAtheta{\lnot \gamma}
        &= Q_R \setminus \musemrelphiTAtheta{\gamma}
        \\
    \musemrelphiTAtheta{\gamma_1 \lor \gamma_2}
        &= \musemrelphiTAtheta{\gamma_1} \cup \musemrelphiTAtheta{\gamma_2}
        \\
    \musemrelphiTAtheta{\dia{K}\gamma}
        &= \{ q \in Q_R \mid \exists q' : q \ttrans{K}_R q' : q' \in \musemrelphiTAtheta{\gamma} \}
        \\
    \musemrelphiTAtheta{\exists_{\gamma_1}\gamma_2}
        &= \{ (l,r) \in Q_R \mid
        \\
        & \qquad \exists n \in \mathbb{N} \colon (l,r) \,(\ttrans{\varepsilon})^n\, (l,\tsucc^n(r)) \land (l,\tsucc^n(r)) \in \musemrelphiTAtheta{\gamma_2}
        \\
        & \qquad \land \forall m \in \nats \colon m < n \colon (l, \tsucc^m(r)) \in \musemrelphiTAtheta{\gamma_1} \} \\
    \musemrelphiTAtheta{z.\gamma}
        &= \{ (l,r) \in Q_R \mid (l, r[z := 0]) \in \musemrelphiTAtheta{\gamma} \}
        \\
    \musemrelphiTAtheta{\mu Y . \gamma}
        &= \bigcap \{ S \subseteq Q_R \mid \musemrelphiTA{\gamma}{\theta[Y := con(S)]} \subseteq S \}
\end{align*}
\end{definition}

Intuitively, $\musemsymTAtheta{\phi}$ returns a set states in the region automaton $R_\phi(\TA)$ associated with $\TA$ and $\phi$.  This set is guaranteed to be finite, since $Q_R$, the set of states in $R_\phi(\TA)$, is finite.  The semantics of $Y$ uses the $abs$ function to convert $\theta(Y) \subseteq \QTA$ into the corresponding subset of $Q_R$.
Similarly, the semantics of $\mu Y.\phi$ uses $con$ to concretize the set of region-automaton states $S$ into the corresponding subset of $\QTA$ when updating $\theta$.
The definition also uses the $(\ttrans{\varepsilon}_R)^n$ relation; note that if $(l,r) \;(\ttrans{\varepsilon}_R)^n\; (l, \tsucc^n(r))$ then Lemma~\ref{lem:properties-of-region-automata}(\ref{sublem:properties-of-region-automata:n-fold}) guarantees that $(l,r) \;(\ttrans{\varepsilon})^m\; (l,\tsucc^m(r))$ for all $m < n$.

The definition of $\musemsymTAtheta{\phi}$ uses an intermediate notion, $\musemrelphiTAtheta{\gamma}$, and we comment on the reason for this here. The symbolic semantics interprets $\phi$ with respect to a region automaton, $R_\phi(\TA)$, that is constructed from both $\TA$ and $\phi$.  A traditional inductive definition for $\musemsymTAtheta{\phi}$ would in general have to use different region automata for subformulas of $\phi$ than for $\phi$ itself.  The use of $\phi$ as a parameter in $\musemrelphiTAtheta{\gamma}$ solves this problem by fixing $\phi$ as the formula for constructing the region automaton, and then interpreting only formulas with respect to this automaton whose atomic propositions and clock constraints are consistent with those found in $\phi$.

The following correspondence between $\musemTAtheta{\phi}$ and $\musemsymTAtheta{\phi}$ can now be established.

\begin{lemma}\label{lem:region-graph-consistency-Lrelmunu}
Let $\TA = \genTA$ be a timed automaton over time-safe $\Sigma$ and clock-safe $\AP$, let $\phi \in \Phirelmunu$ be a formula in $\Lrelmunu$, and let $\theta$ be such that $\theta(Y)$ is saturated with respect to $R_\phi(\TA)$ for all $Y \in \Var$.  Then $\musemTAtheta{\phi} = con (\musemsymTAtheta{\phi})$.
\end{lemma}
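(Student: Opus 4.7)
The plan is to prove by structural induction on $\phi$ the strengthened statement that, for every subformula $\gamma$ of $\phi$ and every environment $\theta$ whose values are saturated with respect to $R_\phi(\TA)$,
\[
\musemTAtheta{\gamma} = con(\musemrelphiTAtheta{\gamma}).
\]
Since $cs(\gamma) \subseteq cs(\phi)$ and $\bound(\gamma) \leq \bound(\phi)$, every atomic clock constraint appearing in $\gamma$ lies in $\AP_{\TA,\phi}$, so each clause of the symbolic semantics is well formed against $R_\phi(\TA)$. Before starting the induction I would record: (i) $abs \circ con$ is the identity on $2^{Q_R}$ and $con \circ abs$ is the identity on saturated subsets of $\QTA$; (ii) $con$ distributes over union trivially, and over set difference on saturated arguments by Lemma~\ref{lem:properties-of-saturation}(\ref{sublem:properties-of-saturation:closure-props}); and (iii) $con(\musemrelphiTAtheta{\gamma})$ is always saturated by Lemma~\ref{lem:properties-of-saturation}(\ref{sublem:properties-of-saturation:con-sat-sym}), so the inductive hypothesis implicitly maintains saturation of the environment.

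The routine cases unwind quickly. The atomic case $A \in \AP_{\TA,\phi}$ follows from Lemma~\ref{lem:properties-of-region-automata}(1); the variable case $Y$ uses the assumed saturation of $\theta(Y)$; the boolean cases reduce to fact (ii); the labelled modality $\dia{K}\gamma$ follows by applying Lemma~\ref{lem:properties-of-region-automata}(\ref{sublem:properties-of-region-automata:diamond}) for each $a \in K$; and the freeze case $z.\gamma$ reduces to Lemma~\ref{lem:preservation-of-eq}(\ref{sublem:preservation-of-eq:reset}), which yields $[v[z:=0]]_{\TA,\phi} = [v]_{\TA,\phi}[z:=0]$, so the reset commutes with concretization.

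The relativized-until $\exists_{\gamma_1}\gamma_2$ is the first delicate step. Here I would match a concrete witness $\delta \in \delays$ with $\delta(q) \in \musemTAtheta{\gamma_2}$ against a symbolic witness $n \in \nats$ using Lemma~\ref{lem:properties-of-region-automata}(\ref{sublem:properties-of-region-automata:varepsilon-trans}): a time-elapse of duration $\delta$ traverses finitely many regions $r_0, \tsucc(r_0), \ldots, \tsucc^n(r_0)$ with the final one satisfying $\gamma_2$ by the IH. The ``during-the-wait'' obligation $\delta'(q) \in \musemTAtheta{\gamma_1 \lor \gamma_2}$ for every $\delta' < \delta$ then forces each intermediate region $\tsucc^m(r_0)$ with $m < n$ to satisfy $\gamma_1 \lor \gamma_2$, again using the IH together with the saturation of $\musemTAtheta{\gamma_1 \lor \gamma_2}$. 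The reverse direction is symmetric: given a symbolic $n$-step witness, pick any $\delta$ whose $\ttrans{\delta}_{\TA}$-image lies in $\tsucc^n(r_0)$ and use the IH to transfer the intermediate-region obligations into a time-elapse obligation at every $\delta' < \delta$.

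The fixpoint case $\mu Y.\gamma$ is the main obstacle. Set $f(S) = \musemTA{\gamma}{\theta[Y := S]}$ on $2^{\QTA}$ and $g(T) = \musemrelphiTA{\gamma}{\theta[Y := con(T)]}$ on $2^{Q_R}$. By the IH, whenever $S$ is saturated we have $f(S) = con(g(abs(S)))$; in particular $f$ sends saturated sets to saturated sets. A Knaster-Tarski transfinite iteration of $f$ from $\emptyset$ therefore stays inside the saturated sets (using fact (ii) for union at limit ordinals), so the least fixpoint $L_f$ of $f$ is itself saturated. Writing $L_g$ for the least fixpoint of $g$ and setting $\alpha = con(L_g)$, the set $\alpha$ is saturated by fact (iii), and
\[
f(\alpha) = con(g(abs(\alpha))) = con(g(abs(con(L_g)))) = con(g(L_g)) = con(L_g) = \alpha,
\]
so $L_f \subseteq \alpha$. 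Conversely, applying $abs$ to the identity $L_f = f(L_f) = con(g(abs(L_f)))$ (the last equality uses the IH, licensed by saturation of $L_f$) yields $abs(L_f) = g(abs(L_f))$, so $L_g \subseteq abs(L_f)$ and hence $con(L_g) \subseteq con(abs(L_f)) = L_f$ by saturation. Specialising the strengthened claim to $\gamma = \phi$ finishes the lemma.
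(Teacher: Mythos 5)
Your proposal is correct, and its overall skeleton (structural induction on subformulas over the sublanguage $\Phirelmunu(\AP_{\TA,\phi})$, with saturated environments carried through the induction) matches the paper's. The two places where you genuinely diverge are the until case and the fixpoint case. For $\exists_{\gamma_1}\gamma_2$, the paper handles the direction $con(\musemrelphiTAtheta{\gamma}) \subseteq \musemTAtheta{\gamma}$ by an inner induction on the symbolic witness $n(l,r)$, assembling the concrete delay as $\delta_1 + \delta_2$ step by step; you instead chain the $\varepsilon$-transitions directly to a single delay $\delta$ reaching $\tsucc^n(r)$ and appeal to the successor-region property to place every intermediate valuation $v+\delta'$ in some $\tsucc^m(r)$ with $m \leq n$. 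Both work; your version is shorter but leans harder on the reader to see that the chained delay is actually enabled (invariants maintained) and that no intermediate time instant escapes the regions $\tsucc^0(r),\ldots,\tsucc^n(r)$ --- exactly the facts the paper's inner induction makes explicit. For $\mu Y.\gamma'$, the paper stays entirely within the Tarski--Knaster intersection-of-pre-fixpoints characterization: it proves an auxiliary intersection lemma (\textbf{IR}), shows $con(\mathcal{F}_R) \subseteq \mathcal{F}_{\TA}$ for one inclusion, and for the other constructs, inside each pre-fixpoint $S_{\TA}$, its largest saturated subset and verifies by contradiction that this subset is again a pre-fixpoint. Your route is different and arguably cleaner: you first establish that the least fixpoint $L_f$ on the concrete side is itself saturated via transfinite iteration from $\emptyset$ (each successor stage is saturated because the induction hypothesis gives $f(S) = con(g(abs(S)))$ on saturated $S$, and limit stages are unions of saturated sets), and then the equality $L_f = con(L_g)$ falls out of two one-line fixpoint comparisons. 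What the paper's approach buys is that it never needs ordinal iteration or the fact that arbitrary unions of saturated sets are saturated (a mild strengthening of Lemma~\ref{lem:properties-of-saturation}(\ref{sublem:properties-of-saturation:closure-props}), which is only stated for binary unions --- you should note that this extension is immediate from the definition of saturation); what yours buys is the elimination of the \textbf{IR} lemma, the largest-saturated-subset construction, and the contradiction argument.
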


\begin{proof}
Fix $\Sigma$, $\AP$, $\TA = \genTA$, $\phi$, $R_\phi(\TA) = (Q_R, \ttrans{}_R, \mathcal{L}_R, Q_{0,R})$.
We actually prove that for all $\gamma \in \Phirelmunu(\AP_{\TA,\phi})$, and all $\theta$ such that $\theta(Y)$ is saturated with respect to $R_\phi(\TA)$ for every $Y \in \Var$, $\musemTAtheta{\gamma} = con (\musemrelphiTAtheta{\gamma})$.
The desired result then follows since $\phi \in \Phirelmunu(\AP_{\TA,\phi})$ and $\musemsymTAtheta{\phi}$ is defined as $\musemrelphiTAtheta{\phi}$.
Below, for terminological conciseness, we say that a set $S_{\TA} \subseteq \QTA$ is saturated when it is saturated with respect to $R_\phi(\TA)$ and call $\theta \in \Var \rightarrow 2^{\QTA}$ saturated if $\theta(Y)$ is saturated for all $Y \in \Var$.

The proof now proceeds by induction on the structure of $\gamma \in \Phirelmunu(\AP_{\TA,\phi})$.
So fix $\gamma$; the induction hypothesis states that for all (strict) subformulas $\gamma'$ of $\gamma$ and saturated $\theta$, $\musemTAtheta{\gamma'} = con(\musemrelphiTAtheta{\gamma'})$.  Now fix saturated $\theta$; we must prove that $\musemTAtheta{\gamma} = con (\musemrelphiTAtheta{\gamma})$.  The arguments uses a case analysis on $\phi$.

\begin{description}
\item[$\gamma \in \AP_{\TA,\phi}$.]
    There are two subcases to consider.  In the first, $\gamma \in \AP$.  From the definitions it can be seen that the following hold.
    \begin{align*}
    \musemTAtheta{\gamma}
    &= \{l \in L \mid \gamma \in \Lab(l) \} \times \clockvals_{\clocks}
    \\
    \musemrelphiTAtheta{\gamma}
    &= \{l \in L \mid \gamma \in \Lab(l) \} \times [\clockvals_{\clocks}]_{\TA,\phi}
    \end{align*}
    That $\musemTA\theta{\gamma} = con(\musemsymTAtheta{\gamma})$ is immediate.  In the second subcase $\gamma \in \Phi_a^{d'}(\mathcal{C}')$, where $d' = \max\{\bound(\TA),\bound(\phi)\}$ and $\mathcal{C}' = \CX \cup cs(\phi)$.  From the definitions we observe the following.
    \begin{align*}
    \musemTAtheta{\gamma}
    &= L \times \{v \in \clockvals_{\clocks} \mid v \models \gamma\}
    \\
    \musemrelphiTAtheta{\gamma}
    &= L \times \{r \in [\clockvals_{\clocks}]_{\TA,\phi} \mid r \models \gamma \}
    \end{align*}
    It is clear that $abs(\musemTAtheta{\gamma}) =  \musemsymTAtheta{\gamma}$.  Moreover, $L \times \{v \in \clockvals_{\clocks} \mid v \models \phi\}$ is saturated, and thus $\musemTAtheta{\phi} = con(abs(\musemTAtheta{\phi})) = con(\musemsymTAtheta{\phi})$.
\item[$\gamma \in \Var$.]
    In this case, $\theta$ is saturated, and thus $\theta(\gamma) = con(abs(\theta(\gamma)))$.  Then
    \[
    \musemTAtheta{\gamma}
    = \theta(\gamma)
    = con \left( abs \left( \theta(\gamma) \right) \right)
    = con \left( \musemrelphiTAtheta{\gamma} \right).
    \]
\item[$\gamma = \lnot\gamma'$.]
    In this case the induction hypothesis guarantees that for all satured $\theta$, $\musemTAtheta{\gamma'} = con(\musemrelphiTAtheta{\gamma'})$.  We must show the same result for $\gamma$.  So fix saturated $\theta$. We begin by noting that since $\musemTAtheta{\gamma'} = con(\musemrelphiTAtheta{\gamma'})$, Lemma~\ref{lem:properties-of-saturation}(\ref{sublem:properties-of-saturation:con-sat-sym}) guarantees that $\musemTAtheta{\gamma'}$ is saturated.  It then follows from the same lemma that $\QTA \setminus \musemTAtheta{\gamma'}$ is saturated.  Also, $abs(\QTA \setminus \musemTAtheta{\gamma'}) = Q_R \setminus \musemrelphiTAtheta{\gamma'}$.  These observations give the following.
    \begin{align*}
    \musemTAtheta{\gamma}
    &=
    \QTA \setminus \musemTAtheta{\gamma'}
    =
    con(abs(\QTA \setminus \musemTAtheta{\gamma'}))
    \\
    &=
    con(Q_R \setminus \musemrelphiTAtheta{\gamma'})
    =
    con \left( \musemrelphiTAtheta{\gamma} \right)
    \end{align*}
\item[$\gamma = \gamma_1 \lor \gamma_2$.]
    In this case the induction hypothesis guarantees that for all saturated $\theta$, $\musemTAtheta{\gamma_1} = con(\musemrelphiTAtheta{\gamma_1})$ and
    $\musemTAtheta{\gamma_2} = con(\musemrelphiTAtheta{\gamma_2})$.
    We must show the same result for $\gamma$.  So fix saturated $\theta$.  We first note that for any sets $S_1, S_2 \subseteq Q_R$, $con(S_1 \cup S_2) = con(S_1) \cup con(S_2)$.  We now reason as follows.
    \begin{align*}
    \musemTAtheta{\gamma}
    &=
    \musemTAtheta{\gamma_1} \cup \musemTAtheta{\gamma_2}
    =
    con \left( \musemrelphiTAtheta{\gamma_1} \right) \cup
    con \left( \musemrelphiTAtheta{\gamma_2} \right)
    \\
    &=
    con \left( \musemrelphiTAtheta{\gamma_1} \cup \musemrelphiTAtheta{\gamma_2} \right)
    = con \left( \musemrelphiTAtheta{\gamma} \right)
    \end{align*}
\item[$\gamma = \dia{K}\gamma'$.]
    In this case the induction hypothesis guarantees that for all saturated $\theta$, $\musemTAtheta{\gamma'} = con(\musemrelphiTAtheta{\gamma'})$.  We must show the same result for $\gamma$.  So fix saturated $\theta$.
    We begin by noting that Lemma~\ref{lem:properties-of-saturation}(\ref{sublem:properties-of-saturation:con-sat-sym}) guarantees that $\musemTAtheta{\gamma'}$ is saturated. Based on the semantics of $\Lrelmunu$ and Lemma~\ref{lem:properties-of-region-automata}(\ref{sublem:properties-of-region-automata:diamond}) it is the case that $\musemTAtheta{\gamma}$ is also saturated.  This lemma also ensures that $abs(\musemTAtheta{\gamma}) = \musemrelphiTAtheta{\gamma}.$  Therefore,
    \[
    \musemTAtheta{\gamma}
    =
    con(abs(\musemTAtheta{\gamma}))
    = con \left( \musemrelphiTAtheta{\gamma} \right).
    \]
\item[$\gamma = \exists_{\gamma_1} \gamma_2$.]
    In this case the induction hypothesis guarantees that for all saturated $\theta$, $\musemTAtheta{\gamma_1} = con(\musemrelphiTAtheta{\gamma_1})$ and
    $\musemTAtheta{\gamma_2} = con(\musemrelphiTAtheta{\gamma_2})$.
    These two facts together also imply that $\musemTAtheta{\gamma_1 \lor \gamma_2} = con(\musemrelphiTAtheta{\gamma_1 \lor \gamma_2})$.
    We must show that for all saturated $\theta$, $\musemTAtheta{\gamma} = con(\musemrelphiTAtheta{\gamma})$.
    To this end, fix saturated $\theta$.
    We first show that $\musemTAtheta{\gamma} \subseteq con(\musemrelphiTAtheta{\gamma})$.  So suppose $(l,v) \in \musemTAtheta{\gamma}$.
    It suffices to establish that $(l,r_v) \in \musemrelphiTAtheta{\gamma}$, where $r_v$ here is short-hand for region $[v]_{\TA,\phi}$, as in this case
    \[
    (l,v)
    \in con \left( \left\{ (l,r_v) \right\} \right)
    \subseteq con \left(\musemrelphiTAtheta{\gamma} \right).
    \]Since $(l,v) \in \musemTAtheta{\gamma}$ there is a $\delta_v \in \delays$ such that the following hold.
    \begin{enumerate}
        \item $\delta_v(l,v) \cap \musemTAtheta{\gamma_2} \neq \emptyset$
        \item For all $\delta' \in \delays$ such that $\delta' < \delta_v$, $\delta'(l,v) \cap \musemTAtheta{\gamma_1 \lor \gamma_2} \neq \emptyset$.
    \end{enumerate}
    To establish that $(l,r_v) \in \musemrelphiTAtheta{\gamma}$ we must show that there is $n \in \nats$ such that:
    \begin{itemize}
        \item $(l, r_v) \,(\ttrans{\varepsilon})^n\, (l, \tsucc^n(r_v))$;
        \item $(l, \tsucc^n(l, r_v)) \in \musemrelphiTAtheta{\gamma_2}$; and
        \item for all $m \in \nats$ with $m < n$, $(l, succ^m(r_v)) \in \musemrelphiTAtheta{\gamma_1}$.
    \end{itemize}
    Because of the properties of $\ttrans{\delta}$ and $\delta_v$, Lemma~\ref{lem:properties-of-region-automata}(\ref{sublem:properties-of-region-automata:varepsilon-trans}) guarantees the existence of
    $n' \in \nats$ such that $(l, r_v) \,(\ttrans{\varepsilon})^{n'}\, (l,\tsucc^{n'}(r_v))$, $(l, \tsucc^{n'}(r_v)) \in \musemrelphiTAtheta{\gamma_2}$
    and for all $m' < n'$, $(l, \tsucc^{m'}(r_v)) \in \musemrelphiTAtheta{\gamma_1 \lor \gamma_2}$.
    Let $n$ be the smallest number such that $(l,\tsucc^n(l,r_v)) \in \musemrelphiTAtheta{\gamma_2}$; note that $n \leq n'$.  It follows that for all $m < n$, $(l, \tsucc^m(l,r_v)) \in \musemrelphiTAtheta{\gamma_1}$, thus establishing that $(l,r_v) \in \musemrelphiTAtheta{\gamma}$.

    We now prove that $con(\musemrelphiTAtheta{\gamma}) \subseteq \musemTAtheta{\gamma}$.
    It suffices to show that for every $(l,r) \in \musemrelphiTAtheta{\gamma}$ and $v \in r$, $(l,v) \in \musemTAtheta{\gamma}$.  The proof uses an inductive argument on $n(l,r) \in \nats$, which is the constant, guaranteed by the definition of $\musemrelphiTAtheta{\gamma}$, such that $(l,r) \,(\ttrans{\varepsilon})^{n(l,r)}\, (l, \tsucc^{n(l,r)}(r))$, $(l, \tsucc^{n(l,r)}(r)) \in \musemrelphiTAtheta{\gamma_2}$, and for all $m < n(l,r)$, $(l, \tsucc^m(r)) \in \musemrelphiTAtheta{\gamma_1}$.
    To avoid confusion, in the rest of this (inner) inductive argument we will refer to the corresponding induction hypothesis as the inner induction hypothesis, to distinguish it from what we will call the outer induction hypothesis, which handles structural reasoning about the semantics of formulas.
    For the base case of this inner inductive argument, consider $(l,r) \in \musemrelphiTAtheta{\gamma}$ such that $n(l,r) = 0$.
    In this case $(l,r) \in \musemrelphiTAtheta{\gamma_2}$, and the outer induction hypothesis then guarantees that for all $v \in r$, $(l,v) \in \musemTAtheta{\gamma_2}$.
    To conclude that for all $v \in r$, $(l,v) \in \musemTAtheta{\gamma}$ we must give, for each $v$, a $\delta_v$ such that $\delta_v(l,v) \cap \musemTAtheta{\gamma_2} \neq \emptyset$ and such that for all $\delta' \in \delays$ with $\delta' < \delta_v$, $\delta'(l,v) \cap \musemTAtheta{\gamma_1} \neq \emptyset$.
    Setting $\delta_v = 0$ satisfies these requirements.

    For the induction step of the inner inductive argument, assume $n' \geq 0$.  The inner induction hypothesis guarantees that for all $(l',r') \in \musemrelphiTAtheta{\gamma}$ such that $n(l',r') = n'$ and $v' \in r'$, $(l',v') \in \musemTAtheta{\gamma}$.  We must now prove this result for all $(l,r) \in \musemrelphiTAtheta{\gamma}$ such that $n(l,r) = n'+1$.  So fix such an $(l,r)$. It follows that $(l,r) \,(\ttrans{\varepsilon})^{n'+1}\, (l,\tsucc^{n'+1}(r))$, that $(l, \tsucc^{n'+1}(r)) \in \musemrelphiTAtheta{\gamma_2}$, and that for all $m \in \nats$ such that $m < n'+1$, $(l,\tsucc^m(r)) \in \musemrelphiTAtheta{\gamma_1}$.
    Now pick $v \in r$.
    We must construct $\delta_v \in \delays$ such that $\delta_v(l,v) \cap \musemTAtheta{\gamma_2} \neq \emptyset$ and such that for all $\delta' \in \delays$ such that $\delta' < \delta$, $\delta'(l,v) \cap \musemTAtheta{\phi_1 \lor \phi_2} \neq \emptyset$.
    Since $(l,r) \in \musemrelphiTAtheta{\gamma}$ and $(l,r) \,(\ttrans{\varepsilon})^{n'+1}\, (l,\tsucc^{n'+1}(r))$ it follows that
    $(l,r) \in \musemrelphiTAtheta{\gamma_1}$,
    $(l,r) \ttrans{\varepsilon} (l,\tsucc(r)) \,(\ttrans{\varepsilon})^{n'}\, (l, \tsucc^{n'+1}(r))$,
    $(l,\tsucc(r)) \in \musemrelphiTAtheta{\gamma}$,
    and $n(l,\tsucc(r)) = n'$.
    The inner induction hypothesis then guarantees that for all $v' \in \tsucc(r)$, $(l,v') \in \musemTAtheta{\gamma}$.
    Based on the definitions of $\ttrans{\varepsilon}$ and $\tsucc(r)$, it also follows that there is $\delta_1 \in \delays$ such that $v + \delta_1 \in \tsucc(r)$ and $(l,v) \ttrans{\delta_1} (l, v+\delta_1)$, and such that for all $\delta'_1 \in \delays$ such that $\delta'_1 < \delta_1$, $v + \delta'_1 \in r \cup \tsucc(r)$.
    Also, since $(l,v+\delta_1) \in \tsucc(r)$ it follows from the inner induction hypothesis that $(l, v+\delta_1) \in \musemrelphiTAtheta{\gamma}$, meaning there exists $\delta_{2} \in \delays$ such that $\delta_{2}(l, v+\delta_1) \cap \musemTAtheta{\gamma_2} \neq \emptyset$ and such that for all $\delta'_2 \in \delays$ with $\delta'_2 < \delta_2$, $\delta'_2(l,v+\delta_1) \in \musemTAtheta{\gamma_1 \lor \gamma_2}$.
    Now take $\delta_v = \delta_1 + \delta_{2}$.  It is straightforward to see that $\delta_v(l,v) = \delta_{2}(l,v+\delta_1)$ and thus $\delta_v(l,v) \cap \musemTAtheta{\gamma_2} \neq \emptyset$.  We must now show that for all $\delta' \in \delays$ such that $\delta' < \delta_v$, $\delta'(l,v) \in \musemTAtheta{\gamma_1 \lor \gamma_2}$.  There are two cases to consider.  In the first, $\delta_1 \leq \delta' < \delta_v$.  In this case it is easy to see that $\delta'(l,v) = (\delta_v - \delta')(l,v + \delta_1)$, and the result holds immediately.  In the second case, $0 \leq \delta' < \delta_1$.
    Pick such a $\delta'$;
    we know that $\delta'(l,v) = \{ (l, v+\delta') \}$ and that
    either $(l,v+\delta') \in r$ or $(l,v+\delta') \in \tsucc(r)$.
    In the former case we have established that $(l,v+\delta') \in \musemTAtheta{\gamma_1}$, and thus $\delta'(l,v) \cap \musemTAtheta{\gamma_1 \lor \gamma_2} \neq \emptyset$.
    In the latter case the inner induction hypothesis allows us to conclude that $(l,v+\delta') \in \musemTAtheta{\gamma}$; it is easy to see in this case that $(l,v+\delta') \in \musemTAtheta{\gamma_1 \lor \gamma_2}$, and thus $\delta'(l,v+\delta) \cap \musemTAtheta{\gamma_1 \lor \gamma_2} \neq \emptyset$.
    Thus $\delta_v$ has the required properties, $(l,v) \in \musemTAtheta{\gamma}$, and the inner inductive argument is complete.

\item[$\gamma = z.\gamma'$.]
    Here the induction hypothesis guarantees that for all saturated $\theta$, $\musemTAtheta{\gamma'} = con(\musemrelphiTAtheta{\gamma'})$.  We must show the same result for $\gamma$.  So fix saturated $\theta$.
    We first note that Lemma~\ref{lem:properties-of-saturation}(\ref{sublem:properties-of-saturation:con-sat-sym}) guarantees that $\musemTAtheta{\gamma'}$ is saturated. Based on the semantics of $\Lrelmunu$ and Lemma~\ref{lem:preservation-of-eq}(\ref{sublem:preservation-of-eq:reset}) it is the case that $\musemTAtheta{\gamma}$ is also.  This lemma moreover ensures that $abs(\musemTAtheta{\gamma}) = \musemrelphiTAtheta{\gamma}.$  Therefore,
    \[
    \musemTAtheta{\gamma}
    =
    con(abs(\musemTAtheta{\gamma}))
    = con \left( \musemrelphiTAtheta{\gamma} \right).
    \]
\item[$\gamma = \mu Y.\gamma'$.]
    Before proving this case we first remark on a simple result, which henceforth we will refer to as \textbf{IR} (``intersection result''), from set theory.
    \begin{quote}
        \textbf{(IR)}  Let $X$ be a set, and let $\mathcal{F}_1, \mathcal{F}_2 \subseteq 2^X$ be such that for every $S_2 \in \mathcal{F}_2$ there is $S_1 \in \mathcal{F}_1$ such that $S_1 \subseteq S_2$.  Then $\bigcap \mathcal{F}_1 \subseteq \bigcap \mathcal{F}_2$.
    \end{quote}
    \textbf{IR} in effect says that if every member set of $\mathcal{F}_2$ has a subset in $\mathcal{F}_1$ then the intersection of the sets in $\mathcal{F}_1$ is a subset of the intersection of $\mathcal{F}_2$.  A simple consequence of this result is that if $\mathcal{F}_2 \subseteq \mathcal{F}_1$ then $\bigcap \mathcal{F}_1 \subseteq \bigcap \mathcal{F}_2$.  The proof of \textbf{IR} is straightforward.

    We now continue with our inductive argument.  The induction hypothesis guarantees that for all saturated $\theta$, $\musemTAtheta{\gamma'} = con(\musemrelphiTAtheta{\gamma'})$.  We must show the same result for $\gamma$.  So fix saturated $\theta$.
    We begin by noting that Lemma~\ref{lem:properties-of-saturation}(\ref{sublem:properties-of-saturation:con-sat-sym}) guarantees that $con(S_R)$ is saturated for any $S_R \subseteq Q_R$; this means $\theta[Y := con(S_R)]$ is saturated as well.
    (For notational simplicity, in what follows we abbreviate $\theta[Y := S]$, where $S \subseteq \QTA$, as $\theta[S]$.)
    From the induction hypothesis we have that
    $\musemTA{\gamma'}{\theta[con(S_R)]} =
    con(\musemrelphiTA{\gamma'}{\theta[con(S_R)]})$
    for any $S_R \subseteq Q_R$.
    We prove that $\musemTAtheta{\gamma} = con(\musemrelphiTAtheta{\gamma})$ by showing that $\musemTAtheta{\gamma} \subseteq con(\musemrelphiTAtheta{\gamma})$ and $con(\musemrelphiTAtheta{\gamma}) \subseteq \musemTAtheta{\gamma}$.
    In what follows we use the following families of subsets.
    \begin{align*}
    \mathcal{F}_{\TA}
    &=  \left\{S_{\TA} \subseteq \QTA \mid \musemTA{\gamma'}{\theta[S_{\TA}]} \subseteq S_{\TA} \right\}
    &&  \subseteq 2^{\QTA}
    \\
    \mathcal{F}_R
    &=  \left\{ S_R \subseteq Q_R \mid \musemrelphiTA{\gamma'}{\theta[con(S_R)]} \subseteq S_R \right\}
    &&  \subseteq 2^{Q_R}
    \\
    con(\mathcal{F}_R)
    &=  \left\{ con(S_R) \mid S_R \in \mathcal{F}_R \right\}
    &&  \subseteq 2^{\QTA}
    \end{align*}
    By definition, $\musemTAtheta{\gamma} = \bigcap \mathcal{F}_{\TA}$ while $\musemrelphiTAtheta{\gamma} = \bigcap \mathcal{F}_R$.

    To prove that $\musemTAtheta{\gamma} \subseteq con(\musemrelphiTAtheta{\gamma})$ we will show that $con(\musemrelphiTAtheta{\gamma}) = \bigcap con(\mathcal{F}_R)$ and that $con(\mathcal{F}_R) \subseteq \mathcal{F_\TA}$.
    \textbf{IR} will then guarantee that $\bigcap \mathcal{F}_{\TA} \subseteq \bigcap con(\mathcal{F}_R)$, and thus $\musemTAtheta{\gamma} \subseteq con(\musemrelphiTAtheta{\gamma})$.
    That $con(\musemrelphiTAtheta{\gamma}) = \bigcap con(\mathcal{F}_R)$ is a consequence of Lemma~\ref{lem:properties-of-saturation}(\ref{sublem:properties-of-saturation:closure-props}),
    which guarantees that for any family $\mathcal{F} \subseteq 2^{Q_R}$, $con(\bigcap \mathcal{F}) = \bigcap \{con(S) \mid S \in \mathcal{F} \}$.
    To see that $con(\mathcal{F}_R) \subseteq \mathcal{F}_{\TA}$ it suffices to show that for any $S_R \in \mathcal{F}_R$, $\musemTA{\gamma'}{\theta[con(S_R)]} \subseteq con(S_R)$,
    for it then follows that $con(S_R) \in \mathcal{F}_{\TA}$.
    From the induction hypothesis we have $\musemTA{\gamma'}{\theta[con(S_R)]} = con(\musemrelphiTA{\gamma'}{\theta[con(S_R)]})$.
    Since $\musemrelphiTA{\gamma'}{\theta[con(S_R)]} \subseteq S_R$, the definition of $con$ gives that $con(\musemrelphiTA{\gamma'}{\theta[con(S_R)]}) \subseteq con(S_R)$, thus establishing the desired result.

    We now show that $con(\musemrelphiTAtheta{\gamma}) \subseteq \musemTAtheta{\gamma}$.
    Since we have already established that $con(\musemrelphiTAtheta{\gamma}) = \bigcap con(\mathcal{F}_R)$, it suffices to show that for all $S_{\TA} \in \mathcal{F}_{\TA}$ there is $S_R \subseteq Q_R$ such that $con(S_R) \subseteq S_{\TA}$,
    as then \textbf{IR} guarantees that $\bigcap con(\mathcal{F}_R) \subseteq \bigcap \mathcal{F}_{\TA} = \musemTAtheta{\gamma}$.
    This observation in turn follows if for any $S_{\TA} \in \mathcal{F}_{\TA}$ we can construct a saturated set $S'_{\TA} \in \mathcal{F}_{\TA}$ such that $S'_{\TA} \subseteq S_{\TA}$.  To see why, note that for every saturated $S'_{\TA} \in \mathcal{F}_{\TA}$ there must be $S'_R \subseteq Q_R$ such that $S'_{\TA} = con(S'_R)$; since the induction hypothesis guarantees that $\musemTA{\gamma'}{\theta[S'_{\TA}]} = con(\musemrelphiTA{\gamma'}{\theta[S'_{\TA}]})$, we have that $con(\musemrelphiTA{\gamma'}{\theta[S'_{\TA}]}) \subseteq S'_{\TA}$, whence $abs(con(\musemrelphiTA{\gamma'}{\theta[S'_{\TA}]})) \subseteq abs(S'_{\TA})$, and thus
    \[
    \musemrelphiTA{\gamma'}{\theta[S'_{\TA}]}
    =           abs(con(\musemrelphiTA{S'_R}{\theta[S'_{\TA}]}))
    \subseteq   abs(S'_{\TA})
    =           S'_R
    \]
    and thus $S'_R \in \mathcal{F}_R$, $con(S'_R) \in con(\mathcal{F}_R)$, and $con(S'_R) = S'_{\TA} \in \mathcal{F}_{\TA}$.
    So fix $S_{\TA} \in \mathcal{F}_{\TA}$; we must construct saturated $S'_{\TA} \in \mathcal{F}_{\TA}$ such that $S'_{\TA} \subseteq S_{\TA}$.  Define
    \[
    S'_{\TA} = con \left( \{ q \in Q_R \mid con(\{q\}) \subseteq S_{\TA} \} \right).
    \]
    Clearly $S'_{\TA} \subseteq S_{\TA}$ and is saturated; indeed, it is the unique largest subset of $S_{\TA}$ that is saturated.  We now show that $S'_{\TA} \in \mathcal{F}_{\TA}$ by establishing that $\musemTA{\gamma'}{\theta[S'_{\TA}]} \subseteq S'_{\TA}$.
    To begin with, we note that $\theta[S'_{\TA}]$ is saturated, meaning the induction hypothesis guarantees that $\musemTA{\gamma'}{\theta[S'_{\TA}]} = con(\musemrelphiTA{\gamma'}{\theta[S'_{\TA}]})$.  We now reason by contradiction; so assume that $S''_{\TA} = \musemTA{\gamma'}{\theta[S'_{\TA}]} \setminus S'_{\TA} \neq \emptyset$.
    Since $\musemTA{\gamma'}{\theta[S'_{\TA}]}$ and $S'_{\TA}$ are both saturated, Lemma~\ref{lem:properties-of-saturation}(\ref{sublem:properties-of-saturation:closure-props}) guarantees that $S''_{\TA}$ is also saturated.
    Since $S'_{\TA} \subseteq S_{\TA}$
    monotonicity also guarantees that $\musemTA{\gamma'}{\theta[S'_{\TA}]} \subseteq \musemTA{\gamma'}{\theta[S_{\TA}]}$, so we have that $S''_{\TA} \subseteq \musemTA{\gamma'}{\theta[S_{\TA}]} \subseteq S_{\TA}$, which is a contradiction, as $S'_{\TA} \cup S''_{\TA}$ would in this case be a saturated subset of $S_{\TA}$, contradicting the fact that $S'_{\TA}$ is the maximum such set.  Therefore $\musemTA{\gamma'}{\theta[S'_{\TA}]} \subseteq S'_{\TA}$, and the proof is complete.\qedhere
\end{description}
\end{proof}

This lemma asserts that, assuming $\theta(Y)$ is saturated for all $Y$, a state in $\TA$ satisfies $\phi$ iff the corresponding state in region automaton $R_\phi(\TA)$ satisfies $\phi$.
The net effect of this theorem is that we may freely move between the standard and symbolic semantics of $\Lrelmunu$, provided the environments used to interpret free variables are appropriately saturated.

\subsection{Other Timed Modal Mu-Calculi}\label{sec:other-timed-modal-mu-calculi}

We now list several modal mu-calculi that have been presented in the literature, and whose expressiveness we will assess \emph{vis \`a vis} each other as well as $\Lrelmunu$.  In what follows we fix time safe sort $\Sigma$, clock-safe atomic proposition set $\AP$, and countably infinite set $\Var$ of propositional variables.

\paragraph{Logic $\Lnu$.}
$\Lnu$ was introduced in~\cite{laroussinie-from-timed-1995} as a logic for defining so-called \emph{characteristic formulas} of timed automata.  The formulas of $\Lnu$ may be obtained by modifying $\Lrelmunu$ as follows.

\begin{definition}[$\Lnu$ syntax]\label{def:Lnu-syntax}
The formulas of $\Lnu$ are generated by the following grammar, where $Y \in \Var$; $c \in \nats
$; ${\bowtie} \in \{<,\leq,>,\geq, =\}$; $ x, y, z \in \clocksF$; and $K \subseteq \Sigma$.
\[
\phi ::=
x \bowtie c
\mid x - y \,\bowtie\, d
\mid Y
\mid \phi \lor \phi
\mid \phi \land \phi
\mid \dia{K} \phi
\mid [K] \phi
\mid \exists \phi
\mid \forall \phi
\mid z.\phi
\mid \nu Y.\phi\footnotemark
\]
\end{definition}
\footnotetext{Technically, $\Lnu$ as presented in~\cite{laroussinie-from-timed-1995} does not include the $\nu Y.$ construct.  Rather, each formula is associated with a set of equations between propositional variables and formulas, with the intended meaning that the weakest solution of the equations is intended.  It is well-known that this equational representation can be converted into single formulas involving the $\nu Y.$ operator, however.}
Syntactically, $\Lnu$ differs from $\Lrelmunu$ in the following key respects.
\begin{enumerate}
\item
    There are no atomic propositions, and clock constraints in $\Lnu$ may use $=$ explicitly but not refer to any automaton clocks in $\clocksA$.
\item
    The binary $\exists_{\phi_1} \phi_2$ and $\forall_{\phi_1} \phi_2$ constructs are replaced by unary versions $\exists \phi$ and $\forall \phi$.
\item
    The least fixpoint operator $\mu X.$ is omitted.
\end{enumerate}
Note that every clock constraint appearing in $\Lnu$ that does not involve $=$ is an element of $\Phi_a(\clocksF)$, and thus is also a $\Lrelmunu$ formula. Those involving $=$ can be encoded in the obvious manner using a conjunction of inequalities.

We now give the semantics of $\Lnu$ in the same fashion as we did for $\Lrelmunu$.  Most of the operators are the same, and we omit these cases below.

\begin{definition}[$\Lnu$ semantics]\label{def:Lnu-semantics}
Let $\TA$ be a timed automaton over $\Sigma$ and $\AP$, with $\TS{\TA} = \ttsTA$, and let $\theta \in \Var \to 2^{\QTA}$.  Also let $\phi$ be an $\Lnu$ formula.  Then $\musemTAtheta{\phi} \subseteq 2^{\QTA}$ is defined inductively as follows.
\begin{align*}
\musemTAtheta{\exists \phi}
    &=  \{  q \in \QTA \mid
            \exists \delta \in \delays
            \colon
            \delta(q) \in \musemTAtheta{\phi}
        \}
\\
\musemTAtheta{\forall \phi}
    &=  \{  q \in \QTA \mid
            \forall \delta \in \delays
            \colon q \ttransTA{\delta} \implies
            \delta(q) \in \musemTAtheta{\phi}
        \}
\end{align*}
All other operators are interpreted as in $\Lrelmunu$ (Definition~\ref{def:Lrelmunu-semantics} and following).
\end{definition}
Semantically, the $\Lnu$ operator $\exists$ denotes a notion of ``eventually":  state $q$ satisfies $\exists \phi$ iff a state $q'$ reachable after some time-elapse $\delta$ from $q$ satisfies $\phi$.  $\Lnu$ operator $\forall$ can be seen to be the dual of $\exists$, and corresponds to a notion of ``always":  $q$ satisfies $\forall \phi$ iff every state reachable from $q$ via some time elapse satisfies $\phi$.

\paragraph{Logic $\Lmunu$.}
$\Lmunu$ is a generalization of a logic $L_\mu^t$ introduced in~\cite{sokolsky-local-model-1995}, which in turn extended $\Lnu$ with atomic propositions and least fixpoints while disallowing $=$ as a primitive clock-constraint operation.  $L_\mu^t$ also disallows so-called \emph{alternating} fixpoints.  We eliminate this restriction in order to obtain $\Lmunu$, which allows formulas of arbitrary alternation depth.  The formal definition of the formulas of $\Lmunu$ is as follows.

\begin{definition}[$\Lmunu$ syntax]\label{def:Lmunu-syntax}
The formulas of $\Lmunu$ extend those of $\Lnu$ (Definition~\ref{def:Lnu-syntax}) as follows, where $p \in \AP$.
\[
\phi ::= [ \text{operators of $\Lnu$} ]
\mid    p
\mid    \lnot p
\mid    \mu Y.\phi
\]
\end{definition}

The semantics of $\Lmunu$ extends that of $\Lnu$ in the obvious fashion by interpreting the additional operators in the same way they are in the semantics of $\Lrelmunu$.  If $\phi$ is a formula in $\Lmunu$ then we write $\musemTAtheta{\phi}$ for the set of states in $\TA$ satisfying $\phi$ in the context of $\theta$.

\paragraph{Logic $\Lc$.}
$\Lc$ modifies $\Lnu$ by making minor changes in the syntax of clock constraints, which we do not adopt here, and by replacing the unary $\exists$ and $\forall$ operators with versions of \emph{strong and weak until}~\cite{bouyer-timed-modal-2011}.  Those operators were notated $[\delta\rangle_s$ and $[\delta\rangle_w$ in that paper.  To avoid confusion with this paper's use of $\delta$ as a time-elapse metavariable, we instead use $\bouyerstrong$ and $\bouyerweak$.  The formal syntax of $\Lc$ is the following.

\begin{definition}[$\Lc$ syntax]\label{def:Lc-syntax}
The formulas of $\Lc$ modify those of $\Lnu$ (Definition~\ref{def:Lnu-syntax}) as follows, where $p \in \AP$.
\[
\phi ::= [ \text{operators of $\Lnu$ except $\exists$ and $\forall$} ]
\mid    \phi \bouyerstrong \phi
\mid    \phi \bouyerweak \phi
\]
\end{definition}

The semantics of $\Lc$ is given as follows; it differs from~\cite{bouyer-timed-modal-2011}, for reasons we explain below.
\begin{definition}[$\Lc$ semantics]\label{def:Lc-semantics}
Let $\TA$ be a timed automaton over $\Sigma$ and $\AP$, with $\TS{\TA} = \ttsTA$, and let $\theta \in \Var \to 2^{\QTA}$.  Also let $\phi$ be an $\Lc$ formula.  Then $\musemTAtheta{\phi} \subseteq 2^{\QTA}$ is defined inductively as follows.
\begin{align*}
\musemTAtheta{\phi_1 \bouyerstrong \phi_2}
    &=  \{  q \in \QTA \mid
            \exists \delta \in \delays
            \colon
            \delta(q) \in \musemTAtheta{\phi_2}\;\land
    \\
    &   \qquad\quad
        \forall \delta' \in \delays
        \colon \delta' < \delta \implies
        \delta'(q) \in \musemTAtheta{\phi_1 \lor \phi_2} \}
\\
\musemTAtheta{\phi_1 \bouyerweak \phi_2}
    &=  \musemTAtheta{\phi_1 \bouyerstrong \phi_2} \;\cup
\\
    &\qquad\quad
        \{  q \in \QTA \mid
            \forall \delta \in \delays
            \colon q \ttransTA{\delta} \implies
            \delta(q) \in \musemTAtheta{\phi_1}
        \}
\end{align*}
All other operators are interpreted as in $\Lnu$ (Definition~\ref{def:Lnu-semantics}).
\end{definition}

In~\cite{bouyer-timed-modal-2011} the operators $\bouyerstrong$ and $\bouyerweak$ are given a slightly different semantics.  In particular, $\bouyerstrong$ is interpreted there as $\bouyerstrong'$ given below.
\begin{align*}
\musemTAtheta{\phi_1 \bouyerstrong' \phi_2}
    &=  \{  q \in \QTA \mid
            \exists \delta \in \delays
            \colon
            \delta(q) \in \musemTAtheta{\phi_2} \;\land
    \\
    &   \qquad\quad
        \forall \delta' \in \delays
        \colon \delta' < \delta \implies
        \delta'(q) \in \underline{\musemTAtheta{\phi_1}} \}
\end{align*}
That is~\cite{bouyer-timed-modal-2011} defines $\bouyerstrong'$ so that $\phi_1$ must hold until the point at which $\phi_2$ becomes true, rather than allowing either $\phi_1$ or $\phi_2$ to be true, as in our definition of $\bouyerstrong$.
\begin{figure}
    \begin{subfigure}[t]{0.45\textwidth}
        \centering
        \begin{tikzpicture}[initial text=]
            \node[state,initial,align=center,minimum size=2.5cm] (l) {$\begin{array}{c}l \\ \\ x \geq 0\end{array}$};
        \end{tikzpicture}
    \caption{Timed automaton $\TA$, with location $l$ whose invariant is $x \geq 0$.}
    \label{fig:timed-automaton-TA}
    \end{subfigure}
    \hfill
    \begin{subfigure}[t]{0.45\textwidth}
        \centering
        \begin{tikzpicture}[initial text=, node distance=2.5cm]
            \node[state,initial by arrow,align=center] (lr1) {$(l,r_1)$};
            \node[state, below of=lr1] (lr2) {$(l,r_2)$};
            \draw
                (lr1) edge[left] node{$\varepsilon$} (lr2)
                (lr2) edge[loop right,right] node{$\varepsilon$} (lr2);
        \end{tikzpicture}
    \caption{Region automaton $R_{\{x\},0}(\TA)$; $r_1 = \{v \in \clockvals_{\clocks} \mid v(x) = 0\}$, and $r_2 = \{v \in \clockvals_{\clocks} \mid v(x) > 0\}$.}
    \label{fig:region-automaton-TA}
    \end{subfigure}
\caption{Timed automaton and corresponding region automaton.}
\label{fig:bouyer-semantics-issue}
\end{figure}
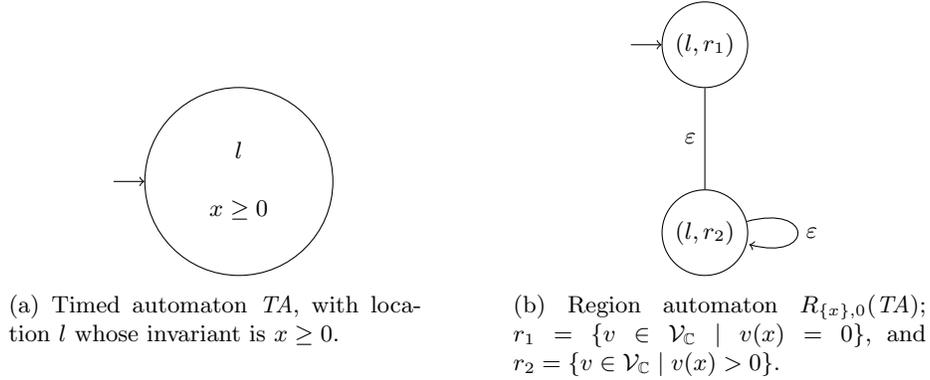
Unfortunately, that definition is inconsistent with the region-graph construction.  In particular, consider the timed automaton $\TA$ given in Figure~\ref{fig:timed-automaton-TA}; it consists of a single location $l$ and has a single clock $x$.  Note that $\bound(\TA) = 0$.  Now consider $\Lc$ formulas $\phi = (x \leq 0) \bouyerstrong (x > 0)$ and $ \phi' = (x \leq 0) \bouyerstrong' (x > 0)$.  Note that with obvious adaptations for $L_c$ to the functions $\bound$ and $cs$ we have that $\bound(\phi) = \bound(\phi') = 0$ and $cs(\phi) = cs(\phi') = \{x\}$.  Thus $R_{\TA,\phi} = R_{\TA,\phi'}$; this region automaton is given in Figure~\ref{fig:region-automaton-TA}.  From the definitions it is clear that $(l,\initval_\clocks) \in \musemTAtheta{\phi}$; however, $(l,\initval_\clocks) \not\in \musemTAtheta{\phi'}$, since no matter which time $\delta > 1$ is selected to witness the truth of $x > 1$, there are delays $\delta'$ in the range $1 < \delta' < \delta$ that cause the violation of $x \leq 1$.  It should be noted that, based on the region-graph construction, one would expect $(l,r_1) \in \musemsymTAtheta{\phi'}$, but this would violate the version of Lemma~\ref{lem:region-graph-consistency-Lrelmunu} for $\Lc$.

\paragraph{Logic $\Tmu$.}
$\Tmu$~\cite{henzinger-symbolic-model-1994} differs from the previous mu-calculi in that it does not distinguish among action labels on action transitions, and it includes a single modality that combines time and action behavior.  The formal syntax of $\Tmu$ we consider is as follows.
\begin{definition}[$\Tmu$ syntax]\label{def:Tmu-syntax}
The formulas of $\Tmu$ are generated by the following grammar, where $p \in \AP$, $cc \in \Phi_a(\clocks)$ and $Y \in \Var$.
\[
\phi ::=
p
\mid cc
\mid Y
\mid \lnot\phi
\mid \phi \lor \phi
\mid \phi_1 \rhd \phi_2
\mid z.\phi
\mid \mu Y.\phi
\]
Formulas must also satisfy the following restriction on the use of $\lnot$: in any formula $\mu Y.\phi$ every free occurrence of $Y$ in $\phi$ must be in the scope of an even number of negations in $\phi$.
\end{definition}

The semantics of $\Tmu$ may be given as follows.
\begin{definition}[$\Tmu$ semantics]\label{def:Tmu-semantics}
Let $\TA$ be a timed automaton over $\Sigma$ and $\AP$, with $\TS{\TA} = \ttsTA$, and let $\theta \in \Var \to 2^{\QTA}$.  Also let $\phi$ be an $\Tmu$ formula.  Then $\musemTAtheta{\phi} \subseteq 2^{\QTA}$ is defined inductively as follows.
\begin{align*}
&\musemTAtheta{\phi_1 \rhd \phi_2}\\
&=  \{ q \in \QTA \mid
        \exists \delta \in \delays, q' \in \delta(q), q'' \in \QTA
        \colon
        q' \ttrans{\Sigma} q''
        \land q'' \in \musemTAtheta{\phi_2} \;\land
        \\
&\qquad\quad
            \forall \delta' \in \delays
            \colon \delta' \leq \delta \implies
            \delta'(q) \in \musemTAtheta{\phi_1 \vee \phi_2}
        \}
\end{align*}
All other operators are interpreted as in $\Lrelmunu$ (Definition~\ref{def:Lrelmunu-semantics}).
\end{definition}

The $\rhd$ operator may be thought of as a ``timed next-step" operator.  Intuitively, a state satisfies $\phi_1 \rhd \phi_2$ if one of its time successors has an action transition  whose target state satisfies $\phi_2$ and every intermediate time successor (including this one) satisfies $\phi_1$ or $\phi_2$.

\section{Mu-Calculus Expressiveness Results} \label{sec:mu-calculus-expressiveness-results}

This section now establishes relative expressiveness results among the mu-calculi presented in the previous section.  We first precisely define the notions of relative expressiveness we will use.

\begin{definition}[Relative expressiveness]\label{def:relative-expressiveness}
Let $L_1$ and $L_2$ be logics such that for any formula $\phi$ in either $L_1$ or $L_2$, and timed automaton $\TA$ with environment $\theta$, $\musemTAtheta{\phi} \subseteq \QTA$ is the set of states in $\TS{\TA} = \ttsTA$ satisfying $\phi$ with respect to $\TA$ and $\theta$.
\begin{enumerate}
\item
    $L_1$ is \emph{no more expressive} than $L_2$ (equivalently, $L_2$ is \emph{at least as expressive} as $L_1$), notation $L_1 \subseteq L_2$, iff for every formula $\phi_1$ of $L_1$ there exists a formula $\phi_2$ of $L_2$ such that for every timed automaton $\TA$ and environment $\theta$, $\musemTAtheta{\phi_1} = \musemTAtheta{\phi_2}$.  We write $L_1 \not\subseteq L_2$ if it is not the case that $L_1 \subseteq L_2$.
\item
    $L_1$ and $L_2$ are \emph{equi-expressive}, notation $L_1 = L_2$, iff $L_1 \subseteq L_2$ and $L_2 \subseteq L_1$.
\item
    $L_1$ is \emph{strictly less expressive} than $L_2$ (equivalently, $L_2$ is \emph{strictly more expressive} than $L_1$), notation $L_1 \subsetneq L_2$, iff $L_1 \subseteq L_2$ and $L_2 \not\subseteq L_1$.
\item
    $L_1$ and $L_2$ are \emph{incomparable}, notation $L_1 \mathop{\#} L_2$, iff $L_1 \not\subseteq L_2$ and $L_2 \not\subseteq L_1$.
\end{enumerate}
\end{definition}

Figure~\ref{fig:expressiveness-results} summarizes the results for the five variants of the timed modal mu-calculi given above.  In the diagram, if there is a (directed) path from one logic to another then it means the first is strictly less expressive than the latter.  If two logics are not connected by a path then they are of incomparable expressive power.  Note that $\Lrelmunu$ is strictly more expressive than the other mu-calculi.  The rest of the section proves these results.

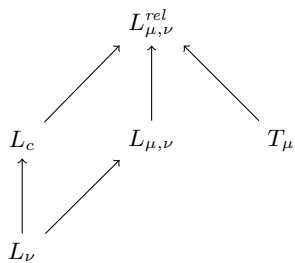
\begin{figure}
    \centering
    \begin{tikzpicture}
    \node   (Lrelmunu)  {$\Lrelmunu$};
    \node   (Lmunu)     [below=of Lrelmunu] {$\Lmunu$};
    \node   (Lc)        [left=of Lmunu]     {$\Lc$};
    \node   (Tmu)       [right=of Lmunu]    {$\Tmu$};
    \node   (Lnu)       [below=of Lc]       {$\Lnu$};
    \draw[<-] (Lrelmunu.south) -- (Lmunu.north);
    \draw[<-] (Lrelmunu.south west) -- (Lc.north east);
    \draw[<-] (Lrelmunu.south east) -- (Tmu.north west);
    \draw[<-] (Lc.south) -- (Lnu.north);
    \draw[<-] (Lmunu.south west) -- (Lnu.north east);
    \end{tikzpicture}
    \caption{Expressiveness results for timed modal mu-calculi.  If there is an edge from $L_1$ to $L_2$ then $L_1 \subsetneq L_2$.  If no path connects $L_1$ and $L_2$ then $L_1 \mathop{\#} L_2$.}
    \label{fig:expressiveness-results}
\end{figure}

\subsection{$\Lnu \subsetneq \Lc$}

A closely related result is established in~\cite{bouyer-timed-modal-2011}, albeit for a different semantics of $\bouyerstrong$ and $\bouyerweak$.  However, the arguments given in that paper can be adapted to the semantics given here.  Details are omitted.

\subsection{$\Lnu \subsetneq \Lmunu$}

That $\Lnu \subseteq \Lmunu$ is immediate, as every $\Lnu$ formula is a $\Lmunu$ formula, modulo the obvious translation of clock constraints involving $=$ that the former logic allows.

To show that $\Lnu \subsetneq \Lmunu$ we must show that $\Lmunu \not\subseteq \Lnu$.  Intuitively, this result is due to the fact that $\Lmunu$ includes a least fixpoint operator as well as a greatest fixipoint; more formally, it derives from a theorem of Bradfield~\cite{BRADFIELD1998133}, showing that the alternation-depth hierarchy for the modal mu-calculus is strict, even when formulas are interpreted over finite-state LTSs.
In particular, we define how to convert arbitrary finite-state LTSs into timed automata in a way that preserves satisfaction of so-called \emph{time-free} mu-calculus formulas, then apply Bradfield's result to arrive at the desired conclusion.  We first give the the LTS-to-TA translation.

\begin{definition}[$\TA_\mathcal{M}$]
Let $\Sigma$ be a time-safe sort and $\AP$ a set of atomic propositions, and let $\mathcal{M} = (Q, \to{}, \Lab, Q_0)$ be a LTS over $\Sigma$ and $\AP$ such that $|Q| < \infty$.  Then the \emph{pure timed automaton}, $\TAM$, has form $(Q, Q_0, \{x\}, I_\mathcal{M}, E_\mathcal{M}, \Lab)$, where:
\begin{itemize}
    \item $x \in \clocksA$;
    \item $I_{\mathcal{M}}(q) = x \leq 0$ for all $q \in Q$;
    \item $(q, a, \phi, \mathcal{C}, q') \in E_\mathcal{M}$ iff $q \ttrans{a} q'$ in $\mathcal{M}$, $\phi = \lgtrue$, and $\mathcal{C} = \emptyset$.
\end{itemize}
\end{definition}
In $\TAM$ the locations are taken to be the states of $\mathcal{M}$, of which there are only finitely many, and there is a single clock $x$ that is never allowed to advance because of the invariant ($x \leq 0$) associated with each location.  Each edge in $\TAM$ corresponds to a transition in $\mathcal{M}$, with transition guard $\lgtrue$ and reset set $\emptyset$.

From the definition of $\TS{\TAM} = \ttsTAM$ it is easy to see that:
\begin{itemize}
    \item $\QTAM = \{ (q, \nu) \in Q \times \clockvals_\clocks \mid \nu(x) = 0 \}$;
    \item $(q,\nu) \ttransTAM{a} (q',\nu')$ for $a \in \Sigma$ iff $q \ttrans{a} q'$ and $\nu = \nu'$; and
    \item $(q,\nu) \ttransTAM{\delta} (q',\nu')$ for $\delta \in \delays$ iff $q = q'$, $\nu = \nu'$, and $\delta = 0$.
\end{itemize}

We now argue that $\Lmunu \not\subseteq \Lnu$ as follows.  First, define the (untimed) modal mu-calculus to be the sublanguage of $\Lnu$ obtained by disallowing use of clock constraints, freeze quantification, $\exists$ and $\forall$.  The semantics of the (untimed) modal mu-calculus with respect to $\mathcal{M}$ and $\theta \in \Var \to 2^Q$ may be given as $\musem{\phi}{\mathcal{M}}{\theta} \subseteq Q$ in the standard fashion.  If $\theta \in \Var \to 2^Q$, then define $\theta_{\TAM} \in \Var \to 2^{\QTAM}$ by
$\theta_{\TAM}(Y) = \{(q,\nu) \in \QTAM \mid q \in \musem{\phi}{\mathcal{M}}{\theta} \land \nu(x) = 0\}$.
It can also be established that for any untimed formula $\phi$ and $\theta \in \Var \to 2^Q$,
\[
\musem{\phi}{\TAM}{\theta_{\mathcal{M}}} =
\{ (q,\nu) \in \QTAM \mid q \in \musem{\phi}{\mathcal{M}}{\theta} \land \nu(x) = 0\}.
\]
Intuitively, this observation shows that the semantics of untimed modal mu-calculus formulas, when interpreted with respect to $\TAM$, coincides in a very precise sense with the semantics of such a formula when interpreted directly with respect to the LTS $\mathcal{M}$.  The desired expressiveness result now follows from the fact that in the untimed fragment of $\Lnu$, only formulas of alternation-depth 1 may be defined, where as formulas of arbitrary alternation depth may be given in the untimed fragment of $\Lmunu$.  Bradfield's result establishes that the untimed fragment of $\Lmunu$ is strictly more expressive than the untimed fragment of $\Lnu$ over finite-state labeled transition system, and our previous observations then allow us to conclude that $\Lmunu$ is strictly more expressive than $\Lnu$.

\subsection{$\Lmunu \mathop{\#} \Lc$}

This result uses very similar arguments to those establishing that $\Lc \not\subseteq \Lnu$ and $\Lmunu \not\subseteq \Lnu$.
In particular, the $\Lc$ formula constructed in~\cite{bouyer-timed-modal-2011} that cannot be expressed in $\Lnu$ can also be shown not to be expressible in $\Lmunu$, while the untimed formulas of alternation-depth greater than 1 expressible in $\Lmunu$ cannot all be expressed in $\Lc$.

\subsection{$\Lc \subsetneq \Lrelmunu$}

That $\Lc \subseteq \Lrelmunu$ follows from the observation that the $\Lc$ time modalities, $\bouyerstrong$ and $\bouyerweak$, can be encoded in $\Lrelmunu$ using $\exists$ and $\forall$.  In particular, it is immediate that $\bouyerstrong$ coincides with the binary $\exists$ operator of $\Lrelmunu$.  As for $\bouyerweak$, we first note that $\Lrelmunu$ formula $\forall_{\lgfalse} \phi$ holds of a state in a timed automaton exactly when every time elapse possible from the state leads to a state satisfying $\phi$.  Assuming that $\Lc$ formulas $\phi_1$ and $\phi_2$ have been translated into $\phi'_1$ and $\phi'_2$ in $\Lrelmunu$, it is straightforward to verify that $\phi_1 \bouyerweak \phi_2$ can be encoded as $(\exists_{\phi_1'}\phi_2') \lor (\forall_{\lgfalse} \phi_1')$.

The argument that $\Lrelmunu \not\subseteq \Lc$ follows from the fact that $\Lrelmunu$ includes a least fixpoint operator and can thus encode untimed formulas of arbitrary alternation depth.  $\Lc$ does not have this capability.

\subsection{$\Lmunu \subsetneq \Lrelmunu$}

That $\Lmunu \subseteq \Lrelmunu$ is a direct consequence of the fact that the unary $\exists$ and $\forall$ operators in $\Lmunu$ can be encoded in $\Lrelmunu$.  In particular, if $\Lmunu$ formula $\phi$ can be encoded as $\phi'$ in $\Lrelmunu$ then $\exists \phi$ in $\Lmunu$ can be rendered as $\exists_{\lgtrue} \phi'$ in $\Lrelmunu$, and $\forall \phi$ in $\Lmunu$ as $\forall_{\lgfalse} \phi'$ in $\Lrelmunu$.

That $\Lrelmunu \not\subseteq \Lmunu$ is an immediate consequence of this fact and the facts that $\Lmunu \mathop{\#} \Lc$ and $\Lc \subsetneq \Lrelmunu$.

\subsection{$\Tmu \mathop{\#} \Lnu, \Tmu \mathop{\#} \Lc, \Tmu \mathop{\#} \Lmunu$}

In discussing the relative expressiveness of $\Tmu$ \emph{vis \`a vis} other timed modal mu-calculi one must first note an obvious difference between these logics:  $\Tmu$ does not distinguish between different action labels, whereas the other calculi do.  To ensure a fair comparison in this section and the next, we therefore limit the action modalities that can be used in the non-$\Tmu$ calculi to $\dia{\Sigma}$ and $[\Sigma]$, where $\Sigma$ is the entire time-safe sort.

Since $\Lnu \subsetneq \Lc$ and $\Lnu \subsetneq \Lmunu$ the incomparability of $\Tmu$ with respect to $\Lc$ and $\Lmunu$ all follow if $\Tmu \mathop{\#} \Lnu$.  That $\Tmu \not\subseteq \Lnu$ follows from the fact that $\Tmu$ can express least as well as greatest fixpoints; the arguments given earlier can be adapted in the obvious fashion.  We now focus on showing that $\Lnu \not\subseteq \Tmu$.  To do this we must exhibit $\Lnu$ formula $\phi$ with the property that for every $\Tmu$ formula $\Phi'$, there is $\TA$ and $\theta$ such that $\musemTAtheta{\phi} \neq \musemTAtheta{\phi}'$.

Consider $\Lnu$ formula $\exists x = 1$, where $x \in \clocksA$, and further consider two timed automata $\TA_1 = (\{l_1\}, \{l_1\}, \{x\}, I_1, \emptyset, \Lab_1)$ and $\TA_2 = (\{l_2\}, \{l_2\}, \{x\}, I_2, \emptyset, \Lab_2)$, where:
$I_1(l_1) = x < 1$, $\Lab_1(l_1) = \emptyset$, $I_2(l_2) = x \leq 1$ and $\Lab_2(l_2) = \emptyset$.
In essence, both $\TA_1$ and $\TA_2$ consist of single locations, which are also the start locations of the automata and which satisfy no atomic propositions.  Neither has any edges.  The only difference in the two is the invariant assigned to their individual locations:  $\TA_1$ assigns $x < 1$ to its location $l_1$, while $\TA_2$ assigns $x \leq 1$ to its location $l_2$.  It is straightforward to verify that $(l_1, \initval) \not\in \musem{\phi}{\TA_1}{\theta}$ while $(l_2, \initval) \in \musem{\phi}{\TA_2}{\theta}$ for any $\theta$.

We now note that for any $\Tmu$ formula $\phi'$, $(l_1,\initval) \in \musem{\phi'}{\TA_1}{\theta}$ iff $(l_2,\initval) \in \musem{\phi'}{\TA_2}{\theta}$ for any $\theta$.  The reason for this fact is that for any $\theta$ and $\Tmu$ formulas $\phi'_1$ and $\phi'_2$, $\musem{\phi'_1 \rhd \phi'_2}{\TA_1}{\theta} = \musem{\phi'_2}{\TA_2}{\theta} = \emptyset$; in other words, any formula involving $\rhd$ at the top-level is equivalent to $\lgfalse$ in the setting of $\TA_1$ and $\TA_2$.  This observation implies that any $\Tmu$ formula interpreted over $\TA_1$ or $\TA_2$ reduces to a boolean combination of clock constraints and atomic propositions, which is either satisfied by both $(l_1, \initval)$ and $(l_2,\initval)$ or by neither.  Thus, no $\Tmu$ formula is semantically equivalent to $\Lnu$ formula $\phi$, and $\Lnu \not\subseteq \Tmu$.  Consequently, $\Lnu \mathop{\#} \Tmu$.

\subsection{$\Tmu \subsetneq \Lrelmunu$}

We finish our timed modal mu-calculus results by establishing that $\Tmu \subsetneq \Lrelmunu$.  It suffices to show that $\Tmu \subseteq \Lrelmunu$, since the facts that $\Lnu \subsetneq \Lrelmunu$ and $\Lnu \mathop{\#} \Tmu$ imply that $\Lrelmunu \not\subseteq \Tmu$.  We do this by giving an encoding of the $\Tmu$ $\rhd$ operator; this completes the argument, as the other operators in $\Tmu$ are also in $\Lrelmunu$.  Assume that $\phi_1'$ and $\phi_2'$ are $\Lrelmunu$ encodings of $\Tmu$ formulas $\phi_1$ and $\phi_2$. Then
\[
\exists_{\phi_1' \lor \phi_2'} ((\phi_1' \lor \phi_2') \land \dia{\Sigma} \phi_2')
\]
is a $\Lrelmunu$ formula that is equivalent to $\phi_1 \rhd \phi_2$.  The proof of the correctness of this encoding follows from the definitions and is left to the reader.

\section{Timed Computation Tree Logic (TCTL)}\label{sec:timed-ctl}

In the remainder of this paper we consider the relative expressiveness of $\Lrelmunu, \Lmunu$ and $\Tmu$ with respect to Timed Computation Tree Logic (TCTL)~\cite{Alu1991,ACD1990,ACD1993,henzinger-symbolic-model-1994}.  (We restrict our attention to these mu-calculi because they include capabilities for least as well as greatest fixpoints, in contrast to $\Lnu$ and $\Lc$.)
 TCTL is a well-studied notation for expressing requirements on the behavior of timed automata; it extends (untimed) CTL with mechanisms for characterizing timing behavior.
The logic actually appears in different forms in the literature.  Traditionally, timing bounds on the modalities are used to limit their scope in time~\cite{ACD1990,ACD1993}. Other versions equip TCTL with  \emph{freeze quantification}~\cite{Alu1991,henzinger-symbolic-model-1994}. Based on results of Bouyer et al.~\cite{bouyer-on-the-2010} for timed linear-time temporal logics, it is the case that TCTL with freeze quantification is strictly more expressive than TCTL with time-constrained modalities. Consequently, in this paper, we focus on TCTL with freeze quantification.  This section gives the syntax and semantics of this variant of TCTL, while the section following gives expressiveness results of $\Lrelmunu$, $\Lmunu$ and $\Tmu$ \emph{vis \`a vis} the logic.

\subsection{Syntax of TCTL}

TCTL is parameterized with respect to clock structure $(\clocks,\clocksA,\clocksF)$, which will remain fixed, and a clock-safe set $\AP$ of atomic propositions.  Recall from Definition~\ref{def:Lrelmunu} that $\AP_\clocks = \mathcal{A} \cup \Phi_a(\clocks)$.

\begin{definition}[TCTL syntax]
TCTL formulas are given by the following grammar, where $A \in \AP_\clocks$ and $z \in \clocksF$.
\begin{eqnarray*}
\phi & ::= &
A \barsep
\lnot \phi \barsep
\phi \lor \phi \barsep
\tcA(\phi \tcU \phi) \barsep
\tcE(\phi \tcU \phi) \barsep
\tcfreeze{z}{\phi}
\end{eqnarray*}
\end{definition}
$\tcE$ and $\tcA$ are \emph{path quantifiers}, $\tcU$ is the until operator, and $\tcfreeze{z}{\phi}$ is the freeze-quantifier construct.
The operators \emph{release} ($\tcR$), \emph{eventually} ($\tcF$), and \emph{globally} ($\tcG$) can be derived as follows.
\begin{align*}
\tcA(\phi_1 \tcR \phi_2)
    & = \lnot \tcE(\lnot \phi_1 \tcU \lnot \phi_2)
&
\tcA(\tcF \phi)
    & = \tcA(\lgtrue \tcU \phi)
&
\tcA(\tcG \phi)
    & = \lnot \tcE(\tcF \lnot \phi)
\\
\tcE(\phi_1 \tcR \phi_2)
    & = \lnot \tcA(\lnot \phi_1 \tcU \lnot \phi_2)
&
\tcE(\tcF \phi)
    & = \tcE(\lgtrue \tcU \phi)
&
\tcE(\tcG \phi)
    & = \lnot \tcA(\tcF \lnot \phi)
\end{align*}

\subsection{Semantics of TCTL}

To give the semantics of TCTL formulas, we first review the standard notions of \emph{execution} and \emph{run} in timed transition systems.

\paragraph*{Executions and runs of timed transition systems}

Executions of TTSes are sequences of transitions, while runs are executions that are \emph{time-divergent}.
To formalize these notions, fix TTS $\T = \genTTS$ over (time-safe) $\Sigma$ and $\AP$.
If $t = (q,a,q')$ is a transition in $\T$ (i.e.\/ an element of $\ttrans{}$) then we write $\src(t) = q$, $\tgt(t) = q'$ and $\lab(t) = a$ for $t$'s source/target states and label.  We also use the following notions on sequences.

\begin{definition}[Sequences]
Let $S$ be a set.  We use $S^*$ and $S^\omega$ to represent the sets of finite and infinite sequences over $S$, respectively, and define $S^\infty = S^* \cup S^\omega$.
The empty sequence in $S^*$ is denoted $\emptyL$.
If $w, w' \in S^\infty$ then we adapt the usual notions over $S^*$ of
length, $|w|$, by taking $|w| = \infty$ iff $w \in S^\omega$,
and
concatenation, $\cdot$, by defining $w \cdot w' = w$ if $w \in S^\omega$.
If $w, w_1$ and $w_2$ are such that $w = w_1 \cdot w_2$ then we call $w_1$ a \emph{prefix} of $w$ and $w_2$ a \emph{suffix} of $w$; if in addition $w \neq w_1$ ($w \neq w_2$) we call $w_1$ a \emph{proper prefix} ($w_2$ a \emph{proper suffix}) of $w$.
We write $\indices(w) = \{i \in \nats \mid 1 \leq i \leq |w|\}$ for the set of \emph{indices} in $w$.
Note that $\indices(\emptyL) = \emptyset$ and that $\indices(w) = \{1,2,\ldots\}$ if $w \in S^\omega$.
If $w = s_1 \cdots$ and $i \in \indices(w)$ then $w[i] = s_i$ is the $i^{\text{th}}$ element in $w$,
$\prefix{w}{<i} = s_1 \ldots s_{i-1}$ ($\prefix{w}{\leq i} = s_1 \ldots s_i$) is the prefix of $w$ ending before (at) the $i^{\text{th}}$ element of $w$,
and $\suffix{w}{> i} = s_{i+1} \ldots$ ($\suffix{w}{\geq i} = s_i \ldots$) is the suffix of $w$ beginning after (at) the $i^{\text{th}}$ element of $w$.
If $|w| \in \indices(w)$ (i.e.\/ $w$ is finite and non-empty) then $w[\,|w|\,]$ is the last element of $w$.
\end{definition}

\begin{definition}[Execution, run]\label{def:execution}
Let $\pi = (q,ts)$ be an element of $Q \times (\ttrans{})^\infty$.
\begin{enumerate}
\item
$\pi$ is an \emph{execution of $\T$ from $q$} iff either $ts = \emptyL$, or $q = \src(ts[1])$ and
for all $i$ such that $i, i+1 \in \indices(ts)$ $\tgt(ts[i]) = \src(ts[i+1]))$.
\item
The \emph{duration}, $D(\pi) \in \delays \cup \{\infty\}$, of execution $\pi$ is defined as
$D(\pi) = \sum_{i=1}^{|ts|} D(ts[i])$,
where $D(ts[i]) = 0$ if $\lab(ts[i]) \in \Sigma$ and $D(ts[i]) = \lab(t[i])$ if $\lab(ts[i]) \in \nnegreals$.
\item
Execution $\pi$ is \emph{durationless} iff $D(\pi) = 0$.
\item
Execution $\pi$ is \emph{time-divergent} iff $D(\pi) = \infty$, and is \emph{time-convergent} otherwise.
\item
Execution $\pi$ is a \emph{run} of $\T$ from $q$ iff $\pi$ is time-divergent.
\end{enumerate}
We use $\executions{\T}{q}$ for the set of executions of $\T$ from $q$; $\runs{\T}{q} \subseteq \executions{\T}{q}$ for the set of runs of $\T$ from $q$; $\executions{\T}{Q}$ for $\bigcup_{q \in Q} \executions{\T}{q}$; and $\runs{\T}{Q}$ for $\bigcup_{q \in Q} \runs{\T}{q}$.
If $q \ttrans{a} q'$ and $q' \ttrans{a'} q''$, then we abuse notation and write $q \ttrans{a} q'$ for the single-transition execution $(q,(q,a,q'))$ and $q \ttrans{a} q' \ttrans{a'} q''$ for the two-transition execution $(q, (q,a,q') \cdot (q',a',q''))$.
\end{definition}

An execution is a state together with a sequence of transitions leading from that state.  The duration of an execution is the sum of the durations of its transitions, with the duration of an action transition taken to be 0.
An execution is \emph{durationless} if its duration is 0, meaning that all its transitions must either be action transitions or duration-0 time-elapse transitions.
An execution is \emph{time-divergent}, and thus is a \emph{run}, if its duration is $\infty$.
If $\pi = (q,ts)$ is an execution then source $\src(\pi) \in Q$ and target $\tgt(\pi) \in Q$ are defined by $\src(\pi) = q$, $\tgt(\pi) = q$ if $ts = \emptyL$, and $\tgt(\pi) = \tgt(ts[\,|ts|\,])$ iff $0 < |ts| < \infty$.  If $|ts| = \infty$ then $\tgt(\pi)$ is undefined.

We adapt our sequence notions to executions as follows.
\begin{definition}
Let $\pi = (q,ts)$ and $\pi' = (q',ts')$ be executions in $\executions{\T}{Q}$.
\begin{enumerate}
\item
The \emph{length}, $|\pi| \in \nats \cup \{\infty\}$, of $\pi$ is $|ts|$.  If $|\pi| < \infty$ then $\pi$ is \emph{finite}; otherwise, $\pi$ is \emph{infinite}.
\item
The \emph{transition indices}, $\indices_t(\pi) \subseteq \nats$, of $\pi$ are $\indices(ts)$.
If $i \in \indices_t(\pi)$ then $\pi[i] = ts[i]$.
\item
The \emph{action indices} of $\pi$, $\indices_\Sigma(\pi) \subseteq \indices_t(\pi)$, are $\indices_\Sigma(\pi) = \{i \in \indices_t(\pi) \mid \lab(\pi[i]) \in \Sigma\}$.
\item
\emph{Concatenation} $\pi \cdot \pi'$ is $(q, ts \cdot ts')$ if $|\pi| = \infty$ or $\tgt(\pi) = q'$, and is undefined otherwise.
\item
If $\pi = \pi_1 \cdot \pi_2$ then $\pi_1$ is a \emph{prefix} of $\pi$ and $\pi_2$ is a \emph{suffix} of $\pi$.
\item
Let $i \in \indices_t(\pi)$.  Then $\prefix{\pi}{< i}$ / $\prefix{\pi}{\leq i}$ / $\suffix{\pi}{> i}$ / $\suffix{\pi}{\geq i}$ are the prefixes / suffixes of $\pi$ defined respectively as $(q, \prefix{ts}{< i})$ / $(q, \prefix{ts}{\leq i})$ /  $(\tgt(ts[i]), \suffix{ts}{> i})$ / $(\src(ts[i]), \suffix{ts}{\geq i})$.
\end{enumerate}
\end{definition}

We now introduce the standard notions of Zeno execution and timelock state.

\begin{definition}[Zeno execution, timelock state]\label{def:run}
\mbox{}
\begin{enumerate}
\item
Execution $\pi \in \executions{\T}{Q}$ is \emph{Zeno} iff $\pi$ is time-convergent and $|\, \indices_\Sigma(\pi) \,| = \infty$.
\item
State $q \in Q$ is a \emph{timelock state} iff $\runs{\T}{q} = \emptyset$.
\end{enumerate}
\end{definition}
An execution is Zeno iff it contains an infinite number of action transitions and yet has finite duration, while a timelock state has no runs (i.e.\/ no time-divergent executions).

Later in the paper we will need a method for indexing the states in an execution.  This notion is complicated by the fact that at a given time-elapse $\delta$, the execution might traverse several states because it is in the midst of a durationless sub-execution.  We have the following.

\begin{definition}[State indexing in executions]\label{def:state-indices}
Let $\pi \in \executions{\T}{Q}$ and $\delta \in \delays$.
\begin{enumerate}
\item
    The \emph{$\delta$-prefix}, $\prefix{\pi}{\leq \delta}$, of $\pi$ is the maximum-length prefix $\pi'$ of $\pi$ such that $D(\pi') \leq \delta$.
\item
    Execution $\pi$ is \emph{defined at $\delta$} iff there exists finite prefix $\pi'$ of $\pi$ such that $\delta \leq D(\pi')$.
\item
    The \emph{durationless execution, $de(\pi,\delta)$, of $\pi$ at $\delta$} is defined iff $\pi$ is defined at $\delta$, in which case it is given as follows.
    \begin{enumerate}
    \item
        If $D(\prefix{\pi}{\leq\delta}) < \delta$ then $de(\pi,\delta) = (q',\emptyL)$, where $\delta' = \delta - D(\prefix{\pi}{\leq \delta})$ and $\tgt(\prefix{\pi}{\leq \delta})) \ttransTA{\delta'} q'$.
    \item
        If $D(\prefix{\pi}{\leq \delta}) = \delta$ then $de(\pi,\delta)$ is the maximum-length durationless suffix of $\prefix{\pi}{\leq \delta}$.
    \end{enumerate}
\item
    The \emph{state indices of $\pi$} are
    $
    \indices_s(\pi) = \{(\delta, i) \in \delays \times \nats \mid
    \text{$\pi$ is defined at $\delta$} \land
    i \leq |de(\pi,\delta)|\}.
    $
    If $(\delta_1, i_1), (\delta_2,i_2) \in \indices_s(\pi)$ then $(\delta_1, i_2) <_\pi (\delta_2,i_2)$ iff $\delta_1 < \delta_2$, or $\delta_1 = \delta_2$ and $i_1 < i_2$.
\item
    Let $(\delta,i) \in \indices_s(\pi)$ and $\pi' = de(\pi,\delta)$.  The \emph{state at index $(\delta,i)$ in $\pi$, $\pi[\delta,i] \in Q$}, is given by $\pi[\delta,0] = src(\pi')$, and $\pi[\delta, i] = \tgt(\pi'[i])$ for $i > 0$.
\end{enumerate}
\end{definition}
The $\delta$-prefix $\prefix{\pi}{\leq \delta}$ is the longest prefix of $\pi$ whose duration is bounded above by $\delta$.
If $|\pi| < \infty$ or $\pi$ is a run then it is easy to see that $|\,\prefix{\pi}{\leq \delta}\,| < \infty$ for all $\delta \in \delays$.  If, on the other hand, $|\pi| = \infty$ and $D(\pi) \leq \delta$ then $\prefix{\pi}{\leq \delta} = \pi$ and $|\,\prefix{\pi}{\leq \delta}\,| = \infty$.
Execution $\pi$ is defined a $\delta$ if there is a point in $\pi$ where time elapses to $\pi$.  Any finite-length $\pi$ is defined at $\delta$ iff $\delta \leq D(\pi)$, while any run is defined at every $\delta \in \delays$.
Execution $de(\pi,\delta)$
is the maximum-length durationless execution occurring at time $\delta$ in $\pi$; it is only defined if $\pi$ is defined at $\delta$.  There are two cases in the definition.  In the first, $D(\prefix{\pi}{\leq \delta}) < \delta$, meaning $\pi$ is in the middle of a time elapse at time $\delta$.  In this case the durationless execution is the zero-length one obtained by elapsing time from target state of $\prefix{\pi}{\leq \delta}$ until the over-all $\delta$ time limit is reached.  In the second case, $D(\prefix{\pi}{\leq \delta}) = \delta$, meaning $\pi$ is not in the middle of a time elapse at time $\delta$.  In this case, since $\prefix{\pi}{\leq \delta}$ is maximal the next transition in $\pi$ after $\prefix{\pi}{\leq \delta}$, if one exists, must be a positive-duration time elapse; consequently, the longest durationless suffix of $\prefix{\pi}{\leq \delta}$ is also the longest durationless execution embedded in $\pi$ at time $\delta$.  The state indices of $\pi$ consist of a duration $\delta$ and an index into the durationless execution at $\delta$, and $\pi[\delta,i]$ is the $i^{\text{th}}$ state in $de(\pi,\delta)$, where the $0^{\text{th}}$ state is the source state of $de(\pi,\delta)$ and subsequent states are the targets of the transitions in $de(\pi,\delta)$.

Later in the paper we also will use the notion \emph{extents} of executions associated with state indices of the execution.  Extents may be seen as analogous to prefixes and suffixes, with the given state index being the target of the extent to it, and the state at the index being the source of the extent from it.  These notions are formalized as follows.

\begin{definition}[Extents of executions]\label{def:extent}
Let $\pi \in \executions{\T}{Q}$.
\begin{enumerate}
\item
    The \emph{start time}, $st_\pi(i) \in \delays$, of transition $i \in \indices_s(\pi)$ is defined as $st_\pi(i) = D(\prefix{\pi}{< i})$, and the \emph{end time}, $et_\pi(i) \in \delays$ of transition $i$ is $st_\pi(i) + D(\pi[i])$.
\item
    The \emph{source- and target-state indices}, $si_\pi(i), ti_\pi(i) \in \indices_s(\pi)$, of transition $i \in \indices_t(\pi)$ are:
    \begin{align*}
    si_\pi(i)
        &=
        \begin{cases}
            (st_\pi(i), j)
                & \text{if $D(\pi[i]) > 0$ and $j = \max\{j' \in \nats \mid (st_\pi(i),j') \in \indices_s\}$}\\
            (\delta, j - 1)
                & \text{if $D(\pi[i]) = 0$ and $ti_{\pi}(i) = (\delta,j)$}
        \end{cases}
        \\
    ti_\pi(i)
        &=
        \begin{cases}
            (\delta + \lab(\pi[i]), 0)
                & \text{if $D(\pi[i]) > 0$ and $si_\pi(i) = (\delta,j)$} \\
            (st_\pi(i), i - |\pi'|)
                & \text{if $D(\pi[i]) = 0$ and $\pi = \pi' \cdot de(\pi,\delta)$.}
        \end{cases}
    \end{align*}
    In the definition of $ti_\pi(i)$, $i - |\pi'| $ is the index of transition $i$ in $\pi$ relative to $de(\pi,\delta)$.
\item
    Let $(\delta, i) \in \indices_t(\pi)$.
    Then the \emph{extent of $\pi$ \underline{to} $(\delta,i)$},
    $\prefix{\pi}{\leq (\delta,i)} \in \executions{\T}{q}$, is defined as follows.
    \begin{align*}
    \prefix{\pi}{\leq (\delta,i)} =
    \begin{cases}
    (\src(\pi), \emptyL)
        & \text{if $(\delta,i) = (0,0)$} \\
    \prefix{\pi}{\leq j}
        & \text{if $j \in \indices_t(\pi)$ and $ti_{\pi}(j) = (\delta,i)$} \\
    (\prefix{\pi}{< j}) \cdot
    cut(\pi[j], \delta - st_\pi(j))
            & \text{if $j \in \indices_t(\pi)$ and $st_{\pi}(j) < \delta < et_{\pi}(j)$}
    \end{cases}
    \end{align*}
    The \emph{extent of $\pi$ \underline{from} $(\delta,i)$}, $\suffix{\pi}{\geq (\delta,i)} \in \executions{\T}{Q}$, is defined as follows, where $q' = \pi[\delta,i]$.
    \begin{align*}
    \suffix{\pi}{\geq (\delta,i)} &=
        \begin{cases}
        \pi & \text{if $(\delta,i) = (0,0)$} \\
        \suffix{\pi}{\geq j}
            & \text{if $j \in \indices_t(\pi)$ and $si_\pi(j) = (\delta,i)$}
            \\
        rem(\pi[j], et_\pi(i) - \delta) \cdot \suffix{\pi}{\geq j}
            & \text{if $j \in \indices_t(\pi)$ and $st_{\pi}(j) < \delta < et_{\pi}(j)$}
        \end{cases}
    \end{align*}
\end{enumerate}
\end{definition}
The start and end times of transition $i$ in $\pi$ are the times in $\pi$ at which transition $\pi[i]$ begins and finishes; note that $de(\pi,\delta)$ is the maximum sub-execution of $\pi$ whose transitions are durationless and whose start and end times are all $\delta$.  The source-state index of transition $i$ in $\pi$ is the state index within $\pi$ associated with the source state of transition $i$ in $\pi$, while the target-state index is associated with the target state of transition $i$.  Finally, the extent of $\pi$ to $(\delta,i) \in \indices_s$ is an execution that includes exactly the behavior of $\pi$ up to and including state index $(\delta,i)$.
The notion is analogous to prefixing, but its definition is more complex because this extent may include time-elapse behavior that may be part of, but not wholly include, a delay transition in $\pi$.  Similarly, the extent of $\pi$ from $(\delta,i)$ is an execution capturing the behavior of $\pi$ beginning from $(\delta,i)$.
The following are easy to establish, where $(\delta_1, i_2) - (\delta_2, i_2) \in \reals \times \ints$ is defined as $(0, i_1 - i_2)$ if $\delta_1 = \delta_2$ and as $(\delta_1 - \delta_2, i_1)$ otherwise.
\begin{align*}
\indices_s(\prefix{\pi}{\leq (\delta,i)})
    &= \{ \delta',j) \in \indices \in \indices_s(\pi) \mid
    (\delta',j) \leq_{\pi} (\delta,i)
    \}
    \\
\indices_s(\suffix{\pi}{\geq (\delta,i)})
    &= \{ (\delta',j) - (\delta,i) \mid (\delta',j) \in \indices_s(\pi) \land (\delta,i) \leq_\pi (\delta',j)
    \}
\end{align*}
Note that the state indices of $\suffix{\pi}{\geq (\delta,i)}$ need to be ``adjusted downward" by $(\delta,i)$.  We also know that for all $(\delta', j) \in \indices_s(\prefix{\pi}{(\delta,i)})$, $(\prefix{\pi}{\leq (\delta,i)})[\delta',j] = \pi[\delta',j]$; that is, the extent of $\pi$ to $(\delta,i)$ contains the same states at the same indices as $\pi$.  An analogous result holds for $\suffix{\pi}{\geq (\delta,i)}$.  Define $(\delta_1, j_1) + (\delta_2, j_2)$ as $(\delta_2, j_1 + j_2)$ if $\delta_1 = 0$, and as $(\delta_1 + \delta_2, j_2)$ otherwise.  Then for all $(\delta',j) \in \indices_s(\suffix{\pi}{(\delta,i)}$, $(\suffix{\pi}{\geq (\delta,i)})[\delta',j] = \pi[(\delta',j) + (\delta,i)]$.  This says that the extent of $\pi$ from $(\delta,i)$ contains the same states and $\pi$ after accounting for the offsets in the state indices of the extent from $\pi$ \emph{vis \`a vis} those of $\pi$.

\paragraph*{TCTL formula semantics.}

Our semantics of TCTL is in the style of~\cite{henzinger-symbolic-model-1994}
and relies on \emph{until paths}.

\begin{definition}[Until path]
Let $\T = (Q,\ttrans{},\Lab,Q_0)$ be a TTS over time-safe $\Sigma$ and $\AP$ and let $Q_1, Q_2 \subseteq Q$.
Then $\pi \in \executions{\T}{Q}$ is an \emph{until path} from $Q_1$ to $Q_2$ in $\T$ iff there exists $(\delta,i) \in \indices_s(\phi)$ such that:
\[
    \pi(\delta,i) \in Q_2 \land
    \left(
        \forall\, (\delta', i') \in \indices_s(\pi) \colon
        (\delta',i') <_\pi (\delta,i) \implies \pi[\delta',i'] \in Q_1
    \right).
\]
We write $\untils{\T}{Q_1}{Q_2}$ for the set of until paths from $Q_1$ to $Q_2$ in $\T$.  If $\pi \in \untils{\T}{Q_1}{Q_2}$ is such that no proper prefix of $\pi$ is in $\untils{\T}{Q_1}{Q_2}$ then $\pi$ is called a \emph{minimal until path}.
\end{definition}
Intuitively, $\pi$ is an until path from $Q_1$ to $Q_2$ in $\T$ if $\pi$ eventually hits a state in $Q_2$, with all strictly preceding states in $\pi$ being in $Q_1$.
Note that if $\pi$ is a minimal until path then it must be of finite length, and that if $\pi$ is an until path then there is a unique prefix of $\pi$ that is also a minimal until path.
We use $mup_{\T}(\pi,Q_1,Q_2)$ for this unique prefix of $\pi$.

In what follows, if $\TA$ is a timed automaton with associated TTS $\TS{\TA}$ then we write $\executions{\TA}{q}, \runs{\TA}{q}$ and $\untils{\TA}{Q_1}{Q_2}$ instead of $\executions{\TS{\TA}}{q}, \runs{\TS{\TA}}{q}$ and $\untils{\TS{\TA}}{Q_1}{Q_2}$.
The semantics of TCTL can now be specified as follows.

\begin{definition}[Semantics of TCTL formulas]
Fix time-safe $\Sigma$ and clock-safe $\AP$, and let
$\TA = \genTA$ be a timed automaton over $\Sigma$ and $\AP$, with $\TS{\TA} = \ttsTA$. Then the semantics, $\tcsemTA{\phi} \subseteq \QTA$, of a TCTL formula $\phi$ over $\TA$ is defined inductively as follows, where $A \in \AP_\clocks$.
\begin{align*}
\tcsemTA{A}
& = \{ q \in \QTA \mid A \in \Lab_{\TA}(q) \}
\\
\tcsemTA{\lnot \phi}
& = \QTA \setminus \tcsemTA{\phi}
\\
\tcsemTA{\phi_1 \lor \phi_2}
& = \tcsemTA{\phi_1} \cup \tcsemTA{\phi_2}
\\
\tcsemTA{\tcE(\phi_1 \tcU \phi_2)}
& = \{ q \in \QTA \mid \exists r \in \runs{\TA}{q} \colon
r \in \untils{\TA}{\tcsemTA{\phi_1 \lor \phi_2}}{\tcsemTA{\phi_2}} \}
\\
\tcsemTA{\tcA(\phi_1 \tcU \phi_2)}
& = \{ q \in \QTA \mid \forall r \in \runs{\TA}{q} \colon
r \in \untils{\TA}{\tcsemTA{\phi_1 \lor \phi_2}}{\tcsemTA{\phi_2}} \}
\\
\tcsemTA{\tcfreeze{z}{\phi}}
& = \{ q \in \QTA \mid q[z := 0] \in \tcsemTA{\phi} \}
\end{align*}
If $q \in \QTA$ then we say $q$ \emph{satisfies} $\TCTL$ formula $\phi$ iff $q \in \tcsemTA{\phi}$.
\end{definition}
The semantics of most of the operators is straightforward.  Note that the definitions for $\tcE(\phi_1 \tcU \phi_2)$ and $\tcA(\phi_1 \tcU \phi_2)$ are given in terms of runs that are also until paths.  Specifically, for a state $q$ to be in $\tcsemTA{\tcE(\phi_1 \tcU \phi_2)}$ it must have a run with a state that satisfies $\phi_2$, with all preceding states in the run keeping either $\phi_1$ or $\phi_2$ true.  For $q$ to be in $\tcsemTA{\tcA(\phi_1 \tcU \phi_2)}$, all of its runs must have this property.

We close this section with a technical lemma about minimal until paths that will be used in the next section.

\begin{lemma}[Minimal until paths]\label{lem:mup}
Let $\T = \genTTS$ be a TTS over time-safe $\Sigma$ and $\AP$, let $Q_1, Q_2 \subseteq Q$, and let $\pi \in \untils{\T}{Q_1}{Q_2}$ be a minimal until path such that $|\pi| > 0$.  Then for all $(\delta,i) \in \indices_s(\pi_{<|\pi|})$, $\pi[\delta,i] \in Q_1 \setminus Q_2$.
\end{lemma}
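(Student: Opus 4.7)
The plan is to exploit the minimality of $\pi$ together with the defining property of until paths. Unfolding the definition of $\untils{\T}{Q_1}{Q_2}$, there is at least one witness index $(\delta^{*},i^{*}) \in \indices_s(\pi)$ such that $\pi[\delta^{*},i^{*}] \in Q_2$ and $\pi[\delta',i'] \in Q_1$ for every $(\delta',i') <_\pi (\delta^{*},i^{*})$. My first step is to observe that this witness must coincide with $ti_\pi(|\pi|)$, the target-state index of the final transition of $\pi$. Indeed, if $(\delta^{*},i^{*}) <_\pi ti_\pi(|\pi|)$, then $\prefix{\pi}{\leq (\delta^{*},i^{*})}$ would be a strict proper prefix of $\pi$ still lying in $\untils{\T}{Q_1}{Q_2}$, contradicting minimality. (The degenerate case $(\delta^{*},i^{*}) = (0,0)$ is ruled out by the hypothesis $|\pi| > 0$, since then the one-state prefix $(\src(\pi),\emptyL)$ would itself be an until path.)

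With this observation in hand, fix an arbitrary $(\delta,i) \in \indices_s(\pi_{<|\pi|})$. Since $\pi_{<|\pi|}$ drops only the last transition of $\pi$, we have $(\delta,i) \leq_\pi si_\pi(|\pi|)$, and $si_\pi(|\pi|) <_\pi ti_\pi(|\pi|) = (\delta^{*},i^{*})$, the strict inequality holding because every transition strictly advances the state index (in the delay component for positive-duration delays and by one in the discrete component for action and zero-delay transitions, per the definition of $si_\pi$ and $ti_\pi$). The until-path property applied at $(\delta,i)$ then yields $\pi[\delta,i] \in Q_1$.

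To rule out $\pi[\delta,i] \in Q_2$, suppose for contradiction that it holds. Then $\prefix{\pi}{\leq (\delta,i)}$ is itself an until path: all strictly earlier state indices lie below $(\delta^{*},i^{*})$ and so satisfy $Q_1$ by the witness property of $\pi$, while the new terminal state is in $Q_2$ by assumption. But $(\delta,i) <_\pi (\delta^{*},i^{*}) = ti_\pi(|\pi|)$ makes this a strict proper prefix of $\pi$, again contradicting minimality. Hence $\pi[\delta,i] \in Q_1 \setminus Q_2$.

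The main obstacle I anticipate is the bookkeeping around the state-index and extent machinery given above for executions. Specifically, I must justify (i) that $\prefix{\pi}{\leq (\delta,i)}$ is a well-defined proper prefix of $\pi$ whenever $(\delta,i) <_\pi ti_\pi(|\pi|)$, and (ii) that its state indices form exactly the initial segment $\{(\delta',j) \in \indices_s(\pi) \mid (\delta',j) \leq_\pi (\delta,i)\}$ under $<_\pi$, so that the until-path property transfers cleanly from $\pi$ to the prefix. Both facts should follow directly from the case analysis in the definition of extents, but the case in which $(\delta,i)$ falls strictly inside a positive-duration time-elapse transition, and so triggers the $cut$-based clause, is the one requiring the most care.
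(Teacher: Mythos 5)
The paper records no proof of this lemma beyond the one-sentence remark that minimality prevents any state in a proper prefix from lying in $Q_2$, so your argument supplies detail the paper omits, and its overall strategy is the intended one. However, your opening step --- that the witness index $(\delta^{*},i^{*})$ must coincide with $ti_\pi(|\pi|)$ --- is false as stated, and the failure is caused by exactly the $cut$-based subtlety you flag at the end but do not resolve. If the final transition of $\pi$ is a positive-duration delay, the witness may sit strictly inside it. Concretely, take $\pi = q_0 \ttrans{2} q_2$ with $q_0$ and every state reached at time $\delta < 1$ in $Q_1 \setminus Q_2$ and the state reached at time $1$ in $Q_2$. The extent $\prefix{\pi}{\leq (1,0)}$ splits the single delay transition into two transitions via $cut$, so its transition sequence is \emph{not} a prefix of $\pi$'s in the concatenation sense under which ``proper prefix'' and hence ``minimal until path'' are defined; the only proper prefix of $\pi$ is $(q_0,\emptyL)$, which is not an until path since $q_0 \notin Q_2$. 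Hence $\pi$ is a minimal until path whose witness $(1,0)$ is strictly below $ti_\pi(1) = (2,0)$, contradicting your claim. Your subsequent derivation of $\pi[\delta,i] \in Q_1$ leans on the equality $ti_\pi(|\pi|) = (\delta^{*},i^{*})$, so as written that step does not go through.

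The argument is repairable. What minimality actually forces is the weaker fact that the witness lies strictly beyond $si_\pi(|\pi|)$: if $(\delta^{*},i^{*}) \leq_\pi si_\pi(|\pi|)$, then $(\delta^{*},i^{*})$ is either $(0,0)$, or a boundary index $ti_\pi(j)$ with $j < |\pi|$, or interior to a positive delay transition $j < |\pi|$; in the first two cases $\prefix{\pi}{\leq (\delta^{*},i^{*})}$ is a genuine proper prefix of $\pi$, and in the third $\prefix{\pi}{\leq j}$ is, and each is an until path, contradicting minimality. Since every $(\delta,i) \in \indices_s(\prefix{\pi}{<|\pi|})$ satisfies $(\delta,i) \leq_\pi si_\pi(|\pi|) <_\pi (\delta^{*},i^{*})$, membership in $Q_1$ then follows from the until-path property exactly as you intend. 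The same substitution --- using the literal prefix $\prefix{\pi}{\leq j}$ in place of the extent when $(\delta,i)$ is interior to a delay transition $j$, noting that $j < |\pi|$ holds here because $\delta \leq st_\pi(|\pi|)$ --- also closes the gap in your exclusion of $Q_2$. So the proof is recoverable once the witness-location claim is weakened and the extent-versus-prefix distinction is handled, but as submitted it contains a step that fails.
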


This lemma asserts that if a minimal until-path $\pi \in \untils{\T}{Q_1}{Q_2}$ has at least one transition, then every state in $\pi_{<|\pi|}$, which is $\pi$ with its final transition removed, must be in $Q_1$ but not $Q_2$.  The proof of this lemma relies on the fact that $\pi$ is minimal, and thus no state in any proper prefix of $\pi$ can be in $Q_2$.

\section{TCTL Expressiveness Results}\label{sec:tctl-expressiveness}

This section compares the expressiveness of TCTL with the three mu-calculi in Section~\ref{sec:timed-modal-mu-calculi} --- $\Lrelmunu, \Lmunu$ and $\Tmu$ --- that can express least and greatest fixpoints.  Figure~\ref{fig:tctl-expressiveness-results} summarizes the results in this section, key among which is that $\Lrelmunu$ is strictly more expressive than TCTL while the other mu-calculi are incomparable.

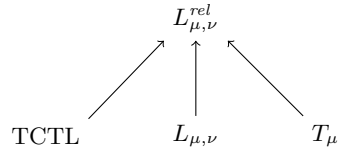
\begin{figure}
    \centering
    \begin{tikzpicture}
    \node   (Lrelmunu)  {$\Lrelmunu$};
    \node   (Lmunu)     [below=of Lrelmunu] {$\Lmunu$};
    \node   (TCTL)        [left=of Lmunu]     {TCTL};
    \node   (Tmu)       [right=of Lmunu]    {$\Tmu$};
    \draw[<-] (Lrelmunu.south) -- (Lmunu.north);
    \draw[<-] (Lrelmunu.south west) -- (TCTL.north east);
    \draw[<-] (Lrelmunu.south east) -- (Tmu.north west);
    \end{tikzpicture}
    \caption{Expressiveness results for timed modal mu-calculi \emph{vis \`a vis} TCTL.  If there is an edge from $L_1$ to $L_2$ then $L_1 \subsetneq L_2$.  If no path connects $L_1$ and $L_2$ then $L_1 \mathop{\#} L_2$.}
    \label{fig:tctl-expressiveness-results}
\end{figure}

In this section we refer to classes of \emph{non-Zeno} and \emph{timelock-free} timed automata.  Timed automaton $\TA$ is non-Zeno iff no state in $\TS{\TA}$ has any Zeno runs from it, and it is timelock-free if no state in  $\TS{\TA}$ is a timelock state.  It should be noted that these definitions differ from traditional definitions, which only require non-Zeno-ness and timelock freedom for states reachable from a start state.  However, in the presence of freeze quantification, the notion of reachable state becomes somewhat complex to formulate; we thus opt for these stronger, but easier to express, definitions.

\subsection{Fixpoints over Subset Lattices}

In this section we frequently need to establish set-theoretic relationships between semantics of formulas and fixpoints of monotonic functions defined over the complete lattice of the powerset of states in a TTS.  Accordingly, we briefly review characterizations of these fixpoints and highlight proof methods we use in what follows.

Let $X$ be a set.  Then we call $(2^X, \subseteq, \bigcup, \bigcap)$ the \emph{subset lattice} over $X$, with $\subseteq$ being the partial ordering over the carrier set $2^X$ and $\bigcup$ and $\bigcap$ being the join and meet operations, respectively.  A function $f \in 2^X \to 2^X$ is said to be \emph{monotonic} over this lattice provided that, whenever $X_1 \subseteq X_2 \subseteq S$, $f(X_1) \subseteq f(X_2)$  Such functions are guaranteed to have least and greatest fixpoints, $\mu f \subseteq X$ and $\nu f \subseteq X$ respectively, that Tarski and Knaster characterized as follows.
\begin{align*}
    \mu f &= \bigcap \{X' \subseteq X \mid f(X') \subseteq X'\} \\
    \nu f &= \bigcup \{X' \subseteq X \mid XX' \subseteq f(X')\}
\end{align*}
If $f(X') \subseteq X'$ then $X'$ is called a \emph{pre-fixpoint} of $f$; if instead $X' \subseteq f(X')$ then $X'$ is a \emph{post-fixpoint} of $f$.

A recently published paper~\cite{CK2023} gives an alternative characterization of $\mu f$ for monotonic $f$ in terms of \emph{well-found support structures}.  A support structure for $f$ is a pair $(X', {\prec})$ where $X' \subseteq X$, ${\prec} \subseteq X' \times X'$, and for all $x' \in X'$, $x' \in f(\preimg{{\prec}}{x})$, where $\preimg{{\prec}}{x} = \{ x'' \in X' \mid x'' \prec x' \}$ are the elements ``below'' $x'$ as defined by $\prec$.  Support structure $(X', {\prec})$ is well-founded iff relation $\prec$ is well-founded, and $X'$ is called well-supported for $f$ iff there is ${\prec} \subseteq X' \times X'$ such that $(S,{\prec})$ is a well-founded support structure for $f$.  Then
\[
\mu f
= \bigcup \{ X' \subseteq X \mid
\text{$X'$ is a well-supported for $f$} \}
\]

These characterizations imply the following proof strategies for relating $\mu f$ and $\nu f$ to a given subset $Y$ of $X$.
\begin{description}
\item[To prove $\mu f \subseteq Y$:]  Prove that $f(Y) \subseteq Y$.
\item[To prove $Y \subseteq \mu f$:]  Give well-founded ${\prec} \subseteq Y \times Y$ and prove that $(Y,{\prec})$ is a support structure for $f$.
\item[To prove $\nu f \subseteq Y$:]  Prove that for every $X'$ such that $X' \subseteq f(X')$, $X' \subseteq Y$.
\item[To prove $Y \subseteq \nu f$:]  Prove that $Y \subseteq f(Y)$.
\end{description}

\subsection{$\TCTL \mathop{\#} \Lmunu$, $\TCTL \mathop{\#} \Tmu$}

These results are already in the literature; we include them here for completeness.

Fontana's dissertation~\cite{fontana-2014} remarks that $\TCTL \mathop{\#} \Lmunu$.  In particular, he shows that $\Lmunu \not\subseteq TCTL$ by establishing $\Lmunu$ formula
\[
z.
\left(
    \nu X.
    \left(
        z \leq  0 \land \forall (z \leq 0 \land [\Sigma] X)
    \right)
\right)
\]
which asserts that time never advances, is impossible to express in TCTL.  That $TCTL \not\subseteq \Lmunu$ is a consequence of Kamp's seminal result on the expressive completeness of linear temporal logic that includes $\tcU$~ \cite{kamp68} and  the incompleteness of linear temporal logics that include only $\tcF$ and $\tcG$, which correspond to the $\exists$ and $\forall$ operators of $\Lmunu$.

It is also the case that $\Tmu \mathop{\#} \TCTL$.
Henzinger et al. showed that $\Tmu \not \subseteq \TCTL$~\cite{henzinger-symbolic-model-1994}. In particular, the $\Tmu$ formula $\nu X . \mu Y . ((x = 0 \land ((x = 0 \triangleright ((x > 0) \triangleright X)))) \lor (\true \triangleright Y))$ cannot be expressed in $\TCTL$. This is even the case for timelock-free, non-Zeno timed automata.
This formula
is an adaptation of Emerson's result that the mu-calculus cannot be expressed in CTL~\cite{emerson-1991,wolper-1983}. Likewise, in~\cite{henzinger-symbolic-model-1994} it is shown that over the model of real-time systems, $\TCTL \not \subseteq \Tmu$. This result extends to timed automata. However, $\TCTL \subsetneq \Tmu$
for the class of real-time programs~\cite{henzinger-symbolic-model-1994}, and also for timelock-free, non-Zeno timed automata~\cite{penczek-advances-in-2006}.


\subsection{$\TCTL \subsetneq \Lrelmunu$}\label{sec:tctl-vs-Lrelmunu}

We now establish a key result of this paper:  that over the class of all timed automata, $\TCTL \subsetneq \Lrelmunu$.  We begin by noting that since
$\Tmu \not \subseteq \TCTL$ and $\Tmu \subsetneq \Lrelmunu$, it must be the case $\Lrelmunu \not\subseteq  \TCTL$.
Therefore, to establish that $\Lrelmunu \subsetneq \TCTL$ it suffices to show that $\Lrelmunu \subseteq \TCTL$.  We do this by giving an embedding of TCTL into $\Lrelmunu$ in the remainder of this section.

The difficulty in the translation resides in the fact that $\Lrelmunu$ includes no direct mechanism for referencing runs of timed automata.  Thus any encoding of TCTL in this logic must account not only for the truth or falsity of a formula in individual states in the timed automaton, but also be sensitive to the possibility or impossibility of runs from those states.  To accommodate these subtleties, we present our translation in a staged fashion.  We first consider only the set of timelock-free (TF) and non-Zeno (nZ) timed automata and give translations of TCTL into $\Lrelmunu$.  We then explain how to adapt these translaton so that these restrictions may be removed.

Before defining the translations we first introduce notation for $\Lrelmunu$ that will be used in them.  First, we will abbreviate $\exists_{\lgtrue} \gamma$ and $\forall_{\lgfalse} \gamma$ as $\exists\, \gamma$ and $\forall\, \gamma$, respectively.  We also define the following.

\begin{definition}[Time-stopping $\TStop$, $\tcA(-\tcU-)$ template $\au$]\label{def:tstop-au}
\mbox{}
\begin{enumerate}
\item
    $\Lrelmunu$ formula $\TStop$ (``time stopping'') is defined as $\TStop = z.(\forall\, z < 1)$.
\item
    Let $\gamma_1, \gamma_2 \in \Phirelmunu$ be such that $z \not\in cs(\gamma_1) \cup cs(\gamma_2)$.  Then $\au(\gamma_1, \gamma_2) \in \Phirelmunu$ is defined as
    \[
    \au(\gamma_1, \gamma_2)
    =
    \exists_{\gamma_1} (\gamma_2 \lor (\TStop \land \forall\, \gamma_1)).
    \]
\end{enumerate}
\end{definition}

Intuitively,
$\TStop$ is a formula satisfied by states whose longest time-elapse transition is less than 1.  It is straightforward to see that a state satisfies $\exists\, \TStop$ iff the state is \emph{time-bounded}:  all of the state's time-elapse transitions have duration bounded above by $\delta$ for some $\delta \in \delays$.  These facts are formalized as follows for timed automaton $\TA$ with $\TS{\TA} = (\QTA, \ldots)$.
\begin{align*}
\musemTAnotheta{\TStop}
    &= \{q \in \QTA \mid \forall\,\delta \in \delays \colon q \ttransTA{\delta} \implies \delta < 1 \}
    \\
\musemTAnotheta{\exists\, \TStop}
    &= \{q \in \QTA \mid \exists \delta' \in \delays \colon \forall\,\delta \in \delays \colon q \ttransTA{\delta} \implies \delta \leq \delta' \}
\end{align*}
The operator $\au$ can be seen as a bounded weak-until operator over time elapses.  Specifically, a state in a timed automaton satisfies $\au(\gamma_1,\gamma_2)$ iff either there is a time-elapse from the state leading to a state satisfying $\gamma_2$, with every intervening state satisfying $\gamma_1 \lor \gamma_2$, or there is bound $\delta$ on the length of time elapses from the state, and every time elapse from the state leads to a state satisfying $\gamma_1$.  The following lemma formalizes this intuition.

\begin{lemma}[Semantics of $\au$]\label{lem:semantics-of-au}
Let $\TA$ be a timed automaton, $\TS{\TA} = (\QTA, \ldots)$, and let $\theta \in \Var \to 2^{\QTA}$.  Then for any $\gamma_1, \gamma_2 \in \Phirelmunu$,
\[
\musemTAtheta{\au(\gamma_1,\gamma_2)}
=
\musemTAtheta{\exists_{\gamma_1} \gamma_2}
\cup \musemTAtheta{(\exists\, \TStop) \land (\forall\, \gamma_1) }.
\]
\end{lemma}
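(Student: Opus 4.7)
The plan is to let $\psi = \gamma_2 \lor (\TStop \land \forall\,\gamma_1)$ and $\psi' = \TStop \land \forall\,\gamma_1$, so that $\au(\gamma_1,\gamma_2) = \exists_{\gamma_1}\psi$, and then derive the identity from two preliminary facts about $\psi'$. First, I would establish \emph{Consequence}: $\psi' \Rightarrow \gamma_1$ semantically, because time-reflexivity gives $q \ttransTA{0} q$, so any $q$ satisfying $\forall\,\gamma_1$ must satisfy $\gamma_1$ itself. Second, I would establish \emph{Propagation}: if $q \models \psi'$ and $q \ttransTA{\eta} q'$, then $q' \models \psi'$, since every $q' \ttransTA{\zeta} q''$ extends via time-additivity to $q \ttransTA{\eta+\zeta} q''$, yielding $q'' \models \gamma_1$ and $\eta + \zeta < 1$.

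For the inclusion $(\supseteq)$ I would handle the two disjuncts separately. If $q \models \exists_{\gamma_1}\gamma_2$ with witness $\delta$, the same $\delta$ witnesses $\exists_{\gamma_1}\psi$, since $\gamma_2 \Rightarrow \psi$ at the endpoint and $\gamma_1 \lor \gamma_2 \Rightarrow \gamma_1 \lor \psi$ along the way. If $q \models (\exists\,\TStop) \land \forall\,\gamma_1$, I would pick $\delta$ with $q \ttransTA{\delta} q'$ and $q' \models \TStop$: the global hypothesis $q \models \forall\,\gamma_1$ combined with time-additivity upgrades $q'$ to $\forall\,\gamma_1$, hence to $\psi'$; and for every $\delta' < \delta$, time-continuity supplies $q \ttransTA{\delta'} q''$ with $q'' \models \gamma_1$, so $\delta$ witnesses $\exists_{\gamma_1}\psi$ at $q$.

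For $(\subseteq)$, suppose $\delta_0$ witnesses $\exists_{\gamma_1}\psi$ at $q$. Consequence lets me sharpen the side condition to $\delta'(q) \models \gamma_1 \lor \gamma_2$ for every $\delta' < \delta_0$ (the $\psi'$ disjunct already implies $\gamma_1$). I would then split on whether $\gamma_2$ holds anywhere along $[0,\delta_0]$: (i) if some $\eta \in [0,\delta_0]$ has $q \ttransTA{\eta} q_\eta$ with $q_\eta \models \gamma_2$, then $\eta$ itself witnesses $\exists_{\gamma_1}\gamma_2$ at $q$, using the sharpened side condition for every $\eta' < \eta \leq \delta_0$; (ii) otherwise $\delta_0(q) \not\models \gamma_2$, so $\delta_0(q) \models \psi'$ since it must satisfy $\psi$. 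In case (ii), every $\delta' < \delta_0$ with $q \ttransTA{\delta'}$ must then satisfy $\gamma_1$ (it cannot satisfy $\gamma_2$ by the case hypothesis), Propagation extends $\gamma_1$ to every $\delta'$-successor with $\delta' \geq \delta_0$ via $\delta_0(q) \models \psi'$, and $\delta_0(q) \models \psi' \Rightarrow \TStop$ yields $\exists\,\TStop$, so $q \models (\exists\,\TStop) \land \forall\,\gamma_1$.

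The only subtlety I anticipate is bookkeeping the partial nature of the time-elapse relation: throughout the proof, a claim of the form $\delta'(q) \models \phi$ carries the implicit premise that $q \ttransTA{\delta'}$ is defined, which I would always justify from existing witnesses via time-continuity. Beyond this, the argument is a direct manipulation of the disjunction in $\psi$ mediated by the two preservation facts, and the case analysis in $(\subseteq)$ is forced to be clean by the asymmetry $\psi' \Rightarrow \gamma_1$.
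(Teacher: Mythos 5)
Your proof is correct, and it reaches the identity by a somewhat different decomposition than the paper's. The paper first asserts a distributivity law, $\musemTAtheta{\exists_{\gamma'_1}(\gamma'_2 \lor \gamma'_3)} = \musemTAtheta{\exists_{\gamma'_1}\gamma'_2} \cup \musemTAtheta{\exists_{\gamma'_1}\gamma'_3}$, to split $\au(\gamma_1,\gamma_2)$ into $\exists_{\gamma_1}\gamma_2$ and $\exists_{\gamma_1}(\TStop \land \forall\,\gamma_1)$, and then separately proves $\musemTAtheta{\exists_{\gamma_1}(\TStop \land \forall\,\gamma_1)} = \musemTAtheta{(\exists\,\TStop)\land(\forall\,\gamma_1)}$. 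You instead keep the formula whole and, in the $\subseteq$ direction, case-split on whether $\gamma_2$ is ever reached along $[0,\delta_0]$, with your \emph{Consequence} fact ($\psi'\Rightarrow\gamma_1$ via time-reflexivity) collapsing the side condition to $\gamma_1\lor\gamma_2$ and your \emph{Propagation} fact carrying $\TStop\land\forall\,\gamma_1$ forward via time-additivity. The underlying observations are the same in both arguments, but your organization has two advantages: it never appeals to the distributivity law in its general form (which, stated for arbitrary $\gamma'_2,\gamma'_3$, is delicate and really only goes through here because of the asymmetry $\psi'\Rightarrow\gamma_1$ that you make explicit), and your $\supseteq$ direction for the second disjunct extracts the witnessing delay directly from the semantics of $\exists\,\TStop$ rather than via the paper's choice of $\delta_q$ in the interval $(\delta_m - 1, \delta_m)$, which quietly misbehaves when the bound $\delta_m$ is $0$. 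Your closing remark about always discharging the definedness premise $q \ttransTA{\delta'}$ via time-continuity is exactly the right bookkeeping discipline.
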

\begin{proof}
Fix $\TA$, $\TS{\TA} = \ttsTA$, $\theta \in \Var \to 2^{\QTA}$, and $\gamma_1, \gamma_2 \in \Phirelmunu$.
We first note that,
based on the semantics of $\Lrelmunu$, it is the case that for any $\gamma'_1, \gamma'_2, \gamma'_3 \in \Phirelmunu$,
\[
\musemTAtheta{\exists_{\gamma'_1} (\gamma'_2 \lor \gamma'_3)}
= \musemTAtheta{(\exists_{\gamma'_1} \gamma'_2) \lor (\exists_{\gamma'_1} \gamma'_3)}.
\]
Consequently,
\begin{align*}
\musemTAtheta{\au(\gamma_1,\gamma_2)}
    &= \musemTAtheta{\exists_{\gamma_1} (\gamma_2 \lor (\TStop \land \forall\, \gamma_1))}
    \\
    &= \musemTAtheta{(\exists_{\gamma_1} \gamma_2) \lor \exists_{\gamma_1} (\TStop \land \forall\, \gamma_1)}
    \\
    &=  \musemTAtheta{\exists_{\gamma_1} \gamma_2} \cup
        \musemTAtheta{\exists_{\gamma_1} (\TStop \land \forall\, \gamma_1)}.
\end{align*}
To finish the proof, we show that
$\musemTAtheta{\exists_{\gamma_1} (\TStop \land \forall\, \gamma_1)}
= \musemTAtheta{(\exists\, \TStop) \land (\forall\, \gamma_1)}$.
So assume that $q \in \musemTAtheta{\exists_{\gamma_1} (\TStop \land \forall\, \gamma_1)}$.
This means there exists $\delta_q \in \delays$ such that
$\delta_q(q) \in \musemTAtheta{\TStop \land \forall\, \gamma_1}$
and for all $\delta' < \delta_q, \delta'(q) \in \musemTAtheta{\gamma_1 \lor (\TStop \land \forall\, \gamma_1)}$.
As $\musemTAtheta{\gamma_1 \lor (\TStop \land \forall\, \gamma_1)} = \musemTAtheta{\gamma_1}$,
we have that for all $\delta' < \delta_q, \delta'(q) \in \musemTAtheta{\gamma_1}$.
Since $\delta_q(q) \in \musemTAtheta{\TStop \land \forall\, \gamma_1}$, there exists $\delta_1$ such that $0 \leq \delta_1 < 1$ and such that for all $\delta_1' \in \delays$ and $q' \in \QTA$ if $\delta_q(q) \ttransTA{\delta_1'} q'$ then $\delta_1' < \delta_1$ and $q \in \musemTAtheta{\gamma_1}$.
Thus $q \in \musemTAtheta{\exists\, \TStop}$, and for all $\delta \in \delays$ and $q' \in \QTA$, if $q \ttransTA{\delta} q'$ then $q' \in \musemTAtheta{\gamma_1}$.
Consequently, $q \in \musemTAtheta{\forall\, \gamma}$, and we have $q \in \musemTAtheta{(\exists\, \TStop) \land (\forall\, \gamma_1)}$ and $\musemTAtheta{\exists_{\gamma_1} (\TStop \land \forall\, \gamma_1)} \subseteq \musemTAtheta{(\exists\, \TStop) \land (\forall\, \gamma_1)}$.

Now assume that $q \in \musemTAtheta{(\exists\, \TStop) \land (\forall\, \gamma_1)}$.
Since $q \in \musemTAtheta{\exists\, \TStop}$
it follows that there is a minimum $\delta_m \in \delays$
such that for all $\delta \in \delays$, if $q \ttransTA{\delta}$ then $\delta \leq \delta_m$.
Let $\delta_q$ be such that $\delta_m - 1 < \delta_q < \delta_m$.  It can be seen that $\delta_q(q) \in \musemTAtheta{\TStop \land \forall\, \gamma_1}$ and for all $\delta' < \delta_q, \delta'(q) \in \musemTAtheta{\gamma_1} = \musemTAtheta{\gamma_1 \lor (\TStop \land \forall\, \gamma_1)}$.
Consequently, $q \in \musemTAtheta{\exists_{\gamma_1} (\TStop \land \forall\, \gamma_1)}$ and $\musemTAtheta{(\exists\, \TStop) \land (\forall\, \gamma_1)} \subseteq \musemTAtheta{\exists_{\gamma_1} (\TStop \land \forall\, \gamma_1)}$, whence $\musemTAtheta{\exists_{\gamma_1} (\TStop \land \forall\, \gamma_1)}
= \musemTAtheta{(\exists\, \TStop) \land (\forall\, \gamma_1)}$.  This completes the proof.\qedhere
\end{proof}

\paragraph{Translation for TF/nZ timed automata.}

In a TF / nZ timed automaton, every state is the source of at least one run, and every execution with infinitely many action transitions is time-divergent.  We define the translation of TCTL to $\Lrelmunu$ for TF / nZ timed automata as a mapping $\embed_{tn}$.

\begin{definition}[TF / nZ translation]
Let $\phi$ be a TCTL formula.  Then $\Lrelmunu$ formula $\embed_{tn}(\phi)$ is defined inductively as follows, where $A \in \AP_\clocks$.
\begin{align*}
\embed_{tn}(A)
    &= A
    \\
\embed_{tn}(\lnot \phi)
    &= \lnot \,\embed_{tn}(\phi)
    \\
\embed_{tn}(\phi_1 \lor \phi_2)
    &= \embed_{tn}(\phi_1) \lor \embed_{tn}(\phi_2)
    \\
\embed_{tn}(\tcE(\phi_1 \tcU \phi_2))
    &=  \mu X . \exists_{\embed_{tn}(\phi_1)}
        (\embed_{tn}(\phi_2) \lor
            (\embed_{tn}(\phi_1) \land \dia{\Sigma} X))
    \\
\embed_{tn}(\tcA (\phi_1 \tcU \phi_2))
    &=  \mu X . \au(\embed_{tn}(\phi_1) \land [\Sigma]X, \embed_{tn}(\phi_2))
    \\
\embed_{tn}(z.\phi)
    &= z.\embed_{tn}(\phi)
\end{align*}
\end{definition}
%
It can be seen that for any state in a TF / nZ timed automaton satisfying $\TStop$ and any time-elapse possible from that state, there must be an additional (possibly 0) time elapse time leading to a state having at least one action transition; otherwise, that state would be a timelock state.
In addition, for any TCTL formula $\phi$, $\embed_{tn}(\phi)$ is \emph{closed}; there are no unbound propositional variables in $\embed_{tn}(\phi)$ for any $\TCTL$ formula $\phi$.  If a $\Lrelmunu$ formula $\gamma$ is closed, it follows that for any environments $\theta$ and $\theta'$, $\musemTAtheta{\gamma} = \musemTA{\gamma}{\theta'}$.  We will write $\musemTAnotheta{\gamma}{}$ for this common value in what follows when $\gamma$ is closed.

The next theorem states the correctness of $\embed_{tn}(-)$.

\begin{theorem}[Translation correctness for TF / nZ]\label{thm:divergent-ta-Lrelmunu-subsumes-TCTL}
Let $\TA$ be a TF / nZ timed automaton over time-safe $\Sigma$ and clock-safe $\AP$.
Then for all TCTL formulas $\phi$,
$
\tcsemTA{\phi} = \musemTAnotheta{\embed_{tn}(\phi)}.
$
\end{theorem}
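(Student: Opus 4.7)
I proceed by induction on $\phi$. The atomic case is immediate since $\tcsemTA{A} = \musemTAnotheta{A} = \Lab_{\TA}(A)$, and negation, disjunction, and the freeze quantifier $z.\phi'$ follow directly from the induction hypothesis together with the matching clauses in the two semantics (for $z.\phi'$ both semantics reset $z$ in $q$ before evaluating $\phi'$). All substance lies in the two until operators, where the induction hypothesis supplies $\musemTAnotheta{\embed_{tn}(\phi_i)} = \tcsemTA{\phi_i}$ for $i = 1, 2$.

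For $\phi = \tcE(\phi_1 \tcU \phi_2)$, let $Y = \tcsemTA{\phi}$ and let $F$ be the monotone operator on $2^{\QTA}$ whose least fixpoint equals $\musemTAnotheta{\embed_{tn}(\phi)}$. I prove $Y = \mu F$ using the two strategies of Section~\ref{sec:tctl-vs-Lrelmunu}. For $Y \subseteq \mu F$, I build a well-founded support structure on $Y$: for each $q \in Y$ pick a witnessing run $r_q$ and let $n(q) \in \nats$ be the number of action transitions on its minimal until prefix, which is finite by non-Zenoness. Set $q' \prec q$ iff $q'$ is the target of the first action transition of $r_q$; then $\prec$ is well-founded, and unfolding one step of $r_q$ places $q$ in $F(\preimg{\prec}{q})$, with Lemma~\ref{lem:mup} supplying that the intermediate time-successors satisfy $\phi_1 \lor \phi_2$. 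For $\mu F \subseteq Y$, I show $Y$ is a pre-fixpoint: any $q \in F(Y)$ carries an $\exists_{\ldots}$ witness that is either a time-elapse to a state satisfying $\phi_2$ (extended to a run using timelock-freedom) or a time-elapse followed by an action into some $q' \in Y$ (composed with $r_{q'}$), yielding $q \in Y$ in either case.

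For $\phi = \tcA(\phi_1 \tcU \phi_2)$ I first apply Lemma~\ref{lem:semantics-of-au} to rewrite the body of the fixpoint as
\[
\bigl(\exists_{\embed_{tn}(\phi_1) \land [\Sigma] X}\, \embed_{tn}(\phi_2)\bigr) \;\lor\; \bigl((\exists\, \TStop) \land \forall\,(\embed_{tn}(\phi_1) \land [\Sigma] X)\bigr),
\]
and call the associated operator $G$. For the pre-fixpoint direction $\mu G \subseteq \tcsemTA{\phi}$, given $q \in G(\tcsemTA{\phi})$ and any run $r$ from $q$, the two disjuncts split according to whether $r$ reaches $\phi_2$ purely by time elapse or is forced to take an action before time elapses to the $\TStop$ bound; in the action case the successor lies in $\tcsemTA{\phi}$, so the remainder of $r$ is an until-path by induction, and concatenation yields an until-path from $q$. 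For $\tcsemTA{\phi} \subseteq \mu G$ I again exhibit a well-founded support, ranking $q$ by the least ordinal $n$ such that every run from $q$ reaches $\phi_2$ within $n$ action transitions on its minimal until prefix.

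The primary obstacle is showing that this rank is actually a well-defined natural number: the supremum over runs is taken over an uncountable set and need not a priori be finite. My plan is to detour through the region automaton $R_\phi(\TA)$: by Lemma~\ref{lem:region-graph-consistency-Lrelmunu} the fixpoint $\mu G$ on $\QTA$ is the $con$-image of the analogous fixpoint on the finite state space $Q_R$, so ordinary Kleene iteration on $Q_R$ stabilises in finitely many steps and yields a $\nats$-valued rank at the region level. Pulling this rank back through $con$ gives the desired well-founded support on $\tcsemTA{\phi}$, with non-Zenoness ruling out infinite $\varepsilon$-only descending chains in $R_\phi(\TA)$ and timelock-freedom guaranteeing that every region-automaton state has a continuation against which the argument can be run.
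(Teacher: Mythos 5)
Your overall architecture matches the paper's: structural induction, routine propositional/freeze cases, a well-founded support structure for the inclusion $\tcsemTA{\phi} \subseteq \mu F$ and a pre-fixpoint argument for the converse, and the use of Lemma~\ref{lem:semantics-of-au} to split the $\tcA$ body into its $\exists$ and $(\exists\,\TStop) \land \forall$ disjuncts. Your treatment of $\tcE(\phi_1 \tcU \phi_2)$ is essentially the paper's argument (the paper ranks by the \emph{minimum} over witnessing runs of the action count of the minimal until prefix, rather than fixing one run $r_q$, but both yield a well-founded support structure in the same way).

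The gap is in your resolution of the obstacle you correctly identify for $\tcA(\phi_1 \tcU \phi_2)$. Your rank --- the least $n$ such that \emph{every} run from $q$ reaches $\phi_2$ within $n$ action transitions on its minimal until prefix --- is indeed not obviously finite, but the region-automaton detour you propose does not repair this as stated. The rank produced by Kleene iteration of the symbolic fixpoint is defined on $\musemsymTAtheta{\embed_{tn}(\phi)}$, i.e.\ on (the $abs$-image of) $\mu G$, not on $\tcsemTA{\phi}$; transporting it to $\tcsemTA{\phi}$ via $con$ presupposes the inclusion $\tcsemTA{\phi} \subseteq \mu G$ that you are trying to prove. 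To make the detour non-circular you would instead need a region-level characterization of the \emph{TCTL} semantics (saturation of $\tcsemTA{\phi}$ and a finitary K\"onig-type argument over $R_\phi(\TA)$), none of which is supplied by Lemma~\ref{lem:region-graph-consistency-Lrelmunu}, which speaks only about $\Lrelmunu$ formulas. Your parenthetical about non-Zenoness ruling out ``infinite $\varepsilon$-only descending chains'' also points at the wrong obstruction: $\varepsilon$-edges are time steps, and the chains that threaten well-foundedness are those built from \emph{action} transitions.

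The paper avoids the numeric rank entirely, and with it the finiteness question. It defines $q' \prec q$ directly: $q' \prec q$ iff $q'$ is the target of some action transition occurring on the minimal until prefix $mup_\phi(r)$ of \emph{some} run $r \in \runs{\TA}{q}$. Well-foundedness is then proved by contradiction: an infinite descending chain $\cdots \prec q_1 \prec q_0$ concatenates its witnessing executions into an infinite execution with infinitely many action transitions; by non-Zenoness this execution is time-divergent, hence a run from $q_0$; by Lemma~\ref{lem:mup} every state on it lies in $Q_1 \setminus Q_2$, so it is not an until path, contradicting $q_0 \in \tcsemTA{\phi}$. This is the same K\"onig-flavoured use of non-Zenoness you were reaching for, but executed directly on the TTS with no appeal to the region graph. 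I recommend replacing your rank-based plan with this relational one; the rest of your $\tcA$ argument (the pre-fixpoint direction, the case split on whether $\phi_2$ is reached by pure time elapse versus after an action, and the use of timelock-freedom to extend finite witnesses to runs) then goes through as in the paper.
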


The proof of this theorem is based on the following intuitions.  First, we note that the translations of TCTL operators that are also operators in $\Lrelmunu$ (atomic propositions, clock constraints, boolean connectives, freeze quantification) are immediate.  For $\tcE(\phi_1 \tcU \phi_2)$, because $\TA$ is TF / nZ, any execution to a state satisfying $\phi_2$ along which $\phi_1 \lor \phi_2$ holds can be extended into a run, because the state where $\phi_2$ is true must be the source of at least one run.
The case of $\tcA(\phi_1 \tcU \phi_2)$ is more complicated.
Note that in TF / nZ timed automata, any execution with an infinite number of action transitions is a run, i.e., it is time divergent.
Furthermore, any run from a state in such a timed automaton is either (1) a time-divergent sequence of time elapses, or (2) a finite (possibly empty) sequence of time elapses, followed by an action transition, followed by a run from the target of the action transition.
For a state $q$ to satisfy $\tcA(\phi_1 \tcU \phi_2)$, every run from that state must satisfy $\phi_1 \tcU \phi_2$.
The analysis now breaks into two cases.
First, if there is a $\delta$ delay in the current state $q$ such that $\delta(q)$ satisfies $\phi_2$, and all preceding states satisfy $\phi_1$, this identifies prefixes of runs of type (1) and those runs of type (2) in which $\phi_2$ becomes true before an action transition.
For any action transition that is taken before $\phi_2$ becomes true, we are in a run of type (2), and we need to check that from the target location, all runs satisfy $\phi_1 \tcU \phi_2$. Together, this is captured by $\exists_{\embed(\phi_1) \land [\Sigma]X}(\embed(\phi_2) \cdots)$.
Second, observe that if $\phi_2$ is never true after any time delay from $q$ then $q$ must be time-bounded.  In this case, $\phi_1$ needs to hold after every time-elapse from $q$, and all action transitions must lead to states satisfying the formula recursively.
This is captured by $\cdots \lor (\TStop \land \forall(\embed(\phi_1) \land [\Sigma]X))$.
We now prove Theorem~\ref{thm:divergent-ta-Lrelmunu-subsumes-TCTL} formally.

\begin{proof}[Proof of Theorem~\ref{thm:divergent-ta-Lrelmunu-subsumes-TCTL}]
Fix time-safe $\Sigma$ and clock-safe $\AP$, and let $\TA = \genTA$ be a TF / nZ timed automaton over $\Sigma$ and $\AP$.  Also let $\TS{\TA} = \ttsTA$ be the TTS associated with $\TA$.
Most cases are routine and left to the reader.  We consider here the cases when $\phi = \tcE(\phi_1 \tcU \phi_2)$ and $\phi = \tcA (\phi_1 \tcU \phi_2)$.  We use the following abbreviations.
\begin{align*}
Q_\phi
    &= \tcsemTA{\phi}
    \\
Q_1
    &= \tcsemTA{\phi_1}
    \\
Q_2
    &= \tcsemTA{\phi_2}
\\
Q_{12}
    &= \tcsemTA{\phi_1 \lor \phi_2}
\\
\runs{\phi}{q}
    &= \{ r \in \runs{\TA}{q} \mid r \in \untils{\TA}{Q_{12}}{Q_2}\}
\end{align*}
Note that $\runs{\phi}{q}$ consists of all runs from $q \in Q$ that are also until paths from $Q_{12}$ to $Q_2$.  If $r \in \runs{\phi}{q}$ then Lemma~\ref{lem:mup} guarantees the existence of a minimum prefix $\pi$ of $r$ such that $\pi \in \untils{\phi}{Q_{12}}{Q_2}$; we use $mup_\phi(r) \in \executions{\TA}{q}$ to refer to this minimum until-prefix of $r$.
\begin{description}
\item[$\phi = \tcE(\phi_1 \tcU \phi_2)$.]
    The induction hypothesis guarantees that $Q_1 = \musemTAnotheta{\embed_{tn}(\phi_1)}$ and $Q_2  = \musemTAnotheta{\embed_{tn}(\phi_2)}$, and thus also that $Q_{12} = \musemTAnotheta{\embed_{tn}(\phi_1 \lor \phi_2)}$.
    We introduce the following.
    \begin{align*}
    \gamma_1
        &= \embed_{tn}(\phi_1)
    \\
    \gamma_2
        &= \embed_{tn}(\phi_2)
    \\
    \gamma
        &=  \exists_{\gamma_1}
            (\gamma_2 \lor
                (\gamma_1 \land \dia{\Sigma} X))
    \\
    f_\gamma(S)
    &= \musemTA{\gamma}{\theta[x := S]}
    \end{align*}
    Note that $\gamma \in \Phirelmunu$ is a formula in $\Lrelmunu$,
    that $\embed_{tn}(\phi) = \mu X.\gamma$, that the only free propositional variable in $\gamma$ is $X$, and that $\gamma_1$ and $\gamma_2$ contain no free propositional variables.
    Also note that $f_\gamma \in 2^{\QTA} \to 2^{\QTA}$.
    We know that $\musemTAnotheta{\embed_{tn}(\phi)} = \mu f_\gamma$, where $\mu f_\gamma \subseteq \QTA$ is the least fixpoint of $f_\gamma$ over the complete lattice $(2^{\QTA}, \subseteq, \cup, \cap)$.

    We prove $Q_\phi = \musemTAnotheta{\embed_{tn}(\phi)}$ by showing that $Q_\phi \subseteq \musemTAnotheta{\embed_{tn}(\phi)}$ and $\musemTAnotheta{\embed_{tn}(\phi)} \subseteq Q_\phi$.
    For the former, it suffices to give a well-founded relation ${\prec} \subseteq Q_\phi \times Q_\phi$ such that $(Q_\phi, \prec)$ is a support structure for $f_\gamma$.
    We define $\prec$ by first introducing the notion of \emph{minimum action count}, $mac_{\phi}(q) \in \nats$, for $q \in S_\phi$, which is defined by
    \[
    mac_{\phi}(q) = \min \{\; |\, \indices_\Sigma(mup_\phi(r)) \,|
        \;\mid r \in \runs{\phi}{q} \}.
    \]
    Intuitively, $mac_{\phi}(q)$ is  the minimum, over all runs $r$ ensuring that $q \in Q_\phi$, of the number of action transitions in $r$'s minimum until-path prefix.
    We now define $q_1 \prec q_2$ iff $mac_\phi(q_1) < mac_\phi(q_2)$.  Clearly $\prec$ is well-founded.  We now must establish that $(Q_\phi, {\prec})$ is a support structure for $f_\gamma$.
    So pick $q \in S_\phi$; we must show that $q \in f_\gamma(\preimg{{\prec}}{q})$, where $\preimg{{\prec}}{q} = \{q' \in S_\phi \mid q' \prec q\}$.
    The proof consists of two subcases.
    \begin{itemize}
    \item $mac_\phi(q) = 0$.
        In this subcase, $\preimg{{\prec}}{q} = \emptyset$, and there is an $r \in R_\phi(q)$ such that $mup_\phi(r)$ contains no action transitions.
        From the semantics of $\Lrelmunu$ it can be be seen that $f_\gamma(\emptyset) = \musemTAnotheta{\exists_{\gamma_1} \gamma_2}$.
        That $q \in f_\gamma(\emptyset)$ is immediate from the semantics of $\Lrelmunu$ and the induction hypothesis.
    \item $mac_\phi(q) > 0$.
        In this case $\preimg{{\prec}}{q} \neq \emptyset$, and there exists $r \in \runs{\phi}{q}$ such that $|\indices_{\Sigma}(mup_\phi(r))| = mac_\phi(q) > 0$.
        Pick such an $r$, and let $\pi = mup_\phi(r)$; note that $|\pi| \geq 1$ since it must contain at least one action transition.  Define $r' = \suffix{r}{> |\pi|}$; note that $r = \pi \cdot r'$.
        Now let $i_\pi = \min (\indices_\Sigma(\pi))$ be the index of the first action transition in $\pi$; note that $\pi = \prefix{\pi}{< i_\pi} \cdot \pi[i_\pi] \cdot \suffix{\pi}{> i_\pi}$.
        We know that for every $(\delta,i) \in \indices_s(\prefix{\pi}{<i_\pi})$, $(\prefix{\pi}{<i_\pi})[\delta,i] = \pi[\delta,i]$, and thus Lemma~\ref{lem:mup} guarantees that for every $(\delta,i) \in \indices_s(\prefix{\pi}{<i_\pi})$, $\pi[\delta,i] \in Q_1 \setminus Q_2$.
        Now consider $q' = \tgt(\prefix{\pi}{<i_\pi}) = \src(\pi[i_\pi])$ and $q'' = \tgt(\pi[i_\pi])$; we will show that $q'' \in \preimg{{\prec}}{q}$ and from this conclude that $q \in f_\gamma(\preimg{{\prec}}{q})$, which is the desired result.  To see that $q'' \in \preimg{{\prec}}{q}$, first define $r'' = \suffix{\pi}{> i_\phi} \cdot r'$ and observe that $r'' \in \runs{\phi}{q''}$ and $\suffix{\pi}{> i_\pi} = mup_\phi(r'')$.
        As $|\indices_\Sigma(\pi')| < |\indices_\Sigma(\pi)|$ it therefore must be the case that $mac_\phi(q'') < mac_\phi(q)$, and thus $q'' \in \preimg{{\prec}}{q}$.  To see why this fact implies that $q \in f_\gamma(\preimg{{\prec}}{q})$, we first note that $\prefix{\pi}{<i}$ consists entirely of delay transitions, meaning that, taking $\delta = D(\prefix{\pi}{<i})$, $\delta(q) = \{ q' \}$, and for every $\delta' \leq \delta$, $\delta'(q) \cap (Q_1 \setminus Q_2) \neq \emptyset$.  Since $q' \ttrans{\Sigma} q''$ and $q'' \in \preimg{{\prec}}{q}$ the semantics of $\Lrelmunu$ and the induction hypothesis guarantee that $q' \in \musemTA{\gamma_1 \land \dia{\Sigma} X}{\theta[X := \preimg{{\prec}}{q}]}$ and that $q \in f_\gamma(\preimg{{\prec}}{q})$.
    \end{itemize}

    To finish the $\phi = \tcE(\phi_1 \tcU \phi_2)$ case we now establish that $\musemTAnotheta{\embed_{tn}(\phi)} \subseteq Q_\phi$.  It suffices to show that $f_\gamma(Q_\phi) \subseteq Q_\phi$, as the definition of $\musemTAnotheta{\mu X.\gamma} = \musemTAnotheta{\embed_{tn}(\phi)}$ gives the desired result.  So suppose that $q \in f_\gamma(Q_\phi)$; we show that $q \in Q_\phi$ by exhibiting a run $r \in \runs{\TA}{q}$ such that $r \in \untils{\TA}{Q_{12}}{Q_2}$.  Since $q \in f_\gamma(Q_\phi)$ the semantics of $\Lrelmunu$ ensure that there exists $\delta_q \in \delays$ such that the following hold.
    \begin{enumerate}
    \item
        $\delta_q(q) \cap \musemTA{\gamma_2 \lor (\gamma_1 \land \dia{\Sigma}X)}{\theta[X := Q_\phi]} \neq \emptyset$
    \item
        For all $\delta' < \delta_q$, $\delta'(q) \cap \musemTA{\gamma_1 \lor \gamma_2 \lor (\gamma_1 \land \dia{\Sigma}X)}{\theta[X := Q_\phi]}\neq \emptyset$
    \end{enumerate}
    The semantics and the induction hypothesis also imply the following.
    \begin{align*}
    \musemTA{\gamma_2 \lor (\gamma_1 \land \dia{\Sigma}X)}{\theta[X := Q_\phi]}
        &= Q_2 \cup (Q_1 \cap \musemTA{\dia{\Sigma} X}{\theta[X := Q_\phi]})
        \\
    \musemTA{\gamma_1 \lor \gamma_2 \lor (\gamma_1 \land \dia{\Sigma}X)}{\theta[X := Q_\phi]}
        &= Q_1 \cup Q_2 = Q_{12}
    \end{align*}
    Define $\{ q' \} = \delta_q(q)$.
    We now consider two subcases.  In the first $q' \in Q_2$.  Note that the single-transition execution $\pi = q \ttransTA{\delta_q} q'$ is such that $\pi \in \untils{\TA}{Q_{12}}{Q_2}$, and as $\TA$ is TF it further follows that there is a run $r' \in \runs{\TA}{q'}$.  Then $\pi \cdot r \in \runs{\TA}{q}$ and $\pi \cdot r \in \untils{\TA}{Q_{12}}{Q_2}$, and $q \in Q_\phi$.
    In the second subcase $q' \in Q_1 \cap \musemTA{\dia{\Sigma} X}{\theta[X := Q_\phi]}$, meaning $q' \in Q_1 \subseteq Q_{12}$ and $q' \ttransTA{a} q''$ for some $a \in \Sigma$ and $q'' \in Q_\phi$.  Since $q'' \in Q_\phi$ there exists a run $r'' \in \runs{\TA}{q'}$ such that $r'' \in \untils{\TA}{Q_{12}}{Q_2}$.  Now define execution $\pi' = q \ttransTA{\delta} q' \ttrans{a} q''$.  Clearly $r = \pi' \cdot r''$ is such that $r \in \runs{\TA}{q}$ and $r \in \untils{\TA}{Q_{12}}{Q_2}$, and thus $q \in Q_\phi$.  Consequently, $f_\gamma(Q_\phi) \subseteq Q_\phi$, and $\musemTAnotheta{\embed_{tn}(\phi)} \subseteq Q_\phi$.

\item[$\phi = \tcA(\phi_1 \tcU \phi_2)$.]
    The induction hypothesis guarantees that $Q_1 = \musemTAnotheta{\embed_{tn}(\phi_1)}$ and $Q_2 = \musemTAnotheta{\embed_{tn}(\phi_2)}$, and thus also that $Q_{12} = \musemTAnotheta{\embed_{tn}(\phi_1 \lor \phi_2)}$.
    In the argument to follow we will use the following definitions.
    \begin{align*}
    \eta_1
        &= \embed_{tn}(\phi_1) \land [\Sigma] X
    \\
    \eta_2
        &= \embed_{tn}(\phi_2)
    \\
    \eta
        &=  \au(\eta_1, \eta_2)
    \\
    f_{\eta}(S)
        &= \musemTA{\eta}{\theta[x := S]}
    \end{align*}
    Note that $\eta \in \Phirelmunu$ is a formula in $\Lrelmunu$,
    that $\embed_{tn}(\phi) = \mu X.\eta$, that the only free propositional variable in $\eta$ is $X$, and that $\embed_{tn}(\phi_1)$ and $\embed_{tn}(\phi_2)$ contain no free propositional variables.
    Also note that $f_{\eta} \in 2^{\QTA} \to 2^{\QTA}$, and that $\musemTAnotheta{\embed_{tn}(\phi)} = \mu f_{\eta}$.

    The proof now proceeds in a similar manner as the $\tcE(\phi_1 \tcU \phi_2)$ case:  we prove $Q_\phi = \musemTAnotheta{\embed_{tn}(\phi)}$ by establishing that $Q_\phi \subseteq \musemTAnotheta{\embed_{tn}(\phi)}$ and $\musemTAnotheta{\embed_{tn}(\phi)} \subseteq Q_\phi$.
    The proof of the former entails constructing a well-founded support structure for $f_\eta$ from $Q_\phi$; the latter relies on showing that $f_\eta(Q_\phi) \subseteq Q_\phi$.

    To continue the proof that $\tcsemTA{\phi} \subseteq \musemTAnotheta{\embed_{tn}(\phi)}$ we build a well-founded ${\prec} \subseteq Q_\phi \times Q_\phi$ such that $(Q_\phi, {\prec})$ is a support structure for $f_\eta$.
    To define $\prec$, let $q, q' \in Q_\phi$; note that since $q, q' \in Q_\phi$, $\runs{\TA}{q} = \runs{\phi}{q}$ and $\runs{\TA}{q'} = \runs{\phi}{q'}$.  Then $q' \prec q$ iff there exists $r \in \runs{\phi}{q}$, with $\pi = mup_\phi(r)$, such that there is $i \in \indices_\Sigma(\pi)$ with $\tgt(\pi[i]) = q'$.  In this case we call $\prefix{\pi}{\leq i}$ a \emph{witnessing execution} for $q' \prec q$. In words, $q' \prec q$ iff $q'$ is the target state of some action transition appearing in the minimum until-prefix of a run $r$ from $q$.

    We now establish that $\prec$ is well-founded.  So assume, by way of contradiction, that there is an infinite descending chain $\cdots \prec q_1 \prec q_0$.  From the definition of ${\prec}$ and Lemma~\ref{lem:mup} it must be the case that $q_i \in Q_1 \setminus Q_2$ for all $i \in \nats$, for if there is $i$ such that $q_i \in Q_2$ then $\preimg{{\prec}}{q_i} = \emptyset$.  Now let $\pi_{i, i+1}$ a witnessing execution for $q_{i+1} \prec q_i$; note that each $\pi_{i,i+1}$ contains at least one action transition.  We can now construct an infinite execution $r = \pi_{0,1} \cdot \pi_{1,2} \cdot \cdots$; note $r \in \executions{\TA}{q}$.  This execution contains an infinite number of action transitions, and since $\TA$ is nZ, it follows that it is a run; thus $\pi \in \runs{\TA}{q}$.  Lemma~\ref{lem:mup} and the fact that $q_i \in Q_1 \setminus Q_2$ guarantee that there is no $(\delta,i) \in \indices_s(\pi)$ such that $(\delta,i) \in Q_2$; thus $\pi \not\in \untils{\TA}{Q_{12}}{Q_2}$.  But this contradicts the fact that $q \in Q_\phi$ and thus $\pi \in \runs{\phi}{q}$; consequently, $\prec$ must be well-founded.

    We now finish the proof of the $Q_\phi \subseteq \musemTAnotheta{\embed_{tn}(\phi)}$ case by establishing that $(Q_\phi, {\prec})$ is a support structure for $f_\eta$.  So fix $q \in Q_\phi$, and let $S_q = \preimg{{(\prec)}}{q}$; we must show that $q \in f_\eta(S_q)$.
    We begin our argument by first noting that since $\runs{\TA}{q} = \runs{\phi}{q}$, $mup_\phi(r)$ exists for every $r \in \runs{\TA}{q}$.
    Now consider $r \in \runs{\TA}{q}$, with $\pi = mup_\phi(r)$, and suppose that that $(\delta,i) \in \indices_s(\pi)$ is such that for all $(\delta',i') \leq_\pi (\delta,i)$, $\pi[\delta',i'] \not\in Q_2$.
    Also suppose that
    $\pi[\delta,i] \ttransTA{\sigma} q'$ for some $\sigma \in \Sigma$ and $q' \in Q$.
    Note that since $\TA$ is TF, there exists run $r' \in \runs{\TA}{q'}$.  Then $r''$ defined as
    \[
        r'' =
            ext(r,\delta,i)
            \cdot (\pi[\delta,i] \ttransTA{\sigma} q')
            \cdot r'
    \]
    satisfies:
    $r'' \in \runs{\TA}{q}$.
    Intuitively, $r''$ matches $r$ until state $\pi[\delta,i] = r[\delta,i]$, at which point $r''$ takes one of the action transitions available to $r[\delta,i]$ to transition to $q'$ and then follows $r'$.
    Clearly, $r'' \in \untils{\TA}{Q_{12}}{Q_2}$, which, based on the construction of $r''$, further implies that $q' \in Q_\phi$.  From the definitions it then follows that $q' \in S_q$.
    Thus, for any $q \in Q_\phi$ such that, for some run $r \in \runs{\TA}{q}$ and $(\delta,i) \in \indices_s(r)$ with the property that for all $(\delta',i') \leq_r (\delta,i)$, $r[\delta',i'] \not\in Q_2$, we have $r[\delta,i] \in \musemTA{[\Sigma]X}{\theta[X := S_q]}$.
    We now recall that from the definition of $f_\eta$ and Lemma~\ref{lem:semantics-of-au},
    \begin{align*}
        f_\eta(S_q)
            &= \musemTA{\au(\eta_1, \eta_2)}{\theta[X := S_q]}
        \\
            &=  \musemTA{\exists_{\eta_1} \eta_2}{\theta[X := S_q]}
                \cup
                \musemTA{\exists\, \TStop \land \forall\, \eta_1}{\theta[X := S_q]}
    \end{align*}
    To prove that $q \in f_\eta(S_q)$, assume $q \not\in \musemTA{\exists_{\eta_1} \eta_2}{\theta[X := S_q]}$; we must show that $q \in \musemTA{\exists\, \TStop \land \forall\, \eta_1}{\theta[X := S_q]}$.  To see that $q \in \musemTA{\exists\, \TStop}{\theta[X := S_q]}$, assume that this is not the case, i.e.\/ assume $q$ is not time-bounded and thus has a run $r \in \runs{\TA}{q}$ consisting entirely of time-elapse transitions.  Since $r$ is a run and $Q_\phi$, it must be the case that $r \in \untils{\TA}{Q_{12}}{Q_2}$, which also implies that $q \in \musemTA{\exists_{\eta_1} \eta_2}{\theta[X := S_q]}$, yielding a contradiction.  Thus $q \in \musemTA{\exists\, \TStop}{\theta [X:=S_q]}$.
    We now show that $q \in \musemTA{\forall\, \eta_1}{\theta[X := S_a]}$ by showing that for all $\delta, q'$ such that $q \ttransTA{\delta} q'$, $q' \in \musemTA{\eta_1}{\theta[X := S_q]}$.  So assume $q \ttransTA{\delta} q'$.
    We know that execution $q \ttransTA{\delta} q' \not\in \untils{\TA}{Q_{12}}{Q_2}$ since $q \not\in \musemTA{\exists_{\eta_1} \eta_2}{\theta[X := S_q]}$.  Now suppose that $\sigma \in \Sigma$ and $q''$ are such that $q' \ttransTA{\sigma} q''$.  Since $\TA$ is TF $\runs{\TA}{q''} \neq \emptyset$, and every run $r'' \in \runs{\TA}{q''}$ is such that $(q \ttransTA{\delta} q' \ttransTA{\sigma} q'') \cdot r'' \in \untils{\TA}{Q_{12}}{Q_2}$, with $q \ttransTA{\delta} q' \ttransTA{\sigma} q''$ a prefix of $mup_\phi(r')$.  This implies that $q'' \in Q_\phi$ and $q'' \prec q$ and thus $q'' \in \musemTA{X}{\theta[X := S_q]}$.  It then follows that $q' \in \musemTA{[\Sigma] X}{\theta[X := S_q]}$.  It also implies that $q' \in Q_{12}$, and since $q'$ cannot be in $Q_2$ (otherwise $q \in \musemTA{\exists_{\eta_1} \eta_2}{\theta[X := S_1]}$), we have that $q \in Q_1$.  Consequently, $q \in \musemTA{\eta_1}{\theta[X := S_q]}$, thereby completing the proof that $(Q, \prec)$ is a well-founded support ordering for $f_\mu$.

    To finish the proof that $Q_\phi = \musemTAnotheta{\embed_{tn}(\phi)}$, we show that $\musemTAnotheta{\embed_{tn}(\phi)} \subseteq Q_\phi$ by establishing that $f_\eta(Q_\phi) \subseteq Q_\phi$.  To this end, suppose that $q \in f_\eta(Q_\phi)$; we show that $q \in Q_\phi$.
    From the definition of $f_\eta$ and Lemma~\ref{lem:semantics-of-au} we know that $q \in \musemTA{\exists_{\eta_1} \eta_2}{\theta[X := S_q]} \cup \musemTA{(\exists\, \TStop) \land (\forall\, \eta_1)}{\theta[X := S_1]}$; we must show that $q \in Q_\phi$.  The proof proceeds by a case split:  the first case involves $q \in \musemTA{\exists_{\eta_1} \eta_2}{\theta[X := S_q]}$.  In this case there exists $\delta \in \delays$ such that $\delta(q) \cap \musemTA{\eta_1}{\theta[X := Q_\phi]} \neq \emptyset$ and for all $\delta' < \delta, \delta'(q) \cap \musemTA{\eta_1 \lor \eta_2}{\theta[X := Q_\phi]}$.  Define $\Delta_U$ to be the set of all $\delta$ with this property; that is,
    \begin{align*}
    &\Delta_U = \{ \delta \in \delays \mid
    \delta(q) \cap \musemTA{\eta_1}{\theta[X := Q_\phi]} \neq \emptyset \\
    & \qquad\qquad \land
        \forall \delta' \in \delays \colon
        \delta' < \delta \implies \delta'(q) \cap \musemTA{\eta_1 \lor \eta_2}{\theta[X := Q_\phi]}  \neq \emptyset
    \}
    \end{align*}
    Since $\Delta_U \neq \emptyset$ is bounded below (by 0), $\delta_q = \inf \Delta_U$ is well-defined, and has the property that $\delta_q \leq \delta$ for all $\delta \in \Delta_U$.
    Note that it can be the case that either $\delta_q \in \Delta_U$ or $\delta_q \not\in \Delta_U$.
    We now establish that $q \in Q_\phi$ by showing that for all $r \in \runs{\TA}{q}$, $r \in \untils{\TA}{Q_{12}}{Q_2}$.  The argument uses a case split on the form of $r \in \runs{\TA}{q}$.
    \begin{itemize}
    \item
        For all $(\delta,i) \in \indices_s(r)$ such that $\delta \leq \delta_q$, $i=0$.
        In this case $r$ begins with time-elapse transitions whose duration is at least $\delta_q$ and which has no action transitions at $\delta_q$ (because $(\delta_q, i) \not\in \indices_s$ if $i \neq 0$).  It then follows that for all $(\delta,i) \in \indices_s(r)$, $q \ttransTA{\delta} r[\delta,i]$. That $r \in \untils{\TA}{Q_{12}}{Q_2}$ follows from the fact that $\musemTA{\eta_1 \lor \eta_2}{\theta[X := Q_\phi]} \subseteq Q_{12}$.
    \item
        There exists $(\delta,i) \in \indices_s(r)$ such that $\delta \leq \delta_q$ and $i \neq 0$.  This implies that there is $i \in \indices_\Sigma(r)$ such that $st_r(i) \leq \delta_q$; in other words, $r$ contains an action transition that occurs at or before $\delta_q$ time has elapsed.  Let $i_q = \min I_\Sigma(r)$ be the index of the first such transition in $r$.  There are now two subcases to consider.  In the first, $D(\prefix{r}{< i_q}) = \delta_q$ and $\delta_q \in \Delta_U$; in other words, the first action transition happens at time $\delta_q$, but with $\delta_q(q) \in Q_2$, and the reasoning is the same as above.  In the second subcase $r = \prefix{r}{<i_q} \cdot r[i_q] \cdot \suffix{r}{> i_q}$.  It is straightforward to see that for all $(\delta,i) \in \indices_s(\prefix{r}{<i_q}), r[\delta,i] \in \musemTA{\eta_1}{\theta[X := Q_\phi]}$ and thus $\src(r[i_q]) \in \musemTA{\eta_1}{\theta[X := Q_\phi]}$.  This fact also implies that $\tgt(r[i_q]) \in Q_\phi$, which implies that $\suffix{r}{>i_q} \in \untils{\TA}{Q_{12}}{Q_2}$ and thus $r \in \untils{\TA}{Q_{12}}{Q_2}$.
    \end{itemize}
    Thus, if $q \in \musemTA{\exists_{\eta_1} \eta_1}{\theta[X := Q_\phi]}$ then all $r \in \runs{\TA}{q}$ are in $\untils{\TA}{Q_{12}}{Q_2}$ and $q \in Q_\phi$.

    The second case involves $q \in \musemTA{(\exists\, \TStop) \land (\forall\, \eta_1)}{\theta[X := S_1]}$.  Given the above argument, we may further assume that $q \not\in \musemTA{\exists_{\eta_1} \eta_2}{\theta[X := Q_\phi]}$.  To show that $q \in Q_\phi$ we show that $\runs{\TA}{q} \subseteq \untils{\TA}{Q_{12}}{Q_2}$.  So pick $r \in \runs{\TA}{q}$.  Based on the assumptions this run may be written as $r = \pi_1 \cdot \pi_2 \cdot r'$, where $\pi_1$ consists of a finite number of time-elapse transitions, $\pi_2$ is a single action transition, and $r'$ is a run from $\tgt(\pi_2)$.  Since $q \not\in \musemTA{\exists_{\eta_1} \eta_2}{\theta[X := Q_\phi]}$ it follows that for every $(\delta,i) \in \indices_s(\pi_1), r[\delta,i] \in \musemTA{\eta_1}{\theta[X := Q_\phi]}$.  This in particular holds for $\tgt(\pi_1) = \src(\pi_2)$, meaning that $\src(\pi_2) \in \musemTA{[\Sigma] X}{\theta[X := Q_\phi]}$ and $\tgt(\pi_2) \in Q_\phi$.  But then $r' \in \runs{\TA}{\tgt(\pi_2)}$ and $r' \in \untils{\TA}{Q_{12}}{Q_2}$, and consequently $r \in \untils{\TA}{Q_{12}}{Q_2}$.  Therefore, if $q \in \musemTA{(\exists\, \TStop) \land (\forall\, \eta_1)}{\theta[X := S_1]}$ and $q \not\in \musemTA{\exists_{\eta_1} \eta_2}{\theta[X := Q_\phi]}$ then $q \in Q_\phi$.
    We have shown that if $q' \in f_\eta(Q_\phi)$ then $q' \in Q_\phi$.  Thus $\musemTAtheta{\embed_{tn}(\phi)} \subseteq Q_\phi$, and the proof is complete.\qedhere
\end{description}
\end{proof}

\paragraph{Translation for general timed automata.}
We now show how the translations of $TCTL$ into $\Lrelmunu$ given above can be adapted so that they are correct for all timed automata, and not just those satisfying the TF / nZ assumption.  In particular, we will define a transformation $\embed$ so that for all $\TCTL$ formulas $\phi$ and all timed automata $\TA$, $\embed(\phi) \in \Phirelmunu$ has the same semantics as $\phi$.  We will first consider how to translate $\tcE(\phi_1 \tcU \phi_2)$, then focus on $\tcA(\phi_1 \tcU \phi_2)$.

In the case of $\tcE(\phi_1 \tcU \phi_2)$, the proof of Theorem~\ref{thm:divergent-ta-Lrelmunu-subsumes-TCTL} only requires the timelock-free property to show the correctness of the translation for TF / nZ timed automata.
Timelock-freedom in particular allows the extension of any finite execution $\pi \in \untils{\TA}{Q_{12}}{Q_2}$ into a run $r \in \untils{\TA}{Q_{12}}{Q_2}$. This fact was crucial in the proof of Theorem~\ref{thm:divergent-ta-Lrelmunu-subsumes-TCTL}.
If a timed automaton is not TF, then such extensions are not guaranteed, and the translation $\embed_{tn}(\tcE \phi_1 \tcU \phi_2)$ is not guaranteed to be correct.  To handle $\tcE(\phi_1 \tcU \phi_2)$ for general timed automata, we introduce a formula, $\TDIV$, that holds of states that have at least one run, and then incorporate this appropriately into the translation.  $\TDIV$ is given as follows, where $X$ and $Y$ are distinct variables in $\Var$; note that $z \in \clocksF$ from the definition of $\Lrelmunu$ and thus cannot appear in any timed automaton.  It is adapted from similar formulas in~\cite{fontana-2014,henzinger-symbolic-model-1994}.
\[
\TDIV = \nu X . z . (\mu Y . \exists\,((z \geq 1 \land X) \lor \dia{\Sigma} Y))
\]
This formula contains an alternating fixpoint; in particular, the bound variable, $X$, of the outer greatest fixpoint appears free in the body of the inner, least fixpoint.  States satisfying this formula are allowed to perform a finite number of action and delay transitions during any time interval whose duration is at least 1.
Note that $\TDIV$ is closed, and thus for any $\TA$ and environments $\theta, \theta'$, $\musemTAtheta{\TDIV} = \musemTA{\TDIV}{\theta'}$.  As before, we write $\musemTAnotheta{\TDIV}$ for this environment-independent value.  The next lemma characterizes the semantics of $\TDIV$.

\begin{lemma}[Semantics of $\TDIV$]\label{lem:semantics-of-TDIV}
Let $\TA$ be a timed automaton, and let $\TS{\TA} = (\QTA, \ldots)$.  Then $\musemTAnotheta{\TDIV} = \{q \in \QTA \mid \runs{\TA}{q} \neq \emptyset\}$.
\end{lemma}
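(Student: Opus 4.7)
My plan is to prove the two inclusions $\musemTAnotheta{\TDIV} \subseteq R$ and $R \subseteq \musemTAnotheta{\TDIV}$ separately, where $R = \{q \in \QTA \mid \runs{\TA}{q} \neq \emptyset\}$, using the Tarski--Knaster fixpoint characterizations recalled at the start of Section~\ref{sec:tctl-expressiveness}. To set things up, let $g(S) = \musemTA{z.(\mu Y.\,\exists\,((z \geq 1 \land X) \lor \dia{\Sigma} Y))}{\theta[X := S]}$ be the monotone operator whose greatest fixpoint is $\musemTAnotheta{\TDIV}$, and let $h_S(T) = \musemTA{\exists\,((z \geq 1 \land X) \lor \dia{\Sigma} Y)}{\theta[X := S, Y := T]}$ denote the inner monotone operator whose least fixpoint sits inside the $z.$-quantifier.

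For $R \subseteq \musemTAnotheta{\TDIV}$, I would check that $R$ is a post-fixpoint of $g$. Fix $q \in R$ with a run $r$. Since $D(r) = \infty$, time-additivity yields a least $n \in \nats$ with $D(\prefix{r}{\leq n}) \geq 1$; writing $r_1 = \prefix{r}{\leq n}$ and $r_2 = \suffix{r}{> n}$, the suffix $r_2$ is itself a run from $\tgt(r_1)$, so $\tgt(r_1) \in R$. I would then normalize $r_1$, via time-additivity and time-reflexivity, into a strictly alternating form $q_0 \ttrans{\delta_0} p_0 \ttrans{a_1} q_1 \cdots \ttrans{a_n} q_n \ttrans{\delta_n} p_n$ with $q_0 = q$ and $p_n = \tgt(r_1)$, and apply the well-founded support-structure recipe for $\mu$: order the finite set $\{q_0, \ldots, q_n\}$ by $q_{i+1} \prec q_i$ and verify $q_i \in h_R(\{q_{i+1},\ldots,q_n\})$ for each $i$. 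Because $z \in \clocksF$ is a freeze clock and so is never reset by any edge of $\TA$, freezing $z$ to $0$ at $q_0$ leaves its value at $p_n$ equal to $\sum_i \delta_i \geq 1$; combined with $p_n \in R$, this gives $p_n \models z \geq 1 \land X$ under $X := R$, which justifies $q_n$ via the delay $\delta_n$. For each $i<n$, the delay $\delta_i$ reaches $p_i$, which takes the action $a_{i+1}$ to $q_{i+1}$, placing $p_i \in \dia{\Sigma}\{q_{i+1},\ldots,q_n\}$. Hence $q_0[z:=0] \in \mu Y . h_R(Y)$, so $q \in g(R)$.

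For $\musemTAnotheta{\TDIV} \subseteq R$, I would argue in the converse direction. Any $q \in \musemTAnotheta{\TDIV}$ belongs to some post-fixpoint $S$ of $g$, i.e.\ $S \subseteq g(S)$ with $q \in S$. Starting from $q_0 := q$, membership $q_0 \in g(S)$ unfolds to $q_0[z:=0] \in \mu Y . h_S(Y)$, and any well-founded support witness for this membership decodes into a finite alternating delay/action execution $\rho_0$ from $q_0$ of total duration $\geq 1$ ending in some $q_1 \in S$. Repeating from $q_1$ yields a fragment $\rho_1$ of duration $\geq 1$ ending in $q_2 \in S$, and so on indefinitely. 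Concatenating $\rho_0 \cdot \rho_1 \cdot \rho_2 \cdots$ then produces an element of $\executions{\TA}{q}$ whose duration diverges, hence a run, so $q \in R$.

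The main obstacle will be the bijective correspondence, in both directions, between membership in the inner $\mu Y$ and the existence of a concrete finite delay/action fragment of duration $\geq 1$ ending in $S$: the $\mu Y$-semantics literally yields a derivation tree rather than an execution, so the decoding must be spelled out carefully. At the leaves of the well-founded support structure the disjunct $z \geq 1 \land X$ supplies both the duration bound and continued membership in $S$, while at internal nodes $\exists\,(\dia{\Sigma} Y)$ supplies one (possibly zero-length) delay followed by one action transition to a $\prec$-smaller state; reading the tree in order gives the desired fragment. Conversely, a finite alternating execution of duration $\geq 1$ bottoms out in a state satisfying $z \geq 1 \land X$, enabling construction of a matching support structure. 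A minor additional subtlety is that, because $z$ is a freeze clock that is never reset by automaton edges, the $z.$-reset performed at the start of each iterate $q_k$ in the second direction is a purely semantic operation, invisible to $\TS{\TA}$ transitions, which is exactly what permits honest concatenation of the fragments into a single run.
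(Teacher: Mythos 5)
Your proposal is correct and follows essentially the same route as the paper's proof: both directions reduce to the Tarski--Knaster characterizations, with the key intermediate fact being that the outer operator applied to $S$ yields exactly the states admitting a finite execution of duration at least $1$ ending in $S$ (proved via a well-founded support structure counting action transitions for one inclusion and a pre-fixpoint/decoding argument for the other), followed by decomposition of a run into a duration-$\geq 1$ prefix plus a residual run for one inclusion and co-inductive concatenation of duration-$\geq 1$ fragments for the other. The only differences are organizational --- the paper isolates this intermediate characterization as a separately proved equation and handles the freeze-clock offset by explicitly relativizing the target set to $1 \dotdiv q(z)$, whereas you inline both --- but the substance is the same.
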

\begin{proof}
Fix $\TA$ and $\TS{\TA} = \ttsTA$.  Let $\theta \in \Var \to 2^{\QTA}
$ be arbitrary, and define the following, where $S, S' \subseteq \QTA$.
\begin{align*}
Q_{\TDIV}
    &= \{q \in \QTA \mid \runs{\TA}{q} \neq \emptyset\}
    \\
\gamma_1
    &= z \geq 1 \land X
    \\
\gamma_2
    &= \dia{\Sigma}Y
    \\
\gamma_Y
    &= \mu Y.\exists\,(\gamma_1 \lor \gamma_2)
    \\
f'_S(S')
    &= \musemTA{\exists\,(\gamma_1 \lor \gamma_2)}{(\theta[X := S])[Y := S']}
    \\
f_{\TDIV}(S)
    &= \musemTA{z . \gamma_Y}{\theta[X := S]}
\end{align*}
Note that, for any $S \subseteq \QTA$, $f'_S \in 2^{\QTA} \to 2^{\QTA}$ and $\musemTA{\gamma_Y}{\theta[X := S]} = \mu f'_S$, where $\mu f'_S \subseteq \QTA$ is the least fixpoint of $f'_S$ over complete lattice $(2^{\QTA}, \subseteq, \bigcup, \bigcap)$.
Similarly, $f \in 2^{\QTA} \to 2^{\QTA}$, and $\musemTAnotheta{\TDIV} = \nu f_{\TDIV}$, the greatest fixpoint of $f_{\TDIV}$ over the same lattice.  The Tarksi-Knaster characterization of this greatest fixpoint is as follows.
\[
\nu f_{\TDIV} = \bigcup \{S \subseteq \QTA \mid S \subseteq f_{\TDIV}(S) \}
\]
Before turning to the proof of the lemma, we first establish the following characterization of $f_{\TDIV}(S)$, where $S \subseteq \QTA$.
\begin{align}\label{eq:f-tdiv}
f_{\TDIV}(S) = \{ q \in \QTA \mid \exists \pi \in \executions{\TA}{q} \colon |\pi| < \infty \land D(\pi) \geq 1 \land \tgt(\pi) \in S \}
\end{align}
In other words, states are in $f_{\TDIV}(S)$ iff they have a finite-length execution of duration at least 1 that ends in a state in $S$.
For notational convenience, we define the following.
\begin{align*}
Q_S
    &= \{ q \in \QTA \mid \exists \pi \in \executions{\TA}{q} \colon |\pi| < \infty \land D(\pi) \geq 1 \land \tgt(\pi) \in S \}
    \\
Q'_S
    &= \{ q \in \QTA \mid \exists \pi \in \executions{\TA}{q} \colon |\pi| < \infty \land D(\pi) \geq 1-q(z) \land \tgt(\pi) \in S \}
\end{align*}
$Q'_S$ differs from $Q_S$ in that the constraint on the duration of $\pi$ in the definition is relative to $1- q(z)$, where recall that $q(z)$ is the value state $q$ assigns to clock $z$.  Since $(q[z:=0])(z) = 0$, it follows that $q[z := 0] \in Q'_S$ iff $q[z:=0] \in Q_S$.  Likewise, since $z \in \clocksF$ it follows that $q[z := 0] \in Q_S$ iff $q \in Q_S$.  Consequently, $q \in Q_S$ iff $q[z := 0] \in Q'_S$.

To prove the validity of Equation~\ref{eq:f-tdiv} we now must establish that $f_{\TDIV}(S) = Q_S$ for any $S \subseteq \QTA$.  So fix $S \subseteq \QTA$; it suffices to show that $f_{\TDIV}(S) \subseteq Q_S$ and $Q_S \subseteq f_{\TDIV}(S)$.
For the former, assume $q \in f_{\TDIV}(S)$; we must show that $q \in Q_S$.
The semantics of the $z.$\/ operator guarantees that since $q \in f_{\TDIV}(S)$, it is the case that $q[z := 0] \in \musemTA{\gamma_Y}{\theta[X := S]} = \mu f'_S$.
If $\mu f'_S \subseteq Q'_S$ then we have $q[z := 0] \in Q'_S$, whence $q \in Q_S$ and the set inclusion $f_{\TDIV}(S) \subseteq Q_S$ has been proved.
So we now show that $\mu f'_S \subseteq Q'_S$.  Based on the definition of $\mu f'_S$ this follows if $f'_S(Q'_S) \subseteq Q'_S$.  So assume $q' \in f'_S(Q'_S)$; we must show that $q' \in Q'_S$.
Letting $\theta' = (\theta[X := S])[Y := Q'_S]$, the semantics of $\Lrelmunu$ guarantee that
\[
f'_S(Q'_S) =
\musemTA{\exists\,\gamma_1}{\theta'}
\cup
\musemTA{\exists\,\gamma_2}{\theta'}.
\]
We now do a case analysis on whether $q' \in \musemTA{\exists\,\gamma_1}{\theta'}$ or $q' \in \musemTA{\exists\,\gamma_2}{\theta'}$.  In the former case there exists $\delta' \in \delays$ and $q'' \in S$ such that $q' \ttransTA{\delta} q''$ and $q'' \in \musemTA{\gamma_1}{\theta'}$.  From the definition of $\gamma_1$ this means $q''(z) = q'(z) + \delta \geq 1$ and $q'' \in \musemTA{X}{\theta'} = S$.  If we take single-transition execution $\pi = q' \ttransTA{\delta'} q''$ we see that $D(\pi) = \delta \geq 1 - q'(z)$ and $\tgt(\pi) \in S$, and thus $q' \in Q'_S$.
In the latter case $q' \in \musemTA{\exists\,\gamma_2}{\theta'}$.  This means there is an execution $\pi' = q' \ttransTA{\delta} q_1 \ttransTA{a} q_2$ for some $\delta \in \delays$, $a \in \Sigma$, $q_1 \in \QTA$ and $q_2 \in Q'_S$.  Since $q_2 \in Q'_S$ there must be a finite $\pi_2 \in \executions{\TA}{q_2}$ with $| \pi| < \infty$, $D(\pi_2) \geq 1-q_2(z)$ and $\tgt(\pi_2) \in S$.  Now consider execution $\pi = \pi' \cdot \pi_2$.  Clearly $\pi \in \executions{\TA}{q'}$ and $|\pi| < \infty$.  Also, $q_2(z) = q'(z) + \delta$, meaning $D(\pi) = D(\pi_2) + \delta$ and thus $D(\pi) \geq 1 - q'(z)$.  Since $\tgt(\pi) = \tgt(\pi_2)$, $\tgt(\pi) \in S$, and it follows that $q' \in Q'_S$ in this case also.
Consequently $f'_S(Q'_S) \subseteq Q'_S$, $\mu f'_S \subseteq Q'_S$, $q[z := 0] \in Q'_S$, $q \in Q_S$, and $f_{\TDIV}(S) \subseteq Q_S$.

To finish proving the validity of Equation~(\ref{eq:f-tdiv}) we now must show that $Q_S \subseteq f_{\TDIV}(S)$.  Based on the semantics of $z.$, this holds if and only if $Q_S[z := 0] \subseteq \musemTA{\gamma_Y}{\theta[X := S]} = \mu f'_S$, where $Q_S[z := 0] = \{ q[z := 0] \mid q \in Q_S \}$.
Note that $Q_S[z := 0] \subseteq Q_S$ since $z \in \clocksF$; this implies that $Q_S[z := 0] \subseteq Q'_S$.
Consequently, $Q_S \subseteq f_{\TDIV}$ follows if $Q'_S \subseteq \mu f'_S$.  We prove this by defining a well-founded relation ${\prec} \subseteq Q'_S \times Q'_S$ such that $(Q'_S, {\prec})$ is a support structure for $f'_S$.
To define $\prec$, let $q' \in Q'_S$, and let $mac(q) \in \nats$ be given as follows.
\[
mac(q) = \min \{ |\indices_\Sigma(\pi)| \mid \pi \in \executions{\TA}{q} \land |\pi| < \infty \land D(\pi) \geq 1-q(z) \land \tgt(\pi) \in S\}
\]
The measure $mac(q)$ is the minimum number of action transitions appearing in a run from $q$ whose length is finite, whose duration is at least $1-q(z)$, and whose target lies in $S$.
Since every $q \in Q'_S$ has at least one $\pi \in \executions{\TA}{q}$ such that $|\pi| < \infty, D(\pi) \geq 1-q(z)$ and $\tgt(\pi) \in S$, $mac(q) \in \nats$ is well-defined.  We now define $q' \prec q$ iff $mac(q') < mac(q)$.  This relation is clearly well-founded.  We now establish that $(Q'_S, {\prec})$ is a support structure for $f'_S$.  So assume $q \in Q'_S$; we must show that $q \in f'_S(S_q)$, where $S_q = \preimg{{\prec}}{q}$.
We know that $f'_S(S_q) = \musemTA{\exists\,\gamma_1}{\theta'} \cup \musemTA{\exists\, \gamma_2}{\theta'}$, where $\theta' = (\theta[X := S])[Y := S_1]$.
Thus, to prove that $q \in f'_S(S_q)$ it suffices to show that, under the assumption that $q' \not\in \musemTA{\exists\,\gamma_1}{\theta'}$, $q \in \musemTA{\exists\, \gamma_2}{\theta'}$.  So assume $q \not\in \musemTA{\exists\,\gamma_1}{\theta'}$.
It can be seen that this implies $mac(q) \geq 1$.  Now pick $\pi \in \runs{\TA}{q}$ such that $|\pi| < \infty$, $D(\pi) \geq 1-q(z)$, $\tgt(\pi) \in S$ and $\indices_\Sigma(\pi) = mac(q)$.  Let $i \in \indices_t(\pi)$ be the smallest $i$ such that $\lab(\pi[i]) \in \Sigma$.  Clearly there exists $\delta \in \delays$ such that $q \ttransTA{\delta} \src(\pi[i])$; also, $q' = \tgt(\pi[i])$ satisfies $q(z') = q(z) + \delta$ and $mac(q') < mac(q)$, meaning $q' \in \preimg{{\prec}}{q}$.  Consequently, $q \in \musemTA{\dia{\Sigma}Y}{\theta'} = \musemTA{\gamma_2}{\theta'}$, and $(Q, {\prec})$ is a support structure for $f'_S$.
Since ${\prec}$ is well-founded and $(Q'_S, {\prec})$ is a support structure for $f'_S$, $Q'_S \subseteq \mu f'_S$ and thus $Q_S \subseteq f_{\TDIV}(S)$.
This completes the proof of Equation~(\ref{eq:f-tdiv}).

We now turn to proving the main lemma, whose conclusion can be rephrased as: $Q_{\TDIV} = \nu f_{\TDIV}$.  We establish this by showing that $Q_{\TDIV} \subseteq \nu f_{\TDIV}$ and $\nu f_{\TDIV} \subseteq Q_{\TDIV}$.  For the former it suffices to show that $Q_{\TDIV} \subseteq f_{\TDIV}(Q_{\TDIV})$.  So suppose $q \in Q_{\TDIV}$.  It follows that there is a run $r \in \runs{\TA}{q}$, and it also immediately follows that there is a state $q'$, execution $\pi \in \executions{\TA}{q}$, and run $r' \in \runs{\TA}{q'}$ such that $|\pi| < \infty$, $D(\pi) \geq 1$, $\tgt(\pi) = q'$ and $r = \pi \cdot r'$.  But then $q' \in Q_{\TDIV}$, and Equation~(\ref{eq:f-tdiv}) then guarantees $q \in f_{\TDIV}(Q_{\TDIV})$.  Thus $Q_{\TDIV} \subseteq \nu f_{\TDIV}$.
To show that $\nu f_{\TDIV} \subseteq Q_{\TDIV}$ it suffices to show that for all $S \subseteq \QTA$ such that $S \subseteq f_{\TDIV}(S)$, $S \subseteq Q_{\TDIV}$.
So fix $S \subseteq f_{\TDIV}(S)$, and consider $q_0 \in S$.  Equation~(\ref{eq:f-tdiv}) guarantees that there is a $\pi_{0,1} \in \executions{\TA}{q}$ such that $|\pi_{0,1}| < \infty$, $D(\pi_{0,1}) \geq 1$, and $\tgt(\pi_{0,1}) = q_1$ is such that $q_1 \in S$.  We may similarly construct executions $\pi_{1,2}, \pi_{2,3}, \ldots$, each being of finite length and duration at least 1, with source state $q_i$ and target state $q_{i+1}$.
From these we may (co-inductively) construct run $r = \pi_{0,1} \cdot \pi_{1,2} \cdot \cdots \cdot \pi_{i,i+1} \cdots$.
By construction $r \in \runs{\TA}{q_0}$ and thus $q \in Q_{\TDIV}$, thereby establishing that $S \subseteq Q_{\TDIV}$ for every $S$ such that $S \subseteq f_{\TDIV}(S)$.
Thus $\nu f_{\TDIV} \subseteq Q_{\TDIV}$.  This completes the proof.\qedhere
\end{proof}

Based Lemma~\ref{lem:semantics-of-TDIV} one can also see that if $q \not\in \musemTAnotheta{\TDIV}$ then for any $a \in \Delta(\Sigma)$ and $q' \in \QTA$ such that $q \ttransTA{a} q'$, $q' \not\in \musemTAnotheta{\TDIV}$.  If this were not the case for some $q$ and $q'$ then $q'$ would have a run, and so would $q$, which would contradict Lemma~\ref{lem:semantics-of-TDIV}.

%
We can now define the generalized translation for $\tcE(\phi_1 \tcU \phi_2)$ as follows; the addition to $\embed_{tn}(\tcE(\phi_1 \tcU \phi_2))$ is \underline{underlined}.
\[
\embed(\tcE(\phi_1 \tcU \phi_2)) = \mu X . \exists_{\embed(\phi_1)}((\embed(\phi_2)\: \underline{\land\: \TDIV}) \lor (\embed(\phi_1) \land \dia{\Sigma} X))
\]

The following lemma states the correctness of this translation.

\begin{lemma}[Correctness of $\embed(\tcE (\phi_1 \tcU \phi_2)$]\label{lem:correctness-of-EU}
Let $\TA$ be a timed automaton, and suppose that $\phi_1, \phi_2$ are $\TCTL$ formulas such that $\tcsemTA{\phi_1} = \musemTAnotheta{\embed(\phi_1)}$ and $\tcsemTA{\phi_2} = \musemTAnotheta{\embed(\phi_2)}$.  Then $\tcsemTA{\tcE(\phi_1 \tcU \phi_2)} = \musemTAnotheta{\embed(\tcE(\phi_1 \tcU \phi_2))}$.
\end{lemma}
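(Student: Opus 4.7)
The plan is to mirror the proof of the $\tcE(\phi_1 \tcU \phi_2)$ case in Theorem~\ref{thm:divergent-ta-Lrelmunu-subsumes-TCTL}, using the $\land\,\TDIV$ conjunct to replace the role previously played by timelock-freedom: namely, the ability to extend a finite until-prefix into an actual run by appending an arbitrary run from its target. Fix $\TA$ with $\TS{\TA} = \ttsTA$, and abbreviate $Q_\phi = \tcsemTA{\tcE(\phi_1 \tcU \phi_2)}$, $Q_i = \musemTAnotheta{\embed(\phi_i)} = \tcsemTA{\phi_i}$ for $i \in \{1,2\}$, $Q_{12} = Q_1 \cup Q_2$, and $\runs{\phi}{q} = \{r \in \runs{\TA}{q} \mid r \in \untils{\TA}{Q_{12}}{Q_2}\}$. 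Let
\[
\gamma = \exists_{\embed(\phi_1)}\bigl((\embed(\phi_2) \land \TDIV) \lor (\embed(\phi_1) \land \dia{\Sigma} X)\bigr),
\]
let $f_\gamma(S) = \musemTA{\gamma}{\theta[X := S]}$, and note $\musemTAnotheta{\embed(\tcE(\phi_1 \tcU \phi_2))} = \mu f_\gamma$.  Prove $Q_\phi = \mu f_\gamma$ in two steps.

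For $Q_\phi \subseteq \mu f_\gamma$, reuse the minimum-action-count measure $mac_\phi(q) = \min\{|\indices_\Sigma(mup_\phi(r))| \mid r \in \runs{\phi}{q}\}$; since $q \in Q_\phi$ guarantees $\runs{\phi}{q} \neq \emptyset$, $mac_\phi$ is well-defined on $Q_\phi$, and $q' \prec q$ iff $mac_\phi(q') < mac_\phi(q)$ is well-founded.  Show $(Q_\phi,{\prec})$ is a support structure for $f_\gamma$.  In the base case $mac_\phi(q) = 0$, pick $r \in \runs{\phi}{q}$ whose minimum until-prefix contains only time elapses and ends at some $q_2 \in Q_2$; the suffix of $r$ from $q_2$ is a run, so Lemma~\ref{lem:semantics-of-TDIV} gives $q_2 \in \musemTAnotheta{\TDIV}$, whence $q \in \musemTA{\exists_{\embed(\phi_1)}(\embed(\phi_2) \land \TDIV)}{\theta[X := \emptyset]} \subseteq f_\gamma(\emptyset)$.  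This is precisely where the added $\TDIV$ conjunct earns its keep.  In the inductive case $mac_\phi(q) > 0$, pick a minimal witnessing until-prefix $\pi$, let $i_\pi$ be the index of its first action transition, and let $q'' = \tgt(\pi[i_\pi])$; the suffix argument from Theorem~\ref{thm:divergent-ta-Lrelmunu-subsumes-TCTL} yields $q'' \in Q_\phi$ with $mac_\phi(q'') < mac_\phi(q)$, so $q'' \in \preimg{{\prec}}{q}$, and $q \in f_\gamma(\preimg{{\prec}}{q})$ by exactly the same unfolding of $\exists_{\embed(\phi_1)}(\embed(\phi_1) \land \dia{\Sigma}X)$.

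For $\mu f_\gamma \subseteq Q_\phi$, prove $f_\gamma(Q_\phi) \subseteq Q_\phi$.  Given $q \in f_\gamma(Q_\phi)$, obtain $\delta_q$ and $\{q'\} = \delta_q(q)$ as in the TF/nZ proof.  If $q' \in Q_2 \cap \musemTAnotheta{\TDIV}$, Lemma~\ref{lem:semantics-of-TDIV} supplies a run $r' \in \runs{\TA}{q'}$, and $(q \ttransTA{\delta_q} q') \cdot r'$ is a run in $\untils{\TA}{Q_{12}}{Q_2}$ witnessing $q \in Q_\phi$ --- again, the $\TDIV$ conjunct is exactly what replaces the appeal to timelock-freedom.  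If instead $q' \in Q_1$ and $q' \ttransTA{a} q''$ with $q'' \in Q_\phi$, let $r'' \in \runs{\phi}{q''}$ and take $(q \ttransTA{\delta_q} q' \ttransTA{a} q'') \cdot r''$; the intermediate condition on $\delta_q$ places all proper-prefix states in $Q_{12}$, so this is a run in $\untils{\TA}{Q_{12}}{Q_2}$, and $q \in Q_\phi$.

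The principal obstacle, conceptually, is pinpointing where the loss of timelock-freedom must be compensated: it is in the witness direction, where a finite until-prefix must be promoted to a time-divergent run.  The $\TDIV$ conjunct localizes this requirement to the single target state where $\phi_2$ becomes true, and Lemma~\ref{lem:semantics-of-TDIV} hands back the required run; all remaining combinatorics are transcribed verbatim from Theorem~\ref{thm:divergent-ta-Lrelmunu-subsumes-TCTL}.
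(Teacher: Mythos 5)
Your proposal is correct and follows exactly the route the paper intends: the paper's own proof of this lemma is only a remark that the argument is ``very similar'' to the $\tcE(\phi_1 \tcU \phi_2)$ case of Theorem~\ref{thm:divergent-ta-Lrelmunu-subsumes-TCTL} and is left to the reader, and you have carried out precisely that adaptation, correctly locating the two places (the base case of the support-structure argument and the $q' \in Q_2$ subcase of the pre-fixpoint argument) where the $\TDIV$ conjunct plus Lemma~\ref{lem:semantics-of-TDIV} stands in for timelock-freedom. No gaps.
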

\begin{proof}
Formula $\embed(\tcE(\phi_1 \tcU \phi_2))$ replaces $\embed_{tn}(\phi_2)$ in $\embed_{tn}(\tcE(\phi_1 \tcU \phi_2))$ with $\embed(\phi_2) \land \TDIV$. This requires that any state state satisfying $\phi_2$ as part of a determination that another state satisfies $\embed(\tcE(\phi_1 \tcU \phi_2))$ also must have a run.  The correctness of the translation is very similar to the argument given for the correctness of $\embed_{tn}(\tcE(\phi_1 \tcU \phi_2))$ in the proof of Theorem~\ref{thm:divergent-ta-Lrelmunu-subsumes-TCTL} and is left to the reader.\qedhere
\end{proof}

For $\tcA(\phi_1 \tcU \phi_2)$, the situation is more complicated, and we take a staged approach to define $\embed(\tcA(\phi_1 \tcU \phi_2))$. First, we drop the TF assumption and give a translation, $\embed_n(\tcA(\phi_1 \tcU \phi_2))$, for nZ timed automata. The definition is given as follows, with additions to $\embed_{tn}(\tcA(\phi_1 \tcU \phi_2))$ \underline{underlined}.
\begin{align*}
\embed_n(\tcA(\phi_1 \tcU \phi_2))
&=
\mu X. \au(\underline{\TDIV \implies} \,(\embed(\phi_1) \land [\Sigma]X), \embed(\phi_2))
\end{align*}
The intuitions are as follows. We first note that every timelock state trivially satisfies $\tcA(\phi_1 \tcU \phi_2)$, since such states have no runs.
Such states also immediately satisfy any $\Lrelmunu$ implication of form $\TDIV \implies \cdots$. Now consider a state that satisfies $\TDIV$, and thus has at least one run emanating from it.  There are two ways that this state can satisfy $\tcA(\phi_1 \tcU \phi_2)$ in this case.  In the first, there is a time-elapse from the state that makes $\phi_2$ true, with every intervening state before this point required to keep either $\phi_1$ or $\phi_2$ true; in addition, every state reachable from one of these intervening states via an action transition must also satisfy $\tcA(\phi_1 \tcU \phi_2)$.  The first condition ensures that any run from the original state that begins with a sequence of time-elapse transitions and which transitions through a state satisfying $\phi_2$ as a result keeps the until property true.  The second condition ensures that any runs that exercise an action transition before reaching this $\phi_2$ state make the until condition true.
In the second case there is no such state reachable via time-elapse transitions from the original state that makes $\phi_2$ true.  In this case, every run contains at least one action transition and must keep $\phi_1$ true before that transition occurs.  In addition, time is bounded in the state, as otherwise a time-elapse-only run would violate the until property.
These cases are handled by the rest of the implication following $\TDIV \implies \cdots$.

The next lemma established the correctness of $\embed_n(\tcA(\phi_1 \tcU \phi_2))$ for nZ timed automata.

\begin{lemma}[Correctness of $\embed_n(\tcA(\phi_1 \tcU \phi_2))$]\label{lem:correctness-of-nau}
Let $\TA$ be a non-Zeno timed automaton, and suppose that $\phi_1, \phi_2$ are $\TCTL$ formulas such that $\tcsemTA{\phi_1} = \musemTAnotheta{\embed(\phi_1)}$ and $\tcsemTA{\phi_2} = \musemTAnotheta{\embed(\phi_2)}$.  Then
\[
\tcsemTA{\tcA(\phi_1 \tcU \phi_2)}
=
\musemTAnotheta{\embed_n(\tcA(\phi_1 \tcU \phi_2))}.
\]
\end{lemma}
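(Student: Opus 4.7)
The plan is to mimic the structure of the $\tcA(\phi_1 \tcU \phi_2)$ case in the proof of Theorem~\ref{thm:divergent-ta-Lrelmunu-subsumes-TCTL}, isolating the role played there by timelock-freedom (TF) so that it can be replaced by the new $\TDIV \implies \cdots$ guard. Recall the abbreviations $Q_\phi = \tcsemTA{\tcA(\phi_1 \tcU \phi_2)}$, $Q_1 = \tcsemTA{\phi_1}$, $Q_2 = \tcsemTA{\phi_2}$, $Q_{12} = Q_1 \cup Q_2$. Let
\[
\eta_1 = \TDIV \implies (\embed(\phi_1) \land [\Sigma]X),
\quad
\eta_2 = \embed(\phi_2),
\quad
f_\eta(S) = \musemTA{\au(\eta_1,\eta_2)}{\theta[X := S]},
\]
so that $\musemTAnotheta{\embed_n(\tcA(\phi_1 \tcU \phi_2))} = \mu f_\eta$. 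By Lemma~\ref{lem:semantics-of-au} we may everywhere replace $f_\eta(S)$ by $\musemTA{\exists_{\eta_1}\eta_2}{\theta[X:=S]} \cup \musemTA{(\exists\, \TStop) \land (\forall\, \eta_1)}{\theta[X:=S]}$.

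First I would dispatch timelock states as a preliminary observation: if $q \in \QTA$ is a timelock state, then $\runs{\TA}{q} = \emptyset$, so vacuously $q \in Q_\phi$, and by Lemma~\ref{lem:semantics-of-TDIV} and the remark following it, no state reachable from $q$ is in $\musemTAnotheta{\TDIV}$. Thus $\eta_1$ is satisfied by every such state for \emph{any} value of $X$, so $q \in \musemTA{\forall\, \eta_1}{\theta[X:=S]}$; moreover a timelock state is time-bounded and hence $q \in \musemTA{\exists\, \TStop}{\theta[X:=S]}$. Consequently every timelock state lies in $f_\eta(S)$ for every $S$, hence in $\mu f_\eta$.

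For $Q_\phi \subseteq \mu f_\eta$ I would adapt the well-founded support structure from the Theorem~\ref{thm:divergent-ta-Lrelmunu-subsumes-TCTL} proof. Define $q' \prec q$ iff $q,q'\in Q_\phi$, $q$ is not a timelock state, and there is $r \in \runs{\TA}{q}$ whose minimum until-prefix $mup_\phi(r)$ contains an action transition with target $q'$. Well-foundedness of $\prec$ uses only non-Zenoness: any infinite descending chain splices witnessing executions into a single execution containing infinitely many action transitions and hence, by nZ, a run; by Lemma~\ref{lem:mup} none of its states lie in $Q_2$, contradicting membership in $Q_\phi$. To show $(Q_\phi,{\prec})$ supports $f_\eta$, fix $q \in Q_\phi$ and $S_q = \preimg{{\prec}}{q}$. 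If $q$ is a timelock state, the preliminary observation applies. Otherwise $\runs{\TA}{q} \neq \emptyset$, so $q \in \musemTAnotheta{\TDIV}$; if some run from $q$ reaches $Q_2$ purely through delays then $q \in \musemTA{\exists_{\eta_1}\eta_2}{\theta[X:=S_q]}$; otherwise, any successor $q''$ of an action transition taken by a run in $\runs{\phi}{q}$ before $\phi_2$ becomes true lies in $Q_\phi$ and in $S_q$, giving $q \in \musemTA{(\exists\, \TStop) \land (\forall\, \eta_1)}{\theta[X:=S_q]}$ exactly as in the proof of Theorem~\ref{thm:divergent-ta-Lrelmunu-subsumes-TCTL}, with the extra point that at successors $q''$ which happen to be timelock states we have $q'' \notin \musemTAnotheta{\TDIV}$ and so $q'' \in \musemTA{\eta_1}{\theta[X:=S_q]}$ trivially.

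For the reverse inclusion $\mu f_\eta \subseteq Q_\phi$ I would show $f_\eta(Q_\phi) \subseteq Q_\phi$. Pick $q \in f_\eta(Q_\phi)$ and an arbitrary $r \in \runs{\TA}{q}$; the goal is $r \in \untils{\TA}{Q_{12}}{Q_2}$. Since $r$ exists, $q \in \musemTAnotheta{\TDIV}$, so the implication in $\eta_1$ collapses to $\embed(\phi_1) \land [\Sigma]X$ along $r$ for as long as $\TDIV$ continues to hold, which is the whole of $r$ since every state on a run has a run. From here the case analysis is identical to that in the proof of Theorem~\ref{thm:divergent-ta-Lrelmunu-subsumes-TCTL}: either $q \in \musemTA{\exists_{\eta_1}\eta_2}{\theta[X:=Q_\phi]}$, and $r$ either threads a pure delay to $Q_2$ or takes an earlier action transition into $Q_\phi$; or $q \in \musemTA{(\exists\, \TStop) \land (\forall\, \eta_1)}{\theta[X:=Q_\phi]}$, and $r$ must contain an action transition (since $r$ is a run but $q$ is time-bounded) whose target lies in $Q_\phi$ by the $[\Sigma]X$ conjunct. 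The main obstacle, and the only genuinely new bookkeeping relative to Theorem~\ref{thm:divergent-ta-Lrelmunu-subsumes-TCTL}, is checking that the $\TDIV \implies$ guard correctly neutralises the $[\Sigma]X$ and $\forall\, \eta_1$ requirements precisely at the states where TF would have been invoked to produce a continuing run; the observation that $\TDIV$ is preserved along every transition emanating from a state with at least one run handles this uniformly, and the argument goes through without reliance on TF.
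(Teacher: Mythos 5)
Your overall strategy diverges from the paper's: the paper first disposes of the states violating $\TDIV$ essentially as you do, but then handles the remaining states by restricting $\TS{\TA}$ to the sub-TTS of $\TDIV$-satisfying states (which is timelock-free by construction) and re-running the argument of Theorem~\ref{thm:divergent-ta-Lrelmunu-subsumes-TCTL} over that sub-TTS. You instead redo the support-structure and pre-fixpoint arguments directly in the full TTS, tracking where the $\TDIV \implies \cdots$ guard absorbs the uses of timelock-freedom. Your preliminary observation about timelock states and your reverse inclusion $f_\eta(Q_\phi) \subseteq Q_\phi$ are sound: every state actually lying on a run satisfies $\TDIV$, so the guard collapses along any run and the case analysis of Theorem~\ref{thm:divergent-ta-Lrelmunu-subsumes-TCTL} carries over.

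The forward inclusion, however, has a genuine gap. The conjunct $[\Sigma]X$ sits \emph{inside} the consequent of $\TDIV \implies \cdots$, so it is evaluated at every delay-successor $q'$ of $q$ that itself satisfies $\TDIV$, and there it quantifies over \emph{all} action-successors of $q'$ --- including action-successors $q''$ that are timelock states. Such a $q''$ is never the target of an action transition occurring on a run from $q$ (a run passing through $q''$ would yield a run from $q''$), so $q'' \notin \preimg{{\prec}}{q}$ under your definition of $\prec$, and the support property $q \in f_\eta(\preimg{{\prec}}{q})$ fails whenever such a $q''$ exists. Your ``extra point'' covers only timelock \emph{delay}-successors of $q$, where the guard fires vacuously; it does not cover timelock \emph{action}-successors of non-timelock delay-successors, which is exactly where the requirement $q'' \in S_q$ bites. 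The repair is straightforward: since every timelock state lies in $f_\eta(\emptyset)$ (your own preliminary observation), you may enlarge $\prec$ by adding $q'' \prec q$ for every timelock action-successor $q''$ of a delay-successor of a non-timelock $q \in Q_\phi$; nothing need be placed below a timelock state, so well-foundedness is preserved and the support property at $q''$ holds with empty support. With that amendment your argument goes through, and it is in fact more explicit than the paper's appeal to a ``slight adaptation'' over sub-TTSes, which must quietly face the same mismatch between $[\Sigma]$ interpreted in the sub-TTS and in $\TS{\TA}$.
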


\begin{proof}
Fix $\TA$, $\phi_1$ and $\phi_2$ as above, and let $\TS{\TA} = \ttsTA$.
The proof proceeds in two steps.  In the first, we show that $\musemTAnotheta{\lnot\TDIV} \subseteq \tcsemTA{\tcA(\phi_1 \tcU \phi_2)}$  and $\musemTAnotheta{\lnot\TDIV} \subseteq \musemTAnotheta{\embed_{n}(\tcA(\phi_1 \tcU \phi_2))}$.  Then, in the second, we show that $\musemTAnotheta{\TDIV} \cap \tcsemTA{\tcA(\phi_1 \tcU \phi_2)} = \musemTAnotheta{\TDIV} \cap \musemTAnotheta{\embed_{n}(\tcA(\phi_1 \tcU \phi_2))}$.  It immediately follows that $\tcsemTA{\tcA(\phi_1 \tcU \phi_2)} = \musemTAnotheta{\embed_n(\tcA(\phi_1 \tcU \phi_2))}$.

For the first of these, note that based on Lemma~\ref{lem:semantics-of-TDIV}, $q \in \musemTAnotheta{\lnot\TDIV}$ iff $\runs{\TA}{q} = \emptyset$.  Now assume that $q \in \musemTAnotheta{\lnot\TDIV}$.  From the definition of $\tcsemTA{\tcA (\phi_1 \tcU \phi_2)}$, $q \in \tcsemTA{\tcA (\phi_1 \tcU \phi_2)}$ holds vacuously, since $q$ has no runs.  To show that $q \in \musemTAnotheta{\embed_n(\tcA(\phi_1 \tcU \phi_2))}$ we must show that $q \in \musemTAnotheta{\mu X.\au(\TDIV \implies (\embed_s(\phi_1) \land [\Sigma]X), \embed_s(\phi_2))}$.  This follows immediately from the following facts and Lemma~\ref{lem:semantics-of-au}.
\begin{itemize}
\item
    $q \in \musemTAtheta{\TDIV \implies \phi}$ for any $\Lrelmunu$ formula and environment $\theta$.
\item
    $q \in \musemTAnotheta{\exists\,\TStop}$ (since $q$ has no runs and thus no time-divergent executions).
\item
    $q \in \musemTAtheta{\forall\, (\TDIV \implies \phi)}$ for any $\Lrelmunu$ formula and environment $\theta$, since for every $\delta$ and $q'$ such that $q \ttransTA{\delta} q'$, $q' \in \musemTAnotheta{\lnot\TDIV}$.
\end{itemize}

Now we prove that $\musemTAnotheta{\TDIV} \cap \tcsemTA{\tcA(\phi_1 \tcU \phi_2)} = \musemTAnotheta{\TDIV} \cap \musemTAnotheta{\embed_{n}(\tcA(\phi_1 \tcU \phi_2))}$.  The proof of this result follows very similar lines to the proof of the characterization of $\tcA(\phi_1 \tcU \phi_2)$ in Theorem~\ref{thm:divergent-ta-Lrelmunu-subsumes-TCTL}.  In particular, one may construct a \emph{sub-TTS}%
\footnote{
TTS $\T_1 = (Q_1, \ttrans{}_1, \Lab_1, Q_{0,1})$ is a sub-TTS of TTS $\T_2 = (Q_2, \ttrans{}_2, \Lab_2, Q_{0,1})$ iff $Q_1 \subseteq Q_2$, ${\ttrans{}_1} \subseteq {\ttrans{}_2}$, $\Lab_1 (q) = \Lab_2(q)$ for all $q \in Q_1$, and $Q_{0,1} \subseteq Q_{0,2}$.
}
$\T = (Q, \ttrans{}, \Lab, Q_0)$ of $\TS{\TA}$ as follows: $Q = \QTA \cap \musemTAnotheta{\TDIV}, q \ttrans{a} q'$ iff $q \ttransTA{a} q'$, $\Lab = \LabTA$, and $Q_0 = Q_{0,TA} \cap \musemTAnotheta{\TDIV}$. All the states in $\T$ satisfy $\TDIV$, and it is also the case that $\runs{\TA}{q} = \runs{\T}{q}$ for any $q \in Q$.  These facts, combined with a slight adaptation of the reasoning in Theorem~\ref{thm:divergent-ta-Lrelmunu-subsumes-TCTL} to define the semantics of formulas over sub-TTSes of $\TS{\TA}$, gives the desired result.\qedhere
\end{proof}

\begin{figure}
    \centering
    \begin{tikzpicture}[initial text=]
        \node[state,initial,align=center,minimum size=2.0cm] (l) {$l$ \\ \\ $x \geq 0$};
        \draw
            (l) edge[loop right,right] node[align=center]{$x \leq 0$ \\ $a$ \\ $\emptyset$} (l);
    \end{tikzpicture}
    \caption{A timelock-free timed automaton over $\Sigma = \{ a \}$ with Zeno runs.}
    \label{fig:zeno-TA}
\end{figure}
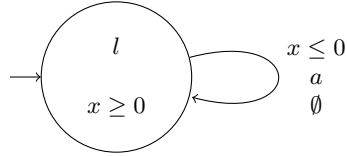

Finally, we show how to drop the nZ restriction in the encoding of $\tcA(\phi_1 \tcU \phi_2)$.
Zeno executions are time-convergent and as such should not influence whether or not a state satisfies $\tcA(\phi_1 \tcU \phi_2)$.  However, they also contain an infinite number of action transitions, and this fact exposes problems with the translations $\embed_{tn}(\tcA (\phi_1 \tcU \phi_2))$ and $\embed_{n}(\tcA (\phi_1 \tcU \phi_2))$ given above, which rely exclusively on least-fixpoint operators that in effect are violated by Zeno executions.  As an example, consider the timed automaton $\TA$ given in Figure~\ref{fig:zeno-TA} and the $\TCTL$ formula $\tcA\,(\tcF\, x \geq 1)$.  It can be seen that $\TS{\TA}$ contains no timelock states.  It does, however, contain non-Zeno executions:  one example is $(l, \initval) \ttransTA{a} (l, \initval) \ttransTA{a} \cdots$.  As every $r \in \runs{\TA}{l,\initval}$ is such that eventually $x \geq 1$, $(l,\initval) \in \tcsemTA{\tcA\,(\tcF\, x \geq 1))}$.  However, $(l,\initval) \not\in \musemTAnotheta{\embed_n(\tcA\,(\tcF\, x \geq 1)))}$.  To see why, note that since $\TA$ is timelock-free and no state in $\TS{\TA}$ satisfies $\TStop$, $\musemTAnotheta{\embed_n(\tcA\,(\tcF\, x \geq 1)))} = \musemTAnotheta{\mu X. \exists_{[\Sigma] X} x \geq 1}$.  A state $q \in \TS{\TA}$ can satisfy $\mu X. \exists_{[\Sigma] X} x \geq 1$ iff either $q(x) \geq 1$ or $q$ is incapable of an infinite sequence of $a$ transitions. Neither condition holds for $(l, \clockval)$, and thus $(l,\initval) \not\in \musemTAnotheta{\embed_n(\tcA\,(\tcF\, x \geq 1)))}$.

Operationally, a Zeno execution may be seen as \emph{unfair} to the passage of time:  while an infinite number of transitions happens, time only elapses finitely.  The correct $\Lrelmunu$ formula for $\tcA(\phi_1 \tcU \phi_2)$ in effect uses an \emph{alternating fixpoint} to rule out spurious inferences involving these unfair executions.
The encoding for non-Zeno timed automata is the following, where $z \notin cs(\phi_1) \cup cs(\phi_2)$ is a clock not appearing in $\phi_1$ or $\phi_2$ and we have \underline{underlined} the part of the translation differing from $\embed_n(\tcA (\phi_1 \tcU \phi_2))$.
\begin{align*}
&\embed(\tcA(\phi_1 \tcU \phi_2)) =
\\
&\mu X . \underline{z . \nu Y . }
\\
&\qquad\au (
    \TDIV \implies
        (
            \embed(\phi_1)
            \land
            \underline{(z \geq 1 \implies [\Sigma]X)
                \land (z < 1 \implies [\Sigma]Y)
            }
        ),
    \embed(\phi_2)
    )
\end{align*}

This encoding modifies $\embed_n(\tcA (\phi_1 \tcU \phi_2)$ in two signficant ways.  The first is that the single fixpoint $\mu X. \ldots$ in the $\embed_n$ translation is replaced by two alternating fixpoints separated by a clock-reset operator: $\mu X . z. \nu Y. \ldots$.  The second is that the subformula $[\Sigma] X$ in the $\embed_n$ translation is replaced by $(z \geq 1 \implies [\Sigma] X) \land (z < 1 \implies [\Sigma] Y)$.
We can now state and prove the correctness of this translation of $\tcA(\phi_1 \tcU \phi_2)$.\jk{I think we can add some more intuition here, explaining why a Zeno-path makes the formula true}\rc{Agreed, but running out of time for arXiv ... maybe for the conference submission?}

\begin{lemma}[Correctness of $\embed(\tcA (\phi_1 \tcU \phi_2))$]\label{lem:correctness-of-AU}
Let $\TA$ be a timed automaton, and suppose that $\phi_1, \phi_2$ are $\TCTL$ formulas such that $\tcsemTA{\phi_1} = \musemTAnotheta{\embed(\phi_1)}$ and $\tcsemTA{\phi_2} = \musemTAnotheta{\embed(\phi_2)}$.  Then $\tcsemTA{\tcA (\phi_1 \tcU \phi_2)} = \musemTAnotheta{\embed(\tcA (\phi_1 \tcU \phi_2))}$.
\end{lemma}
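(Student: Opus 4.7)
The proof will follow the same high-level organization as Lemma~\ref{lem:correctness-of-nau}: first split on whether $q \in \musemTAnotheta{\TDIV}$. For $q \not\in \musemTAnotheta{\TDIV}$ (timelock states, which have no runs), both $q \in \tcsemTA{\tcA(\phi_1 \tcU \phi_2)}$ and $q \in \musemTAnotheta{\embed(\tcA(\phi_1 \tcU \phi_2))}$ hold vacuously, by exactly the argument used in the proof of Lemma~\ref{lem:correctness-of-nau}: the outer implication $\TDIV \Rightarrow \cdots$ is trivially true everywhere reachable from $q$, so any $\au(\cdots,\cdots)$ clause evaluates via the $\exists\,\TStop \land \forall\,\eta_1$ disjunct (the ``time-bounded'' branch), which is satisfied by timelock states. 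The remainder of the proof concentrates on states in $\musemTAnotheta{\TDIV}$, working inside the sub-TTS obtained by restricting to $Q = \QTA \cap \musemTAnotheta{\TDIV}$, so that all runs under consideration emanate from states that do possess runs.

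For the direction $\tcsemTA{\tcA(\phi_1 \tcU \phi_2)} \cap \musemTAnotheta{\TDIV} \subseteq \musemTAnotheta{\embed(\tcA(\phi_1 \tcU \phi_2))}$, write the top-level formula as $\mu X. F(X)$, where $F(X) = z.\nu Y.G(X,Y)$ and $G$ is the body $\au(\TDIV \Rightarrow (\embed(\phi_1) \land (z\geq 1 \Rightarrow [\Sigma]X) \land (z<1 \Rightarrow [\Sigma]Y)), \embed(\phi_2))$. I would construct a well-founded support structure $(Q_\phi, \prec)$ for $F$ where the measure on a state $q \in Q_\phi$ is the minimum, over runs $r \in \runs{\phi}{q}$ and $\pi = mup_\phi(r)$, of $\lceil D(\pi) \rceil$, i.e., the number of full time units consumed by a minimum until-prefix of some witnessing run. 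Declare $q' \prec q$ whenever $q'$ is the target of some action transition in $\pi$ whose start time is $\geq 1$. Well-foundedness follows exactly as in Theorem~\ref{thm:divergent-ta-Lrelmunu-subsumes-TCTL}: an infinite descending chain would yield an execution containing infinitely many action transitions that are each separated by at least one time unit, which is time-divergent and never reaches $Q_2$, contradicting $q \in Q_\phi$. For each $q$, the support condition $q \in F(\preimg{\prec}{q})$ is obtained by unfolding the inner $\nu Y$ to the set $Y_q = \{q' \in Q_\phi \mid q' \text{ reachable from } q \text{ via action transitions with } z<1\}$; the greatest-fixpoint nature of $Y$ permits $Y_q$ to contain states requiring arbitrarily many (even Zeno-like) action transitions within one time unit, while any action crossing the $z=1$ boundary falls under $X$ and yields an element of $\preimg{\prec}{q}$, as guaranteed by the strict decrease in $\lceil D(\cdot) \rceil$.

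For the direction $\musemTAnotheta{\embed(\tcA(\phi_1 \tcU \phi_2))} \subseteq \tcsemTA{\tcA(\phi_1 \tcU \phi_2)}$, I would prove that $Q_\phi := \tcsemTA{\tcA(\phi_1\tcU\phi_2)}$ is a pre-fixpoint of $F$, namely $F(Q_\phi) \subseteq Q_\phi$, which then yields $\mu F \subseteq Q_\phi$. Fix $q \in F(Q_\phi)$. Resetting $z$ to $0$, there is a set $S$ with $S \subseteq G(Q_\phi, S)$ and $q[z:=0] \in S$. Given any $r \in \runs{\TA}{q}$ we must show $r \in \untils{\TA}{Q_{12}}{Q_2}$. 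Unroll the $\au$ clause using Lemma~\ref{lem:semantics-of-au}: at each visited state on $r$ for which $\embed(\phi_2)$ has not yet held, either (a) an action transition in $r$ occurs while $z<1$, in which case the target lies in $S$ by the $[\Sigma]Y$ conjunct and the argument can continue at the successor state with $z$ still reset less than 1, or (b) the first action transition occurs at $z \geq 1$, in which case the target lies in $Q_\phi$ by the $[\Sigma]X$ conjunct and the tail of $r$ from the target is an until path by definition, or (c) the sub-execution up to now consists purely of time-elapses, which must eventually meet $\embed(\phi_2)$ (otherwise the $\exists\,\TStop \land \forall\,\eta_1$ branch forces time to be bounded, contradicting time-divergence of $r$ together with $q\in\musemTAnotheta{\TDIV}$). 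In all cases the until condition for $r$ is discharged, so $q \in Q_\phi$.

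The main obstacle is paragraph three, specifically handling case (a) correctly: a run $r$ could in principle take infinitely many action transitions before $z$ reaches 1, which is exactly where the naive $\embed_n$ translation breaks down. The argument must exploit the greatest-fixpoint nature of $\nu Y$ to absorb such bursts of actions within a single time window while simultaneously arguing that $r$, being a time-divergent run, must eventually cross the $z = 1$ threshold via a time-elapse transition if it does not terminate the until inside the window; once it crosses, the $z \geq 1 \Rightarrow [\Sigma]X$ clause takes over and we recurse into the outer $\mu X$. I expect the cleanest formal treatment to mirror the $si_\pi$/$ti_\pi$ bookkeeping from Definition~\ref{def:extent} together with the ``$cut$'' of a delay transition at the exact moment $z$ reaches 1, using time-continuity and time-additivity of $\ttransTA{}$ to justify the splitting. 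The outer argument with the well-founded measure $\lceil D(mup_\phi(r))\rceil$ and the inner greatest-fixpoint closure over $z<1$ action bursts together precisely formalize the intuition that the alternating fixpoint ignores Zeno unfairness while enforcing progress on the time scale of genuine runs.
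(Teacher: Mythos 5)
Your plan departs from the paper's proof in its basic organization. The paper does \emph{not} lift the $\TDIV$ case split and sub-TTS restriction of Lemma~\ref{lem:correctness-of-nau} to this lemma; instead it first proves a closed-form characterization (Equation~\ref{eq:f-phi}) of a single application of the body $z.\nu Y.\au(\cdots)$ as a condition on \emph{all infinite executions} from a state, partitioned into families $\Pi_1,\Pi_2,\Pi_3(S)$, and --- crucially --- with the timelock states folded into the target sets of the until (it works throughout with $Q_2' = Q_2 \cup Q_{TL}$ and $Q_{12}' = Q_{12} \cup Q_{TL}$, where $Q_{TL}$ is the set of timelock states, rather than with $Q_2$ and $Q_{12}$). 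Only then does it run the support-structure / pre-fixpoint argument, reusing the action-target order $\prec$ from Theorem~\ref{thm:divergent-ta-Lrelmunu-subsumes-TCTL}. Your well-founded order, based on the decrease of $\lceil D(mup_\phi(r))\rceil$ across actions with start time $\geq 1$, is genuinely different and, for what it is worth, better adapted to the Zeno setting: your descending-chain argument really does manufacture a time-divergent execution (segments of duration $\geq 1$ concatenate to a run missing $Q_2$), whereas the paper's appeal to ``the same argument as before'' quietly leans on non-Zenoness.

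There are, however, two concrete problems. First, your stated rationale for the inner $\nu Y$ is misplaced: a time-divergent run \emph{cannot} take infinitely many action transitions while $z < 1$, since an infinite prefix confined to a bounded time window caps the total duration of the execution at that bound. The infinite $z<1$ action bursts that force the greatest fixpoint come from Zeno executions that are \emph{not} runs; they are what make $q$ fail $\embed_n(\cdots)$ even when every genuine run satisfies the until, so they bear on the support-structure direction, not on case (a) of your pre-fixpoint direction, where only finitely many actions can precede the crossing of $z = 1$. Second, and more seriously, the $\TDIV$ split plus sub-TTS does not dispose of timelock states: the modalities $[\Sigma]X$ and $[\Sigma]Y$ inside $\TDIV \implies (\cdots)$ quantify over \emph{all} action successors of the (non-timelock) states on the witnessing time-elapse path, and such a successor may be a timelock state. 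It lies vacuously in $Q_\phi$, so your $Y_q$ absorbs it while $z < 1$; but once $z \geq 1$ the conjunct $z \geq 1 \implies [\Sigma]X$ demands that it lie in $\preimg{{\prec}}{q}$, and under your $\prec$ (targets of action transitions occurring in minimal until-prefixes of \emph{runs}) a timelock state is never below anything, so the support condition $q \in F(\preimg{{\prec}}{q})$ fails as stated. This is precisely the difficulty the paper's $Q_2'$/$Q_{12}'$ bookkeeping is built to absorb, and your plan needs an analogous device --- for instance, enlarging $\prec$ so that every timelock state is a minimal element lying below every $q \in Q_\phi$ (harmless for well-foundedness, and the support condition at timelock states themselves holds vacuously via the $\TDIV$ guard and $\exists\,\TStop$).
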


\begin{proof}
Fix $\TA = \genTA$, with $\TS{\TA} = \ttsTA$, and let $\phi_1$ and $\phi_2$ be $\TCTL$ formulas such that $\tcsemTA{\phi_1} = \musemTAnotheta{\embed(\phi_1)}$ and $\tcsemTA{\phi_2} = \musemTAnotheta{\embed(\phi_2)}$.
Let $\theta \in \Var \to 2^{\QTA}$ be arbitrary.
For notational convenience we introduce the following, where $q \in \QTA$ and $S, S' \subseteq \QTA$.
\begin{align*}
\phi
    &= \tcA(\phi_1 \tcU \phi_2)
    \\
Q_{\phi}
    &= \tcsemTA{\phi}
    \\
Q_1
    &= \tcsemTA{\phi_1}
    \\
Q_2
    &= \tcsemTA{\phi_2}
    \\
Q_{12}
    &= \tcsemTA{\phi_1 \lor \phi_2}
    \\
\runs{\phi}{q}
    &= \{ r \in \runs{\TA}{q} \mid r \in \untils{\TA}{Q_{12}}{Q_2} \}
    \\
\gamma_1
    &= \embed(\phi_1)
    \\
\gamma_2
    &= \embed(\phi_2)
    \\
\gamma_{z, \geq 1}
    &= z \geq 1 \implies [\Sigma]X
    \\
\gamma_{z, < 1}
    &= z < 1 \implies [\Sigma]Y
    \\
\gamma'_1
    &= \TDIV \implies (\gamma_1 \land \gamma_{z, \geq 1} \land \gamma_{z, < 1})
    \\
\gamma_Y
    &= \nu Y. (\au(\gamma'_1, \gamma_2))
    \\
f'_S(S')
    &= \musemTA{\au(\gamma'_1, \gamma_2)}{(\theta[X := S])[Y := S']}
    \\
f_\phi(S)
    &= \musemTA{z.\gamma_Y}{\theta[X := S]}
\end{align*}
Note that $f_\phi \in 2^{\QTA} \to 2^{\QTA}$ and that $\musemTAnotheta{\embed(\phi)} = \mu f_\phi$.  Also, for every $S \subseteq \QTA$, $f'_S \in 2^{\QTA} \to 2^{\QTA}$ and $\musemTA{\gamma_Y}{\theta[X := S]} = \nu f'_S$.  Finally, since $z \not\in cs(\phi_1) \cup cs(\phi_2)$, it is the case that for all $q \in \QTA$ and $\delta \in \delays$, $q \in Q_1$ iff $q[z:=\delta] \in Q_1$, $q \in Q_2$ iff $q[z := \delta] \in Q_2$, and $q \in Q_{12}$ iff $q[z := \delta] \in Q_{12}$.
Moreover, since $z \in \clocksF$, we know $z \not\in \CX$, and this implies obvious relationships between $\executions{\TA}{q}$ and $\executions{\TA}{q[z := \delta]}$ for $\delta \in \delays$.  Specifically, suppose $\pi \in \executions{\TA}{q}$ and $\delta \in \delays$.  Then there is an execution $\pi_{z,\delta} \in \executions{\TA}{q[z := \delta]}$ defined as follows.  Let $d = \delta - q(z)$ and $\pi[i] = q_i \ttransTA{a} q_{i+1}$.  Then $\pi_{z,\delta}[i] = (q_i[z := q_i(z) + d]) \ttransTA{a} (q_{i+1}[z := q_{i+1}(z) + d])$.  Intuitively, $\pi_{z,\delta}$ is the same as $\pi$ except that the times associated with $z$ in the transitions of $\pi_{z,\delta}$ are uniformly shifted from those in $\pi$ by $d$, the difference between $\delta$ and the time associated with $z$ in the source state of $r$.  It is straightforward to establish that $\pi_{z,\delta}$ is indeed an execution in $\executions{\TA}{q[z := \delta]}$.  This same construction also guarantees that if execution $\pi \in \runs{\TA}{q[z := \delta]}$, then $\pi_{z,q(z)} \in \runs{\TA}{q}$.  It can be seen that $\pi \in \untils{\TA}{Q_{12}}{Q_2}$ iff $\pi_{z,\delta} \in \untils{\TA}{Q_{12}}{Q_2}$, and that if $\pi, \pi_{z,\delta} \in \untils{\TA}{Q_{12}}{Q_2}$ then $mup(\pi_{z,\delta}) = (mup(\pi))_{z,\delta}$.
Finally, we have that $\indices_t(\pi) = \indices_t(\pi_{z,\delta})$ and that for every $(\delta',i) \in \indices_s(\pi), \pi_{z,\delta}[\delta',i] = (\pi[\delta',i])[z := z']$, where $z' = (\pi[\delta',i])(z) + d$.  This implies that $q \in \tcsemTA{\phi}$ iff $q[z := \delta] \in \tcsemTA{\phi}$.

We begin the proof by first establishing a characterization of $f_\phi(S)$ for $S \subseteq \QTA$.  For notational convenience we first define the following sets of executions, where $S \subseteq \QTA$ and $\Pi, \Pi' \subseteq \executions{\TA}{Q_{\TA}}$.  The reader should note the (intended) similarities between these definitions and the operators $\mathsf{X}$ (``next''), $\mathsf{U}$ (``until''), and $\mathsf{G}$ (``always'') of Linear Temporal Logic.
\begin{align*}
\executions{\TA}{S}
    &= \{ \pi \in \executions{\TA}{Q_{\TA}} \mid \src(\pi) \in S \}
    \\
X_{\TA}(\Pi)
    &= \{ \pi \in \executions{\TA}{Q_{\TA}} \mid 1 \in \indices_\Sigma(\pi) \land \suffix{\pi}{\geq (0,1)} \in \Pi \}
    \\
\untils{\TA}{\Pi}{\Pi'}
    &=  \{ \pi \in \executions{\TA}{Q_{\TA}} \mid
    \exists (\delta,i) \in \indices_S(\pi) \colon (\suffix{\pi}{\geq (\delta,i)} \in \Pi'
    \;\land
    \\
    &\qquad
    (\forall (\delta',j) \in \indices_s(\pi) \colon (\delta',j) <_\pi (\delta,i) \implies \suffix{\pi}{(\delta',j)} \in \Pi))
    \}
    \\
G_{\TA}(\Pi)
    &= \{ \pi \in \executions{\TA}{Q_{\TA}} \mid \forall (\delta,i) \in \indices_s(\pi) \colon \suffix{\pi}{\geq (\delta,i)} \in \Pi \}
\end{align*}
Here $\Pi_{\TA}(S)$ consists of executions in $\TA$ whose source state is in $S$.  $X_{\TA}(\Pi)$ consists of executions whose first transition is an action transition (i.e.\/ $1 \in \indices_\Sigma(\pi)$) and whose suffix after this transition is in $\Pi$.
The set $\untils{\TA}{\Pi}{\Pi'}$ consists of executions in $\TA$ containing a suffix in $\Pi'$, with every other suffix longer than this suffix being in $\Pi$.
Finally, $G_{\TA}(\Pi)$ consists of sequences whose every suffix is in $\Pi$.
Note that $\untils{\TA}{\Pi,\Pi'}$ overloads the notation $\untils{\TA}{S}{S'}$ introduced earlier, where $S, S' \subseteq \QTA$.
It is easy to see that $\pi \in \untils{\TA}{S}{S'}$ iff $\pi \in \untils{\TA}{\executions{\TA}{S}}{\executions{\TA}{S'}}$.
We will avail ourselves of similar short-hand for the other operators above, writing e.g. $G_{\TA}(S)$ when $S \subseteq \QTA$ for $G_{\TA}(\executions{\TA}{S})$, etc.

We also introduce time-elapse versions of the above.  Define $\executions{\TA,E}{Q_{\TA}}
= \{ \pi \in \executions{\TA}{Q_{\TA}} \mid \forall i \in \indices_t(\pi) \colon \lab(\pi[i]) \in \delays \}$ to be the set of executions consisting only of time-elapse transitions.  Then:
\begin{align*}
U_{\TA,E}(\Pi, \Pi')
    &=  \{ \pi \in \executions{\TA}{Q_{\TA}}
        \mid
    \\
    & \qquad\qquad\exists (\delta,i) \in \indices_s(\pi) \colon
            \prefix{\pi}{\leq (\delta,i)} \in \executions{\TA,E}{Q_{\TA}} \cap \untils{\TA}{\Pi}{\Pi'}
        \}
    \\
G_{\TA,E}(\Pi)
    &=  \executions{\TA,E}{Q_\TA} \cap G_{\TA}(\Pi)
\end{align*}
$U_{\TA,E}(\Pi,\Pi')$ consists of executions in $\TA$ that satisfy the ``until requirement'' using a prefix containing only time-elapse transitions, although the rest of the execution after this prefix may include action transitions.  $G_{\TA,E}(\Pi)$ consists of time-elapse-only executions whose suffixes are entirely in $\Pi$.
Also in what follows, we will use $Q_{TL} = \{q \in \QTA \mid \runs{\TA}{q} = \emptyset\}$ for the set of timelock states in $\TA$
and
$\executions{\TA,\infty}{q} = \{ \pi \in \executions{\TA}{q} \mid |\pi| = \infty \}$ for the infinite executions from $q$.
Note that $\runs{\TA}{q} \subseteq \executions{\TA,\infty}{q}$.
Define $Q_2' = Q_2 \cup Q_{TL}$, $Q_{12}' = Q_1 \cup Q_2' = Q_{12} \cup Q_{TL}$, and $Q_1' = Q_1 \setminus Q_2'$.  We now give our alternative characterization of $f_\phi(S)$ as follows.

\begin{align}
f_\phi(S)
    &=  \{ q \in \QTA
        \mid \executions{\TA,\infty}{q} \subseteq \Pi_1 \cup \Pi_2
        \cup \Pi_3(S)\}, \;\text{where}
    \label{eq:f-phi}
    \\
\Pi_1
    &=  \{ \pi \in \executions{\TA}{q}
        \mid \exists (\delta,i) \in \indices_s \colon
            \delta < 1 \land \prefix{\pi}{\leq (\delta,i)} \in \untils{\TA}{Q_{12}'}{Q_2'}
        \}
    \nonumber
    \\
\Pi_2
    &=  \{ \pi \in \executions{\TA}{q}
        \mid D(\pi) < 1 \land \pi \in G_{\TA}(Q_1')
        \}
    \nonumber
    \\
\Pi_3(S)
    &=  \Pi_{3,1} \cap (\Pi_{3,2} \cup \Pi_{3,3}(S))
    \nonumber
    \\
\Pi_{3,1}
    &=  \{ \pi \in \executions{\TA}{q}
        \mid
        D(\pi) \geq 1 \land
        \forall (\delta,i) \in \indices_s(\pi) \colon
            \delta < 1 \implies \pi[\delta,i] \in Q_1'
        \}
    \nonumber
    \\
\Pi_{3,2}
    &= \{ \pi \in \executions{\TA}{q}
        \mid D(\pi) < \infty \land
        \suffix{\pi}{\geq (1,0)} \in G_{\TA,E}(Q_{1}')
        \}
    \nonumber
    \\
\Pi_{3,3}(S)
    &=  \{ \pi \in \executions{\TA}{q}
        \mid
        \suffix{\pi}{\geq (1,0)} \in
        \untils{\TA,E}{\executions{\TA}{Q_{12}'}}{\executions{\TA}{Q'_2} \cup X_{\TA}(S)}
        \}
    \nonumber
\end{align}
Intuitively, this characterization asserts that $q \in f_\phi(S)$ iff each infinite-length execution $\pi \in \executions{\TA, \infty}{q}$ satisfies one of the following.
\begin{description}
\item[$\pi \in \Pi_1$.]
    This holds if $\pi$ has an initial of duration $< 1$ satisfying the until property.  Note that the definition of $\untils{\TA}{Q_{12}'}{Q_2'}$ guarantees that $\pi \in \untils{\TA}{Q_{12}'}{Q_2'}$ in this case.
\item[$\pi \in \Pi_2$.]
    This holds if the duration of $\pi$ is $< 1$ and every state in $\pi$ is in set $Q_1'$.  In this case, $\pi$ is time-convergent.
\item[$\pi \in \Pi_3(S)$.]
    In this case $\pi$ must have duration $\geq 1$, and every state in $\pi$ occurring at time $< 1$ must be in $Q_1'$ ($\Pi_{3,1}$).  The extent of $\pi$ after $(1,0)$ must then either be of finite duration and only pass through states in $Q_1'$ ($\Pi_{3,2}$), or must pass through states in $Q_{12}'$ until hitting a state that is either in $Q_2'$ or is the source of an action transition in $\pi$ whose target in set $S$, the argument to $f_\phi(S)$ ($\Pi_{3,3}$).
\end{description}
We now establish that Equation~\ref{eq:f-phi} is valid.  So fix $S \subseteq \QTA$.  For notational convenience, let $Q_{\phi,S}$ be the right-hand side of Equation~\ref{eq:f-phi}.  We prove Equation~(\ref{eq:f-phi}) by showing $f_\phi(S) \subseteq Q_{\phi,S}$ and $Q_{\phi,S} \subseteq f_\phi(S)$.
For the former, assume that $q \in f_\phi(S)$; we must show that $q \in Q_{\phi,S}$.  We note that since $z \not\in cs(\phi_1) \cup cs(\phi_2)$, $q \in f_\phi(S)$ iff $q[z := 0] \in \musemTA{\gamma_Y}{\theta[X := S]} = \nu f'_S$, and, since $z \not\in \CX$, $q \in Q_{\phi,S}$ iff $q[z := 0] \in Q'_{\phi,S}$, where $Q'_{\phi,S}$ modifies $Q_{\phi,S}$ as follows (here $\monus$ is the usual ``monus'' operator adapted to $\delays$:  $\delta \monus \delta' = \delta - \delta'$ if $\delta \geq \delta'$, and is $0$ otherwise).
\begin{align*}
Q'_{\phi,S}
    &=  \{ q \in \QTA \mid
            \executions{\TA}{q} \subseteq \Pi'_1 \cup \Pi'_2 \cup \Pi'_3(S)
        \} \text{, where}
    \\
\Pi'_1
    &=  \{ \pi \in \executions{\TA}{q}
        \mid \exists (\delta,i) \in \indices_s \colon
            \delta < 1 \monus q(z)  \land \prefix{\pi}{\leq (\delta,i)} \in \untils{\TA}{Q_{12}'}{Q_2'}
        \}
    \nonumber
    \\
\Pi'_2
    &=  \{ \pi \in \executions{\TA}{q}
        \mid D(\pi) < 1 \monus q(z) \land \pi \in G_{\TA}(Q_1')
        \}
    \nonumber
    \\
\Pi'_3(S)
    &=  \Pi'_{3,1} \cap (\Pi'_{3,2} \cup \Pi'_{3,3}(S))
    \nonumber
    \\
\Pi'_{3,1}
    &=  \{ \pi \in \executions{\TA}{q}
        \mid
    \\
    &\qquad
        D(\pi) \geq 1 \monus q(z) \land
        \forall (\delta,i) \in \indices_s(\pi) \colon
            \delta < 1 \monus q(z) \implies \pi[\delta,i] \in Q_1'
        \}
    \nonumber
    \\
\Pi'_{3,2}
    &= \{ \pi \in \executions{\TA}{q}
        \mid D(\pi) < \infty \land
        \suffix{\pi}{\geq (1 \monus q(z),0)} \in G_{\TA,E}(Q_{1}')
        \}
    \nonumber
    \\
\Pi'_{3,3}(S)
    &=  \{ \pi \in \executions{\TA}{q}
        \mid
        \suffix{\pi}{\geq (1 \monus q(z),0)} \in
        \untils{\TA,E}{\executions{\TA}{Q_{12}'}}{\executions{\TA}{Q'_2} \cup X_{\TA}(S)}
        \}
    \nonumber
\end{align*}
In effect, $Q'_{\phi,S}$ differs from $Q_{\phi,S}$ in that occurrences of delay $1 \in \delays$ in the definitions of $\Pi_1$, $\Pi_2$ and $\Pi_3(S)$ are replaced by $1 \monus q(z)$ in $\Pi'_1$, $\Pi'_2$ and $\Pi'_3(S)$.
Now, if we can show $\nu f'_S \subseteq Q'_{\phi,S}$ then $q[z:=0] \in Q'_{\phi,S}$ and $q \in Q_{\phi,S}$, thereby establishing the set inclusion $f_\phi(S) \subseteq Q_{\phi,S}$.
Since $\nu f'_S = \bigcup \{S' \subseteq \QTA \mid f'_S(S') \subseteq S'\}$ it suffices to show that for any $S' \subseteq f'_S(S')$, $S' \subseteq Q'_{\phi,S}$.  So fix $S' \subseteq f'_S(S')$ and $q' \in S'$; we establish that $q' \in Q'_{\phi,S}$ by showing that for each $\pi \in \executions{\TA,\infty}{q'}$, either $\pi \in \Pi'_1$, $\pi \in \Pi'_2$, or $\pi \in \Pi'_3(S)$.  To do this, assume $\pi \not\in \Pi'_1 \cup \Pi'_2$; we must show that $\pi \in \Pi'_3(S)$.  In what follows, let $\theta' = (\theta[X := S])[Y := S']$.  From the definition of $f'_S(S')$ and Lemma~\ref{lem:semantics-of-au} we know that
\[
f'_S(S') = \musemTA{\exists_{\gamma'_1} \gamma_2}{\theta'} \cup \musemTA{(\exists\, \TStop) \land (\forall\, \gamma'_1)}{\theta'}.
\]
To establish that $\pi \in \Pi'_{3}(S)$ we first show that $\pi \in \Pi'_{3.1}$.  Since $q' \in S' \subseteq f'_S(S')$, it follows that either $q' \in \musemTA{\exists_{\gamma'_1} \gamma_2}{\theta'}$ or $q' \in \musemTA{(\exists\, \TStop) \land (\forall\, \gamma'_1)}{\theta'}$.  In the former case, there must exist $\delta' \in \delays$ such that $\delta'(q') \in \gamma_2$ and for all $\delta'' < \delta'$, $\delta''(q') \in \musemTA{\gamma'_1 \lor \gamma_2}{\theta'}$.  Because $\pi \not\in \Pi'_1$, it must follow that $\delta \geq 1 \monus q'(z)$, which implies that $D(\pi) \geq 1 \monus q'(z)$.
Also for this reason, there can be no $(\delta'',j) \in \indices_s(\pi)$ such that $\delta < 1 \monus q(z)$ and $\pi[\delta'',j] \in Q_2'$.
Since $q' \in \musemTA{\exists_{\gamma'_1} \gamma_2}{\theta'}$ the semantics of $\exists$ guarantees that for all such $(\delta'',j)$, $\pi[\delta'',j] \in \musemTA{\gamma'_1}{\theta'}$; these facts imply $\pi[\delta'',j] \in Q_1'$.
For the latter case, assume $q' \in \musemTA{(\exists\, \TStop) \land (\forall\, \gamma'_1)}{\theta'}$ but $q' \not\in \musemTA{\exists_{\gamma'_1} \gamma_2}{\theta'}$.
It follows that there exists a smallest $\delta' \in \delays$ such that for all $\delta'' \in \delays$, if $q' \ttransTA{\delta'}$ then $\delta'' \leq \delta'$.  We also note that for all $(\delta''',j) \in \indices_s(\pi)$, $\pi[\delta''',j] \not\in Q'_2$ since otherwise $\pi$ would be in $\Pi_1'$, which is a contradiction.  Now, since since $\pi \not\in \Pi'_2$ it must be the case that $D(\pi) \geq 1$.  Consequently, in this case $\pi \in \Pi'_{3,1}$ as well.

To finish proving that $\pi \in \Pi'_3$ we show that $\pi \in \Pi'_{3,2} \cup \Pi'_{3,3}(S)$.  So assume $\pi \not\in \Pi'_{3,2}$; we must establish that $\pi \in \Pi'_{3,3}(S)$.  Since $\pi \not\in \Pi'_{3,2}$ we know $D(\pi) = \infty$.  We have already observed that either $q' \in \musemTA{\exists_{\gamma'_1} \gamma_2}{\theta'}$ or $q' \in \musemTA{(\exists\, \TStop) \land (\forall\, \gamma'_1)}{\theta'}$. We now do a case analysis.
First assume that $q' \in \musemTA{\exists_{\gamma'_1} \gamma_2}{\theta'}$.  The argument above established that there must exist $\delta \geq 1 \monus q'(z)$ such that $\delta(q) \in Q_2'$ and for all $\delta' < \delta$, $\delta'(q) \in Q'_{12}$.
From the definitions, we observe that
\begin{align*}
\suffix{\pi}{\geq 1 \monus q'(z)}
    &\in
    \untils{\TA,E}{\executions{\TA}{Q_{12}'}}{\executions{\TA}{Q_2'}}
    \\
    &\subseteq
    \untils{\TA,E}{\executions{\TA}{Q_{12}'}}{\executions{\TA}{Q_2'} \cup X_{\TA}(S)}.
\end{align*}
Thus $\pi \in \Pi'_{3,3}(S)$.
Now assume that $q' \in \musemTA{(\exists\, \TStop) \land (\forall\, \gamma'_1)}{\theta'}$ but $q' \not\in \musemTA{\exists_{\gamma'_1} \gamma_2}{\theta'}$.
Using reasoning given above, we know there exists a smallest $\delta' \in \delays$ such that for all $\delta'' \in \delays$, if $q' \ttransTA{\delta''}$ then $\delta'' \leq \delta$, and that $\delta' \geq 1 \monus q(s)$.  Since $D(\pi) = \infty$ it follows that $\pi[q'] \ttransTA{1 \monus q(s)}$.
The facts that $D(\pi) = \infty$, $q' \in \musemTA{(\exists\, \TStop) \land (\forall\, \gamma'_1)}{\theta'}$, and $q' \not\in \musemTA{\exists_{\gamma'_1} \gamma_2}{\theta'}$ guarantee that there exists a $(\delta''',j) \in \indices_s$ such that $\delta \geq 1 \monus q'(z)$ and $\suffix{\pi}{\geq(\delta''',j)} \in X_{\TA}(S)$.  Consequently, $\pi \in \Pi'_{3,3}(S)$ in this case as well.
We have therefore established for any $\pi \in \executions{\TA,\infty}{q'}, \pi \in \Pi'_1 \cup \Pi'_2 \cup \Pi'_3(S)$,
thereby showing that $q' \in Q'_{\phi,S}$ and $S' \cup Q'_{\phi,S}$ when $S' \subseteq f'_S(S')$.
Consequently, $\nu f'_S \subseteq Q'_{\phi,S}$ and $f_{\phi}(S) \subseteq Q_{\phi,S}$.

To finish establishing the validity of Equation~\ref{eq:f-phi} we now show that $Q_{\phi,S} \subseteq f_\phi(S)$.  So assume that $q \in Q_{\phi,S}$; we must show that $q \in f_\phi(S)$, which in turn holds iff $q[z:=0] \in \musemTA{\gamma_Y}{\theta[X := S]}$.  It is immediate to see that $q[z:=0] \in Q'_{\phi,S}$, so if $Q'_{\phi,S} \subseteq \musemTA{\gamma_Y}{\theta[X := S]}$ then $q[z:=0] \in \musemTA{\gamma_Y}{\theta[X := S]}$, $q \in f_\phi(S)$, and we are done.
It suffices to show that $Q'_{\phi,S} \subseteq f_S(Q'_{\phi,S})$.  So fix $q' \in Q'_{\phi,S}$; we note that $q' \in f_S(Q'_{\phi,S})$ iff $q' \in \musemTA{\au(\gamma'_1,\gamma_2)}{\theta'}$, where $\theta' = (\theta[X := S])[Y := Q'_{\phi,S}]$.  Based on Lemma~\ref{lem:semantics-of-au}, it suffices to show that $q' \in \musemTA{\exists_{\gamma'_1} \gamma_2}{\theta'} \cup \musemTA{(\exists\, \TStop) \land (\forall\gamma'_1)}{\theta'}$.  To this end, assume $q' \not\in \musemTA{\exists_{\gamma'_1} \gamma_2}{\theta'}$; we must establish that $q' \in \musemTA{(\exists\, \TStop) \land (\forall\gamma'_1)}{\theta'}$, i.e.\/ that $q' \in \musemTA{\exists\, \TStop}{\theta'}$ and $q' \in \musemTA{\forall\gamma'_1}{\theta'}$.
For the former, suppose by way of contradiction that $q \not\in \musemTA{\exists\, \TStop}{\theta'}$.  This means there exists run $\pi \in \executions{\TA}{q'}$ consisting only of time-elapse transitions.  Since $q' \not\in \musemTA{\exists_{\gamma'_1} \gamma_2}{\theta'}$ it also follows that no state along $\pi$ is in $Q_2$.  From the definition of $Q'_{\phi,S}$ it can be seen that $\pi \not\in \Pi'_1 \cup \Pi'_2 \cup \Pi'_3(S)$, and thus $q' \not\in Q'_{\phi,S}$, which is a contradiction.  Consequently, $q' \in \musemTA{\exists\, \TStop}{\theta'}$.  To show that $q' \in \musemTA{\forall\gamma'_1}{\theta'}$ we again argue by contradiction.
So suppose to the contrary that $q' \not\in \musemTA{\forall\gamma'_1}{\theta'}$.  This means there exists $\delta' \in \delays$ such that $\delta'(q') \not \in Q'_1$.  Since $q' \not\in \musemTA{\exists_{\gamma'_1} \gamma_2}{\theta'}$ it also follows that for all $\delta'' < \delta'$, $\delta''(q) \not\in Q_2$.
Now consider any infinite execution $\pi \in \executions{\TA,\infty}{q'}$ beginning with transition $q' \ttransTA{\delta'} \delta'(q)$.  It can again be seen that $\pi \not\in \Pi'_1 \cup \Pi'_2 \cup \Pi'_3(S)$.  This would imply that $q' \in Q'_{\phi,S}$, which is the desired contradiction, and $q' \in \musemTA{\forall\gamma'_1}{\theta'}$.
Thus $q' \in \musemTA{\forall\gamma'_1}{\theta'}$; $q' \in \musemTA{\exists\, \TStop}{\theta'}$; $q' \in \musemTA{\au(\gamma'_1, \gamma_2)}{\theta'}$; $Q'_{\phi,S} \subseteq f'_S(Q'_{\phi,S})$; $Q'_{\phi,S} \subseteq \musemTA{\gamma_Y}{\theta[X := S]}$; $q[z:=0] \in \musemTA{\gamma_Y}{\theta[X := S]}$; $q \in f_\phi(S)$; and $Q_{\phi,S} \subseteq f_{\phi,S}$.
This finishes the proof of validity of Equation~\ref{eq:f-phi}.

We now finish the proof of this lemma by showing that $\tcsemTA{\tcA (\phi_1 \tcU \phi_2)} = \musemTAnotheta{\embed(\tcA (\phi_1 \tcU \phi_2))}$, i.e.\/ that $Q_\phi \subseteq \mu f_\phi$ and $\mu f_\phi \subseteq Q_\phi$.  For the former, it suffices to give a well-founded relation ${\prec} \subseteq Q_\phi \times Q_\phi$ such that $(Q_\phi, {\prec})$ is a support structure for $f_\phi$.  We use the same definition for $\prec$ given in the proof of Theorem~\ref{thm:divergent-ta-Lrelmunu-subsumes-TCTL} for the translation for $\tcA(\phi_1 \tcU \phi_2)$:  $q' \prec q$ iff there is a run $r \in \runs{\TA}{q}$, with $\pi = mup_\phi(r)$, and action-transition index $i \in \indices_\Sigma(\pi)$ such that $\tgt(\pi[i]) = q'$.  The same argument as before establishes that $\prec$ is well-founded.  We now must show that $(Q_\phi,{\prec})$ is a support structure for $f_\phi$.  So fix $q \in Q_\phi$, and define $S_q = \preimg{{\prec}}{q}$; we must show that $q \in f_\phi(S_q)$.  We begin by noting that $q \in f_\phi(S_q)$ iff $q[z:=0] \in \musemTA{\gamma_Y}{\theta[X := S_q]}$, and that $q[z:=0] \in Q'_{\phi,S_q}$ defined above.  The previous argument also established that $Q'_{\phi,S_q} \subseteq \musemTA{\gamma_Y}{\theta[X := S_q]}$, meaning $q[z:=0] \in \musemTA{\gamma_Y}{X := S_q}$ and $q \in f_\phi(S_q)$.  Thus $(Q_\phi, \prec)$ is a well-founded support structure for $f_\phi$, and $Q_\phi \subseteq \mu f_\phi$.

We now establish that $\mu f_\phi \subseteq Q_\phi$.  For this, it suffices to show that $f_\phi(Q_\phi) \subseteq Q_\phi$.  So suppose that $q \in f_\phi(Q_\phi)$; we must establish that $q \in Q_\phi$.  Based on Equation~\ref{eq:f-phi} we know that $\executions{\TA,\infty}{q} \subseteq \Pi_1 \cup \Pi_2 \cup \Pi_3(Q_\phi)$.
To demonstrate that $q \in Q_\phi$ we must show that every $r \in \runs{\TA}{q}$ is an element of $U_{\TA}(Q_{12}, Q_2)$.  Since $r \in \runs{\TA}{q}$ it follows that $r \in \executions{\TA,\infty}{q}$ and thus $r \in \Pi_1 \cup \Pi_2 \cup \Pi_3(Q_\phi)$.
Moreover, $D(r) = \infty$, so $r \not\in \Pi_2$ and $r \not\in \Pi_{3,2}$.  Thus $r \in \Pi_1 \cup (\Pi_{3,1} \cap \Pi_{3,3}(Q_\phi))$.  Suppose $r \in \Pi_1$.  This means there is $\delta < 1$ such that $\prefix{r}{\leq (\delta,i)} \in \untils{\TA}{Q_{12}'}{Q_2'}$.  Since $D(r) = \infty$ it follows that for every $(\delta,i) \in \indices_s(r)$ we have $r[\delta,i] \not\in Q_{TL}$.  This implies that $\prefix{r}{\leq (\delta,i)} \in \untils{\TA}{Q_{12}}{Q_2}$, whence $r \in U_{\TA}(Q_{12},Q_2)$.
Now suppose that $r \in \Pi_{3,1} \cap \Pi_{3,3}(Q_\phi)$.  From the definition of $\Pi_{3,1}$ it follows that for every $(\delta,i) \in \indices_s(r)$ such that $\delta < 1$, $r[\delta,i] \in Q'_1 \subseteq Q_1$.  From the definition of $\Pi_{3,3}(Q_\phi)$ we also know that either $\suffix{r}{\geq (1,0)} \in U_{\TA,E}(\Pi_{\TA}(Q'_{12}),\Pi_{\TA}(\Pi_{\TA}(Q'_2))$ or $\suffix{r}{\geq (1,0)} \in U_{\TA,E}(\Pi_{\TA}(Q'_{12}),X_{\TA}(Q_\phi))$.
In the former case, using the same reasoning as above, we know that $\suffix{r}{\geq (1,0)} \in U_{\TA,E}(\Pi_{\TA}(Q_{12}),\Pi_{\TA}(\Pi_{\TA}(Q_2))$, and this fact and the earlier observation about $r[\delta,i]$ when $\delta < 1$ implies that $r \in U_{\TA}(Q_{12},Q_2)$.
In the latter case, based on the definitions of $U_{\TA,E}$ and $X_{\TA}$, it also follows $r \in U_{\TA}(Q_{12},Q_2)$.  Thus $r \in U_{\TA}(Q_{12}, Q_2)$ and $q \in f_{\phi}(Q_\phi)$, thereby completing the proof.\qedhere
\end{proof}

We may now give the formal definition of $\embed(-)$ as follows.
\begin{definition}[Translation of $\TCTL$ to $\Lrelmunu$]
Let $\phi$ be a TCTL formula.  Then $\Lrelmunu$ formula $\embed(\phi)$ is defined inductively as follows, where $A \in \AP_\clocks$.
\begin{align*}
\embed(A)
    &= A
\\
\embed(\lnot \phi)
    &= \lnot\, \embed(\phi)
\\
\embed(\phi_1 \lor \phi_2)
    &= \embed(\phi_1) \lor \embed(\phi_2)
\\
\embed(\tcE(\phi_1 \tcU \phi_2))
    &=  \mu X . \exists_{\embed(\phi_1)}((\embed(\phi_2)\: \land\: \TDIV) \lor (\embed(\phi_1) \land \dia{\Sigma} X))
\\
\embed(\tcA (\phi_1 \tcU \phi_2))
    &=  \mu X . z . \nu Y .
    (
        \TDIV \implies \phi_1'
    ,\;
        \embed(\phi_2)
    )
\\
\phi_1'
    &=  \embed(\phi_1)
        \land
        (z \geq 1 \implies [\Sigma]X)
        \land
        (z < 1 \implies [\Sigma]Y)
\\
\embed(z.\phi)
    &= z.\embed(\phi)
\end{align*}
\end{definition}

We now state and prove the correctness of this translation.
\begin{theorem}\label{thm:Lrelmunu-subsumes-TCTL}
Let $\TA = (L, L_0, \CX, I, E, \Lab)$ be a timed automaton over time-safe $\Sigma$ and $\AP$.
Then for all TCTL formulas $\phi$,
$
\tcsemTA{\phi} = \musemTAnotheta{\embed(\phi)}.
$
\end{theorem}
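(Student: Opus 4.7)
The plan is to prove the theorem by structural induction on the TCTL formula $\phi$, relying heavily on the correctness lemmas already established for the nontrivial cases of the translation. The induction hypothesis will state that for every strict subformula $\phi'$ of $\phi$ and every timed automaton $\TA$, $\tcsemTA{\phi'} = \musemTAnotheta{\embed(\phi')}$; because $\embed(\phi')$ is closed, this equality does not depend on the choice of environment, and we may freely write $\musemTAnotheta{\embed(\phi')}$.

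I would discharge the easy cases first. For $\phi = A \in \AP_\clocks$, $\phi = \lnot \phi'$, and $\phi = \phi_1 \lor \phi_2$, the translation is the identity on the outer constructor, and the TCTL semantics of these operators agrees symbol-for-symbol with the $\Lrelmunu$ semantics of $A$, $\lnot$, and $\lor$ given in Definitions~\ref{def:Lrelmunu-semantics}. So the induction hypothesis transfers immediately. For the freeze quantifier case $\phi = z.\phi'$, the TCTL semantic clause and the $\Lrelmunu$ semantic clause for $z.$ are literally the same: both reset the freeze clock $z$ in the state and test the body. The induction hypothesis applied to $\phi'$ yields the result.

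The two remaining cases are the until modalities. For $\phi = \tcE(\phi_1 \tcU \phi_2)$, I would invoke Lemma~\ref{lem:correctness-of-EU}: the induction hypothesis supplies the two antecedents $\tcsemTA{\phi_1} = \musemTAnotheta{\embed(\phi_1)}$ and $\tcsemTA{\phi_2} = \musemTAnotheta{\embed(\phi_2)}$, so the lemma gives $\tcsemTA{\tcE(\phi_1 \tcU \phi_2)} = \musemTAnotheta{\embed(\tcE(\phi_1 \tcU \phi_2))}$ directly. Symmetrically, for $\phi = \tcA(\phi_1 \tcU \phi_2)$, the induction hypothesis combined with Lemma~\ref{lem:correctness-of-AU} gives $\tcsemTA{\tcA(\phi_1 \tcU \phi_2)} = \musemTAnotheta{\embed(\tcA(\phi_1 \tcU \phi_2))}$. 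Since the other derived TCTL operators ($\tcR$, $\tcF$, $\tcG$) are defined as syntactic abbreviations over the primitive ones, no additional case work is required.

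In short, the proof at the level of the theorem is essentially a mechanical induction; all the conceptual work has been pushed into the two preceding lemmas. The only genuine obstacle would be if one insisted on redoing the work inline: the $\tcA(\phi_1 \tcU \phi_2)$ case is the hardest because of the alternating fixpoint $\mu X. z. \nu Y.\, \ldots$ used to filter out Zeno runs, whose correctness requires the delicate characterization of $f_\phi(S)$ via the set families $\Pi_1$, $\Pi_2$, $\Pi_3(S)$ developed in the proof of Lemma~\ref{lem:correctness-of-AU}. Since that lemma is available, the theorem itself reduces to citing the two lemmas and routine compositional reasoning on the remaining constructors.
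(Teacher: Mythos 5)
Your proof is correct and follows exactly the same route as the paper's: structural induction on $\phi$, with the routine cases handled directly and the two until modalities discharged by citing Lemma~\ref{lem:correctness-of-EU} and Lemma~\ref{lem:correctness-of-AU}. Your write-up is in fact more explicit than the paper's three-sentence proof about why the easy cases and the derived operators require no extra work.
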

\begin{proof}
Proceeds by structural induction on $\phi$.  Most cases are straightforward.  The correctness of $\embed(\tcE(\phi_1 \tcU \phi_2))$ follows from Lemma~\ref{lem:correctness-of-EU}.  The correctness of $\embed(\tcA(\phi_1 \tcU \phi_2)$ is established in Lemma~\ref{lem:correctness-of-AU}.\qedhere
\end{proof}

\section{Conclusion and Directions for Future Work} \label{sec:conclusion}

This paper has presented a timed modal mu-calculus, $\Lrelmunu$, and shown that it is strictly more expressive than other timed mu-calculi given in the literature for the model of timed automata.  It is also strictly more expressive than the timed branching-time temporal logic TCTL over arbitary timed automa, in contrast with other timed modal mu-calculi.  $\Lrelmunu$ extends the traditional untimed modal mu-calculus with modalities for capturing the passage of time; these modalities have the flavor of well-known ``until" and ``release" modalities from classical temporal logic.  Model checking of $\Lrelmunu$ over timed automata is decidable and can be implemented via well-known region-graph constructions.

Regarding future work, it would be interesting to explore these expressiveness results in the more general setting of timed transition systems (TTSes).  Formalisms such as hybrid automata~\cite{alur-et-al-1995} also have a semantics in terms of TTSes; such expressiveness results would illustrate the power of $\Lrelmunu$ for reasoning about such systems as well.

\bibliographystyle{splncs04}
\bibliography{references}




\end{document}